\newtheorem{theorem}{Theorem}[section]
\newtheorem{lemma}[theorem]{Lemma}
\newtheorem{corollary}[theorem]{Corollary}
\newtheorem{proposition}[theorem]{Proposition}
\newtheorem{claim}{Claim}
\theoremstyle{definition}
\newtheorem{definition}[theorem]{Definition}
\theoremstyle{remark}
\newcommand\abs[1]{\lvert #1\rvert}
\newcommand{\Oh}{\mathcal{O}}
\newcommand{\flow}{\mathcal{P}}
\newcommand{\witnessflow}{\widehat{\flow}}
\newcommand{\block}{\mathbf{B}}
\newcommand{\grantthankstext}{This research is a part of a project that have received funding from the European Research Council (ERC) under the European Union's Horizon 2020 research and innovation programme
Grant Agreement 714704 (M. Pilipczuk). Eun Jung Kim is supported by the grant from French National Research Agency under JCJC program (ASSK: ANR-18-CE40-0025-01).}
\newcommand{\grantthanks}{\parbox[t]{0.78\linewidth}{\grantthankstext{}}\ \parbox[t]{0.18\linewidth}{~\\[-3mm]\includegraphics[height=30px]{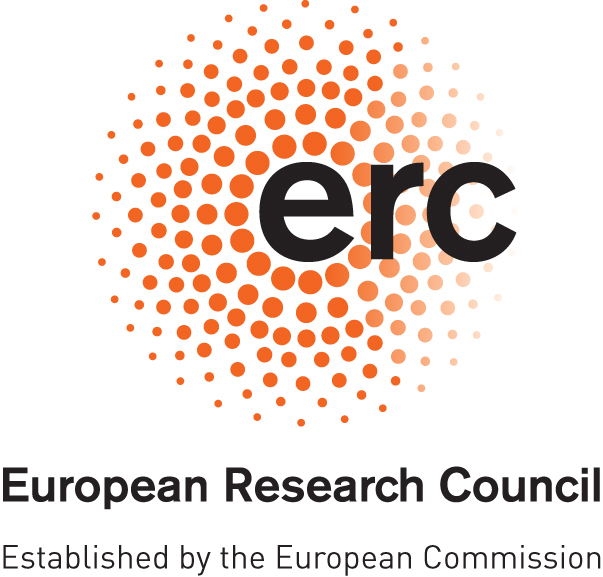}\ \includegraphics[height=30px]{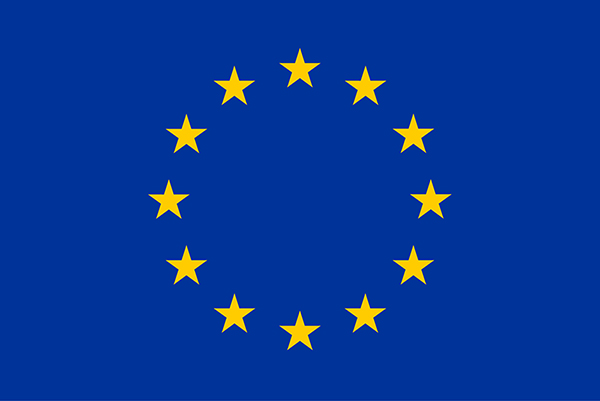}}}
\title{Flow-augmentation II: Undirected graphs%
\thanks{\grantthanks{}\\A preliminary version of this work was presented at SODA 2021~\cite{ufl-soda}.}}
\date{}
\author[1]{Eun Jung Kim}
\author[2]{Stefan Kratsch}
\author[3]{Marcin Pilipczuk}
\author[4]{Magnus Wahlstr\"{o}m}
\affil[1]{Universit\'{e} Paris-Dauphine, PSL Research University, CNRS, UMR 7243, LAMSADE, 75016, Paris, France.}
\affil[2]{Humboldt-Universit\"at zu Berlin, Germany}
\affil[3]{University of Warsaw, Warsaw, Poland}
\affil[4]{Royal Holloway, University of London, TW20 0EX, UK}
\begin{document}

\begin{titlepage}
\def\thepage{}
\thispagestyle{empty}

\maketitle

\begin{abstract}
We present an undirected version of the recently introduced \emph{flow-augmentation} technique:
Given an undirected multigraph $G$ with distinguished vertices $s,t \in V(G)$ and an integer $k$,
one can in randomized $k^{\Oh(1)} \cdot (|V(G)| + |E(G)|)$ time sample a set $A \subseteq \binom{V(G)}{2}$ such that the following holds: for every inclusion-wise minimal $st$-cut
$Z$ in $G$ of cardinality at most $k$, $Z$ becomes a \emph{minimum-cardinality} cut
between $s$ and $t$ in $G+A$ (i.e., in the multigraph $G$ with all edges of $A$ added)
  with probability $2^{-\Oh(k \log k)}$.

Compared to the version for directed graphs [STOC 2022], 
the version presented here has improved success probability ($2^{-\Oh(k \log k)}$ instead of $2^{-\Oh(k^4 \log k)}$),
linear dependency on the graph size in the running time bound,
and an arguably simpler proof. 

An immediate corollary is that the \textsc{Bi-objective $st$-Cut} problem
can be solved in randomized FPT time $2^{\Oh(k \log k)} (|V(G)|+|E(G)|)$ on undirected graphs.
\end{abstract}

\end{titlepage}

\section{Introduction}

Fixed-parameter tractable algorithms for graph separation problems has been an important
question in parameterized complexity, and after more than a decade of intense study
it would seem that we should by now know of all the major techniques necessary for
the design of such algorithms.  Certainly, there is an impressive toolbox, leading to
the resolution of central problems such as FPT algorithms for 
\textsc{Multicut}~\cite{MarxR14,BousquetDT18}
and \textsc{Minimum Bisection}~\cite{CyganLPPS19}.

Yet despite this progress, several open problems remained until very recently.
Many of these relate to directed graph cuts, such as the
existence of FPT algorithms for the notorious \textsc{$\ell$-Chain SAT}
problem identified by Chitnis et al.~\cite{ChitnisEM17},
weighted variants of classic problems such as \textsc{Directed Feedback Vertex Set},
or the deceptively simple-looking problem of \textsc{Bi-objective $(s,t)$-cut}~\cite{KratschLMPW20}.
In the last problem, the input is a digraph  $D=(V,A)$ with arc weights $w$ and $s, t \in V$, and
two budgets $k, W$, and the task is to find an $(s,t)$-cut $Z \subseteq A$ such that
$|Z| \leq k$ and $w(Z) \leq W$.  Despite the simplicity of the problem,
the existence of an FPT algorithm was open for a long time.

This paper is a second one in a series that introduces a new algorithmic technique
of \emph{flow-augmentation} and explores its applications. 
The first part~\cite{dfl-arxiv} introduced the technique in full generality in directed
graphs and applied it to show fixed-parameter tractability of 
\textsc{$\ell$-Chain SAT} and weighted \textsc{Directed Feedback Vertex Set}. 
The third one~\cite{csp-arxiv} uses the technique to show two new tractability isles
in the area of parameterized algorithms for
Constraint Satisfaction Problems, parameterized by the number of unsatisfied clauses,
and completes a complexity dichotomy in the Boolean domain. 
The main goal of this part is to show a counterpart in undirected graphs
that is simpler and has improved guarantees as compared to the (more general) directed
version of part one~\cite{dfl-arxiv}.

To state formally the main result, we some notation.
Consider an undirected graph $G=(V,E)$ with two vertices $s, t \in V$,
and an unknown $(s,t)$-cut $Z$. Furthermore, let $Z_{s,t} \subseteq Z$
be those edges with one endpoint reachable from  $s$ and the other reachable from $t$ in $G-Z$.
We say that $Z$ is a \emph{special $(s,t)$-cut} if $Z_{s,t}$ is an $(s,t)$-cut,
and \emph{eligible for $(s,t)$}  if additionally every edge of $Z$ has its endpoints
in different connected components of $G-Z$.
In particular, any minimal, not necessarily minimum $(s,t)$-cut is
eligible for $(s,t)$.
Let $k=|Z|$, $\lambda^\ast=|Z_{s,t}|$, and let $\lambda_G(s,t) \leq \lambda^\ast$
be the value of an $(s,t)$-max flow in $G$.  
We show the following 
(reformulated slightly from the more formal version
 in Section~\ref{section:specialcutsaugmentation}).

\begin{theorem} \label{theorem-intro:flow-augmentation}
  There is a randomized algorithm that, given an
  undirected graph $G=(V,E)$ with $s, t \in V$ and two
  integers $k \geq \lambda^\ast \geq \lambda_G(s,t)$,
  in time $k^{\Oh(1)}(|V|+|E|)$
  outputs an edge multiset $A$ with $\lambda_{G+A}(s,t) \geq \lambda^*$
  and a flow $\witnessflow$ in $G+A$ of cardinality $\lambda^\ast$,
  such that for any $(s,t)$-cut $Z$ in $G$ eligible for $(s,t)$ with $|Z|=k$ and $|Z_{s,t}|=\lambda^*$, 
  with probability $2^{-\Oh(k \log k)}$, the following holds:
  for every $uv \in A$, $u$ and $v$ are connected in $G-Z$; and 
  for every path $P \in \witnessflow$, $|E(P) \cap Z|=1$.
\end{theorem}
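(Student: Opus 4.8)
The plan is to turn $Z_{s,t}$ into a minimum $(s,t)$-cut of the augmented graph by repeatedly adding a single edge \emph{inside a connected component of $G-Z$}, where the new edge is chosen, by guessing $\Oh(\log k)$ bits, from a set of $\mathrm{poly}(k)$ candidates. Two elementary facts drive everything. First, if an $(s,t)$-path $P$ has $|E(P)\cap Z|=1$, say $E(P)\cap Z=\{xy\}$, then both subpaths of $P$ avoid $Z$, so $x$ is reachable from $s$ and $y$ from $t$ in $G-Z$, i.e.\ $xy\in Z_{s,t}$; hence a flow of value $|Z_{s,t}|=\lambda^{*}$ all of whose $\lambda^{*}$ (edge-disjoint) paths cross $Z$ exactly once must saturate $Z_{s,t}$, so $Z_{s,t}$ is a minimum cut and $\lambda_{G+A}(s,t)=\lambda^{*}$. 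Thus the two conclusions of the theorem are exactly the property we want to establish. Second, every component of $G-Z$ lies inside a single component of $G-Z_{s,t}$ (since $Z_{s,t}\subseteq Z$), so an edge with both endpoints in one component of $G-Z$ never crosses the cut $Z_{s,t}$; therefore, as long as every added edge stays inside a component of $G-Z$, the cut $Z$ stays eligible for $(s,t)$, $Z_{s,t}$ stays an $(s,t)$-cut of size $\lambda^{*}$, and the max-flow value of the current graph never exceeds $\lambda^{*}$.

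The algorithm has two phases. In \emph{Phase 1}, run while $\lambda_{G}(s,t)<\lambda^{*}$: compute a maximum flow $\cP$, let $R$ be the vertices reachable from $s$ and $R'$ the vertices that can reach $t$ in the residual graph (so $s\in R$, $t\in R'$, $R\cap R'=\emptyset$, and adding any edge $uv$ with $u\in R$, $v\in R'$ raises the flow by one). Each candidate is a flow-raising edge, and the structural claim---the crux---is that from $\cP$ (together with a derived auxiliary \emph{flow-tree} supported on $\mathrm{poly}(k)$ distinguished vertices) one can compute a $\mathrm{poly}(k)$-size set of flow-raising edges that contains at least one edge whose endpoints lie in a common component of $G-Z$. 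The algorithm adds a uniformly random candidate; with probability $k^{-\Oh(1)}$ this edge stays inside a component of $G-Z$, and in any case the flow grows, so Phase 1 stops after at most $\lambda^{*}-\lambda_{G}(s,t)\le k$ steps (and, by always keeping $st$ among the candidates, $\lambda_{G+A}(s,t)\ge\lambda^{*}$ holds unconditionally). In \emph{Phase 2}, with $\lambda_{G}(s,t)=\lambda^{*}$, fix a maximum flow $\cP$ minimizing $\sum_{P\in\cP}|E(P)\cap Z|$; this sum is at most $|Z|=k$ and equals $\lambda^{*}$ precisely when every path crosses $Z$ once. If it is larger, some $P\in\cP$ crosses $Z$ at least twice, so the walk it induces in the component multigraph of $G-Z$ has length $m\ge 2$; since $P$ must cross the $(s,t)$-cut $Z_{s,t}$ and every $Z_{s,t}$-edge joins $R_s$ to $R_t$, that walk contains an $R_s$--$R_t$ step, and a short case analysis shows that with $m\ge 2$ the walk must revisit $R_s$ or $R_t$. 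Hence $P$ leaves some $R_i\in\{R_s,R_t\}$ at a vertex $a$ and later re-enters $R_i$ at a vertex $b$; adding the (non-$Z$-crossing) edge $ab$ and rerouting $P$ through it yields a maximum flow with strictly fewer $Z$-crossings overall. As before, a $\mathrm{poly}(k)$-size candidate set (of within-a-side edges) is shown to contain a useful such shortcut, and Phase 2 is run for at most $k$ steps. Over both phases there are $\Oh(k)$ random choices, each good with probability $k^{-\Oh(1)}$, so all are good with probability at least $k^{-\Oh(k)}=2^{-\Oh(k\log k)}$; on that event no added edge crosses $Z$ and the output flow $\cP$, of cardinality $\lambda^{*}$, satisfies $|E(P)\cap Z|=1$ for every $P\in\cP$. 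Each step recomputes a maximum flow (value $\le k$) and the candidate structure, for a total running time of $k^{\Oh(1)}(|V|+|E|)$.

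The hardest part, which I expect to dominate the work, is the structural claim and its Phase-2 analogue: not just that a flow-improving, within-a-component edge exists, but that one can be isolated inside a $\mathrm{poly}(k)$-size candidate set computed from $G$, $s$, $t$, $\cP$ alone---the algorithm never knows $Z$. This forces one to develop the right auxiliary object: a \emph{flow-tree} recording, for the current maximum flow, how the flow paths, residual reachability, and the potential entry points of an eligible cut interact, contracted to $\mathrm{poly}(k)$ nodes, with new edges placed between its distinguished vertices. One must arrange simultaneously that (i) a good guess always produces an edge inside a component of $G-Z$ that strictly improves the progress measure (flow value in Phase 1, total $Z$-crossings in Phase 2), (ii) the candidate set has size $\mathrm{poly}(k)$, and (iii) the invariants---eligibility of $Z$, $|Z|=k$, $|Z_{s,t}|=\lambda^{*}$, and monotonicity of the flow value---survive each step. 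The remaining pieces (the two elementary facts above, the component-walk case analysis, and the probability and running-time accounting) are routine by comparison.
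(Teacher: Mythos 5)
Your Phase-1 strategy rests on a structural claim that is simply false: in general there need not exist \emph{any} single flow-raising edge whose endpoints lie in a common component of $G-Z$. Take $G$ on $\{s,a,b,c,d,t\}$ with edges $\{sa,\,ab,\,bt,\,ac,\,cd,\,db\}$ (the path $s\text{-}a\text{-}b\text{-}t$ plus a detour $a\text{-}c\text{-}d\text{-}b$), so $\lambda_G(s,t)=1$, and let $Z=\{ab,cd\}$. Then $G-Z$ has components $\{s,a,c\}$ and $\{b,d,t\}$, so $Z_{s,t}=Z$, $Z$ is eligible, $|Z|=k=2$, $|Z_{s,t}|=\lambda^\ast=2>1$. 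But the residual reachability sets for the unique unit of flow are $R=\{s\}$ and $R'=\{t\}$, so the \emph{only} edge that raises the flow by one is $st$ itself, and $s,t$ lie in different components of $G-Z$. Hence there is no candidate set of the kind you posit, and the success probability in this step is $0$, not $k^{-\Oh(1)}$. To make progress here one must add \emph{several} edges inside components of $G-Z$ at once — e.g. extra copies of both $sa$ and $bt$ — before the flow can increase while staying compatible with $Z$. Lemma~\ref{lemma:allexist} only guarantees the existence of a flow-augmenting \emph{set}, not a single augmenting edge, and your argument silently promotes one to the other.

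This is why the paper's proof is organized around a decomposition into blocks and bundles rather than a one-edge-at-a-time iteration. The non-crossing chain of closest minimum cuts $C_0,\dots,C_p$ yields blocks and bundles; a color-coding step isolates an $\Oh(k)$-length stretch of bundles where $Z$ lives; the rest of the graph is collapsed onto $s$ and $t$ (this is where many compatible edges are added ``for free'' because the corresponding bundles are unaffected); and the recursion then distinguishes a connected single bundle (add all of $\delta(s)\cup\delta(t)$ at once, forcing the flow up because a connected bundle contains no min cut avoiding $\delta(s)\cup\delta(t)$) from a disconnected bundle (split $k$ across components) from a multi-bundle stretch (guess a labelling of an inner cut and recurse on the two sides). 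The potential function is a joint function of $k$ and $\lambda_G(s,t)$, and progress comes either from $k$ dropping or $\lambda$ increasing — never from a single-edge flow increment of the kind you describe. Your ``flow-tree'' is gestured at but not constructed, and as the counterexample shows, no such local object can rescue the per-edge strategy; the bulk moves in the paper (clique-ing interfaces, contracting stretches, adding $\delta(s)\cup\delta(t)$) are essential, not an implementation detail. Phase~2 likewise hand-waves a candidate set for rerouting; that part may or may not be salvageable, but it is moot because Phase~1 already fails.
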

In particular, in any successful run, in $G+A$ the paths $\witnessflow$ will be an $(s,t)$ max-flow,
and $Z_{s,t}$ will be an $(s,t)$-min cut.

A quick comparison of 
Theorem~\ref{theorem-intro:flow-augmentation} with the directed version of~\cite{dfl-arxiv} is in order.
\begin{itemize}
\item There is a better success probability bound: $2^{-\Oh(k \log k)}$ instead of $2^{-\Oh(k^4 \log k)}$.
\item There is an explicit linear dependency on the graph size in the running time bound,
  instead of just a polynomial of unspeficied degree of~\cite{dfl-arxiv}.
\item The notion of an eligible cut is a bit more general than 
the natural casting of the notion of \emph{star $st$-cut} of~\cite{dfl-arxiv}
to undirected graphs (it allows some part of $Z \setminus Z_{s,t}$ to separate a bunch of vertices from $t$, even though these vertices are already separated from $s$ by $Z_{s,t}$).
\item The algorithm and the proof is arguably simpler than the one of~\cite{dfl-arxiv}
(albeit it involves a good amount of tedious calculations in the probability analysis
 to reach the $2^{-\Oh(k \log k)}$ bound).
\item While in directed graphs we provided a deterministic counterpart
with the expected $2^{\Oh(k^4 \log k)}$ parametric factor in the running time bound,
we do not present an analogous result here.
All random steps in the presented algorithm can be replaced by with branching or
standard derandomization tools for color-coding,
so obtaining some deterministic counterpart is definitely possible.
However, to achieve $2^{-\Oh(k \log k)}$ success probability
we needed to carefully optimise probability distributions in a few places
and it is not clear to us that the standard derandomization would match the desired $2^{\Oh(k \log k)}$ parametric factor in the running time bound.
Furthermore, the determinization tools for color-coding steps will introduce a number of $\Oh(\log n)$ factors in the running time analysis, 
  turning the linear dependency on the graph size into a near-linear one $(|V|+|E|)^{1+o(1)}$. 
We remark also that for any complexity classification results (such those in~\cite{csp-arxiv}), the deterministic 
version of the more general directed case of~\cite{dfl-arxiv} is sufficient. 
Finally, presenting the deterministic counterpart along the randomized proof would significantly cloud the picture. 
\end{itemize}

Recall the \textsc{Bi-objective $(s,t)$-Cut} problem. 
Papadimitriou and Yannakakis showed that this is strongly NP-hard,
even for undirected graphs, and also showed partial approximation
hardness~\cite{PapadimitriouY01}. The directed version, with $\ell
\geq 2$ distinct budgets, was recently considered from a parameterized
perspective by Kratsch et al.~\cite{KratschLMPW20}, who showed
that the problem is FPT if all budgets are included in the parameter, 
but W[1]-hard if at least two budgets $k_i$ are not included in the
parameter. The case of a single budget not being included in the
parameter, which includes the \textsc{Bi-objective $(s,t)$-Cut} problem parameterized by $k$,
 has been open prior to our work (in directed graphs).

If $k$ equals the minimum cardinality of an $(s,t)$-cut, the problem can be easily
solved via any polynomial-time minimum cut algorithm: set the capacity of every edge to be a large
number (much larger than any weight of an edge) plus the weight of an edge and ask for a minimum capacity cut. 
Hence, flow-augmentation yields a simple randomized FPT algorithm:
We prepend the step above with flow augmentation (Theorem~\ref{theorem-intro:flow-augmentation}
    in undirected graphs and the version of~\cite{dfl-arxiv} in directed graphs),  
with newly added edges assigned prohibitively large weights.
For undirected graphs, this gives the following corollary.
\begin{corollary}\label{cor:bi-cut}
\textsc{Bi-objective $(s,t)$-Cut} in undirected graphs can be solved in randomized FPT time
$2^{\Oh(k \log k)} \cdot (|V(G)|+|E(G)|)$.
\end{corollary}
We remark that although it is a quite standard exercise to provide an FPT algorithm
for \textsc{Bi-objective $(s,t)$-Cut} in \emph{undirected graphs} within the framework
of randomized contractions~\cite{ChitnisCHPP16},
and recent improvements would also give $2^{\Oh(k \log k)}$ parametric factor~\cite{CyganKLPPSW21},
these techniques do not give any explicit bound on the polynomial factor in the running time bound,
not to mention guaranteeing a linear one.

\section{Preliminaries}\label{section:preliminaries}

In this work we consider only (finite) undirected \emph{multi-graphs without loops}. In particular, different edges connecting the same pair of vertices are considered to be identifiable and non-interchangeable.\footnote{This generality seems necessary to cover a largest set of applications. Multiple copies of the same edge in $G$ might arise in the reduction of some problem to an appropriate cut problem. The different copies may have wildly different behavior regarding contribution to solution cost. Our goal will be to ensure that all solutions of a certain cardinality in terms of cut size have a good probability of being preserved, thereby remaining oblivious to many unnecessary details of the application.} Formally, a \emph{multi-graph} could be captured as $G=(V,E,\pi)$ where $V$ and $E$ are finite sets and $\pi\colon E\to\binom{V}{2}$ assigns each edge in $E$ an unordered pair of endpoints. To keep notation within reason, we will treat multi-graphs as pairs $G=(V,E)$ where $V$ is a finite set and $E$ is a multi-subset of $\binom{V}{2}$ but understanding that cuts $X$ (to be defined in a moment) could involve deleting particular (identifiable) copies of virtually the same edge $uv$.
For a multi-graph $G$ and $A$ a multi-set of edges on $V$, the graphs $G+A$ and $G-A$ are accordingly understood as starting from $G$ and, respectively, adding all edges in $A$ that are not yet in $G$ or removing from $G$ all edges that are also in $A$; again, note that this may include different edges with the same two endpoints.
For a vertex set $S$, we denote by $\delta(S)$ the multi-set of edges that have precisely one endpoint in $S$, and by $\partial(S)$ the set of vertices in $S$ that are incident with at least one edge in $\delta(S)$.
By a \emph{connected component} we mean a maximal set $S\subseteq V$ that induces a connected subgraph of $G$.
In all other aspects we follow standard graph notation as set out by Diestel~\cite{Diestel_book}.

Throughout this paragraph let $G=(V,E)$ be an arbitrary multi-graph, let $S,T\subseteq V$, and let $X\subseteq E$. Define $R_S(X)$ as the set of vertices that are reachable from any vertex in $S$ in $G-X$. The set $X$ is an \emph{$(S,T)$-cut} if $R_S(X)\cap R_T(X)=\emptyset$; note that no such cut exists if $S\cap T\neq\emptyset$. A \emph{minimum $(S,T)$-cut} is any $(S,T)$-cut of minimum possible cardinality; whereas $X$ is a \emph{minimal $(S,T)$-cut} if no proper subset of $X$ is an $(S,T)$-cut. (We will crucially need both minimum and minimal cuts.) By the well-known duality of cuts and flows in graphs (Menger's theorem suffices here) we get that the cardinality of any minimum $(S,T)$-cut is equal to the maximum number of edge-disjoint paths from $S$ to $T$ in $G$ or, equivalently, to the maximum unit-capacity $(S,T)$-flow. 
By $\lambda_G(S,T)$ we denote the maximum flow from $S$ to $T$ or, equivalently, the minimum size of an $(S,T)$-cut in $G$; we omit the subscript $G$ when it is clear from context. We mostly apply these notions for the special cases of $S=\{s\}$ and $T=\{t\}$ and then write, e.g., $(s,t)$-cut rather than $(\{s\},\{t\})$-cut for succinctness. In particular, we write $\lambda_G(s,t)$ rather than $\lambda_G(\{s\},\{t\})$ and, when $G$, $s$, and $t$ are understood, we usually abbreviate this to $\lambda$.
We say that an $(S,T)$-cut $X$ is \emph{closest to $S$} if for every other $(S,T)$-cut $X'$ with $R_S(X')\subseteq R_S(X)$ we have $|X'|>|X|$. Clearly, if $X$ is an $(S,T)$-cut closest to $S$ then $X$ must in particular be minimal.

Let us recall two useful facts about edge cuts in graphs.

\begin{proposition}\label{proposition:reachable}
Let $X$ be a minimal $(S,T)$-cut. Then $X=\delta(R_S(X))=\delta(R_T(X))$.
\end{proposition}

\begin{proof}
By definition of $R_S(X)$ we must have $\delta(R_S(X))\subseteq X$. As $X$ is an $(S,T)$-cut, we have $T\cap R_S(X)=\emptyset$ and, thus, $\delta(R_S(X))$ is also an $(S,T)$-cut. Minimality of $X$ now implies that $X=\delta(R_S(X))$; the other equation works symmetrically.
\end{proof}

\begin{proposition}\label{proposition:uniqueclosestmincut}
There is a unique minimum $(S,T)$-cut that is closest to $S$.
\end{proposition}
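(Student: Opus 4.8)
The plan is to prove uniqueness via the standard submodularity argument for cut functions, exploiting the fact that $\delta$ (more precisely $|\delta(\cdot)|$) is a submodular function on vertex subsets. First I would recall the submodularity inequality: for any $S, T \subseteq V$ and any $A, B \subseteq V$ containing $S$ and avoiding $T$ (or more simply for arbitrary $A,B$), we have $|\delta(A)| + |\delta(B)| \geq |\delta(A\cap B)| + |\delta(A\cup B)|$. This is a routine counting argument: each edge contributes at least as much to the right-hand side as to the left-hand side, checked by case analysis on how its two endpoints distribute among $A\setminus B$, $B\setminus A$, $A\cap B$, and $V\setminus(A\cup B)$. I would state this as the key tool, perhaps citing it as well-known, and not belabor the endpoint case check.

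Next, for the existence-and-uniqueness argument proper: let $\mu = \lambda_G(S,T)$ be the minimum $(S,T)$-cut size. Among all minimum $(S,T)$-cuts, I want the one whose $R_S(\cdot)$ is inclusion-wise smallest, and I must show it is well-defined. Suppose $X_1$ and $X_2$ are both minimum $(S,T)$-cuts. By Proposition~\ref{proposition:reachable} (applied after noting a minimum cut is minimal) we may assume $X_i = \delta(R_i)$ where $R_i = R_S(X_i)$, so $|\delta(R_i)| = \mu$ for $i=1,2$. Both $R_1$ and $R_2$ contain $S$ and are disjoint from $T$. Apply submodularity to $R_1$ and $R_2$: since $R_1 \cap R_2 \supseteq S$ and $R_1 \cap R_2$ is disjoint from $T$, the set $\delta(R_1 \cap R_2)$ is an $(S,T)$-cut, so $|\delta(R_1\cap R_2)| \geq \mu$; similarly $|\delta(R_1 \cup R_2)| \geq \mu$ (as $R_1\cup R_2$ still avoids $T$). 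Combined with $|\delta(R_1)| + |\delta(R_2)| = 2\mu \geq |\delta(R_1\cap R_2)| + |\delta(R_1\cup R_2)|$, we get $|\delta(R_1\cap R_2)| = |\delta(R_1\cup R_2)| = \mu$. Hence $\delta(R_1\cap R_2)$ is itself a minimum $(S,T)$-cut. This shows the family of sets $R_S(X)$ over minimum cuts $X$ is closed under intersection, so it has a unique minimal element $R^\ast := \bigcap_X R_S(X)$ (the intersection is over finitely many sets, all giving minimum cuts, and stays a minimum cut at each step), and $X^\ast := \delta(R^\ast)$ is the desired closest cut.

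Finally I would check this $X^\ast$ matches the definition of ``closest to $S$'': if $X'$ is any $(S,T)$-cut with $R_S(X') \subseteq R_S(X^\ast) = R^\ast$, then either $X'$ is minimum — in which case $R^\ast \subseteq R_S(X')$ by construction of $R^\ast$ as the intersection, forcing $R_S(X') = R^\ast$ and $|X'| = \mu$, which is not $> |X^\ast|$, so actually this case needs the tie broken; re-reading the definition, ``closest to $S$'' requires $|X'| > |X|$ for every \emph{other} cut with $R_S(X')\subseteq R_S(X)$, so I must ensure no distinct cut $X'$ achieves equality. Here the cleanest route: if $R_S(X') \subseteq R^\ast$ and $X'$ is an $(S,T)$-cut then $\delta(R_S(X'))$ is an $(S,T)$-cut of size $\le |X'|$; if $|X'| = \mu$ then $R_S(X')$ belongs to our intersection family so $R^\ast \subseteq R_S(X')$, giving $R_S(X') = R^\ast$ and then $X' \supseteq \delta(R^\ast) = X^\ast$ with $|X'|=|X^\ast|$ forcing $X' = X^\ast$; if $|X'| > \mu = |X^\ast|$ we are done directly. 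Uniqueness of the closest cut then follows: any closest cut $Y$ must be minimum (else shrink via $\delta(R_S(Y))$, contradiction), so $R^\ast \subseteq R_S(Y)$, and if the inclusion were strict then $X^\ast$ would witness a violation of $Y$'s closeness, so $R_S(Y) = R^\ast$ and $Y = \delta(R^\ast) = X^\ast$. The main obstacle, and the only genuinely delicate point, is bookkeeping the distinction between ``minimal'' and ``minimum'' and making sure the closeness definition (which quantifies over cuts of possibly larger size too) is handled — the submodularity core is entirely standard.
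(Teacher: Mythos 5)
Your proof is correct and uses the same core tool as the paper --- submodularity of the cut function, i.e., uncrossing minimum cuts --- so there is no genuinely different route here. The only stylistic difference is framing: the paper argues by contradiction (suppose two distinct closest cuts, uncross them to obtain a strictly smaller reachable set, contradicting closeness), while you constructively build the unique closest cut as $\delta(R^\ast)$ with $R^\ast$ the intersection of all reachable sets of minimum cuts, which has the mild advantage of also treating existence explicitly. The one step you state without justification is that the family $\{R_S(X) : X \text{ a minimum cut}\}$ is literally closed under intersection, i.e., that $R_S(\delta(R_1\cap R_2)) = R_1\cap R_2$; this is a standard lattice fact and in any event your argument would go through with the weaker (and immediate) statement that $R_S(\delta(R_1\cap R_2))$ is \emph{some} member of the family contained in $R_1\cap R_2$, which already yields a unique minimal element by finiteness.
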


\begin{proof}
We use the well-known fact that the cut function $f\colon 2^V\to\mathbb{N}\colon Z\mapsto\abs{\delta(Z)}$ is submodular. Suppose that there are two different minimum $(S,T)$-cuts $X$ and $Y$ that are both closest to $S$. We must have $R_S(X)\neq R_S(Y)$ or else $X=\delta(R_S(X))=\delta(R_S(Y))=Y$ by Proposition~\ref{proposition:reachable} as $X$ and $Y$ must be minimal $(S,T)$-cuts. Using submodularity of $f$ for the sets $R_S(X)$ and $R_S(Y)$ we get
\begin{align}
\abs{\delta(R_S(X))} + \abs{\delta(R_S(Y))} \geq \abs{\delta(R_S(X)\cap R_S(Y))} +\abs{\delta(R_S(X)\cup  R_S(Y))}.\label{math:submod}
\end{align}
Clearly, both $\delta(R_S(X)\cap R_S(Y))$ and $\delta(R_S(X)\cup  R_S(Y))$ are $(S,T)$-cuts and by (\ref{math:submod}) they must both be minimum $(S,T)$-cuts. Now, however, because $R_S(X)\neq R_S(Y)$ we must have $R_S(X)\cap R_S(Y)\subsetneq R_S(X)$ or $R_S(X)\cap R_S(Y)\subsetneq R_S(Y)$ contradicting the assumption that $X$ and $Y$ are both closest to $S$.
\end{proof}

The simple argument used in the proof of Proposition~\ref{proposition:uniqueclosestmincut} is called the \emph{uncrossing of minimum cuts}: From the minimum cuts $X=\delta(R_S(X))$ and $Y=\delta(R_S(Y))$ we obtain minimum cuts $\delta(R_S(X)\cap R_S(Y))$ and $\delta(R_S(X)\cup R_S(Y))$. While the reachable sets $R_S(X)$ and $R_S(Y)$ are in general incomparable, it is clear that $R_S(X)\cap R_S(Y)\subseteq R_S(X)\cup R_S(Y)$, and equality can only hold if $\delta(R_S(X))=\delta(R_S(Y))$.

\section{Undirected flow-augmentation}%
\label{section:specialcutsaugmentation}
\newcommand{\probfun}{g}
\newcommand \probpair {\textsc{Cutting edge-pair rechability}}
\newcommand \probCSPnegcut {\textsc{Almost $\Pi$-CSP}}
\newcommand \probsample {\textsc{flow-augmentation sampling}\xspace}
\newcommand \algosample{\textsc{Sample}\xspace}
\newcommand \algoshort{\textsc{Short-separation}\xspace}
\newcommand \algoshortone{\textsc{Short-separation-single}\xspace}
\newcommand \paths{\textsc{Witnessing-flow}\xspace}
\newcommand \pathshort{\textsc{Witnessing-inner}\xspace}
\newcommand\lefti{{\sf left}}
\newcommand\righti{{\sf right}}
\newcommand\leftcut{{\sf leftcut}}
\newcommand\rightcut{{\sf rightcut}}
\newcommand{\W}{\mathcal{W}}

\paragraph{Special cuts, eligible cuts, compatibility, and flow augmentation.}
Let $G=(V,E)$ be a connected, undirected multi-graph, and let vertices $s,t\in V$. 
For $Z\subseteq E$, let $Z_{s,t}\subseteq Z$ be the set of edges with one endpoint in $R_s(Z)$ and one endpoint in $R_t(Z)$.

The following notions are crucial for this section.
\begin{definition}[special cut]
We say that an $(s,t)$-cut $Z$ is \emph{special} if
$Z_{s,t}$ is an $(s,t)$-cut.
That is, the set of edges $Z_{s,t}\subseteq Z$ with one endpoint in $R_s(Z)$ and one endpoint in $R_t(Z)$ is also an $(s,t)$-cut.
\end{definition}
Note that special $(s,t)$-cuts generalize minimal $(s,t)$-cuts. 

In this section, we focus on solutions that are special $(s,t)$-cuts with an additional technical property.
\begin{definition}[eligible cut]
We say that an $(s,t)$-cut $Z$ is \emph{eligible for $(s,t)$} if
\newcounter{eligibleprops}
 \begin{enumerate}
 \item $Z$ is special, and\label{probsample:special}
  \item each edge of $Z$ has its endpoints in different connected components of $G-Z$. \label{probsample:minimal}
  \setcounter{eligibleprops}{\value{enumi}}
 \end{enumerate}
For an integer $\lambda^*$, we say that an $(s,t)$-cut $Z$ is $\lambda^*$-eligible if $Z$ is eligible and 
additionally $|Z_{s,t}| = \lambda^*$.
\end{definition}

The next two definitions formalize two properties we want from a set of edges that we add to the graph: (i) it does not break the solution, and (ii) it increases the flow from $s$ to $t$.
\begin{definition}[compatible set]
A multi-subset $A$ of $\binom{V}{2}$ is \emph{compatible} with a set $Z \subseteq E$ if for every $uv \in A$, $u$ and $v$ are connected in $G-Z$.
\end{definition}
\begin{definition}[flow-augmenting set]
For an integer $\lambda^\ast \geq \lambda_G(s,t)$, a multi-subset $A$ of $\binom{V}{2}$ is \emph{$\lambda^\ast$-flow-augmenting}
if $\lambda_{G+A}(s,t) \geq \lambda^\ast$.
\end{definition}

Intuitively, the role of $Z$ will be played by an unknown solution to the cut problem in question and compatibility of $A$ with $Z$ means that $A$ cannot add connectivity that was removed by $Z$ (or that was not present in the first place). The challenge is to find a flow-augmenting set that with good probability is consistent with at least one solution $Z$, without knowing $Z$ beforehand.

It will be convenient to take edges in $A$ as being \emph{undeletable} or, equivalently, as unbounded (or infinite) capacity. Clearly, if $A$ is flow-augmenting and compatible with an (eligible) set $Z$ then $A$ remains flow-augmenting and compatible with $Z$ after adding an arbitrary number of copies of any edges in $A$. In particular,
having a total of $k+1$ copies of every edge in $A$ will make those edges effectively undeletable for sets $Z$ of size $k$, that is, the endpoints of any edge in $A$ cannot be separated by $Z$. Note that for applications, since edges in $A$ are in addition to the original input, one will usually not be interested in deleting edges of $A$ anyway (and costs may not be defined),
and they only help to increase the flow to match an (unknown) solution. For the purpose of flow and path packings, edges in $A$ may, accordingly, be shared by any number of (flow) paths, fully equivalent to simply having $k+1$ copies of each edge.

\paragraph{Witnessing flow.}
Similarly as in the directed case, in addition to returning a flow-augmenting set, we will also attempt to return an $(s,t)$-max flow in the augmented graph which intersects $Z_{s,t}$ in a particularly structured way.

In the following, let $G$ be a connected graph with $s, t \in V(G)$, and let $Z$ be an $(s,t)$-cut in $G$ which contains an $(s,t)$-min cut. 
A \emph{witnessing $(s,t)$-flow for $Z$} in $G$ is an $(s,t)$-max flow $\witnessflow$ in $G$ such that every edge of $Z_{s,t}$ occurs on a path of $\witnessflow$, and every path of $\witnessflow$ intersects $Z$  in precisely one edge. 

We make a few observations.  First, since $Z$ is an $(s,t)$-cut, every $(s,t)$-path in $G$ intersects $Z$ in at least one edge. 
Second, if additionally $\lambda_{G}(s,t)=|Z_{s,t}|$, then every $(s,t)$-max flow in $G$ is witnessing \emph{for $Z_{s,t}$}. 
Hence, if $Z$ is a minimum $(s,t)$-cut, then finding a witnessing flow is no harder than finding a flow-augmenting set.
However, if $Z$ is a special and only $Z_{s,t}$ is a minimum $(s,t)$-cut, then a witnessing flow is a more restrictive notion.

We now observe that for every special $(s,t)$-cut $Z$, one can augment $G$ with a set compatible with $Z$ such that $Z_{s,t}$ becomes a $(s,t)$-min cut and 
$G+A$ admits a witnessing flow for $G$.

\begin{lemma}\label{lemma:allexist}
Let $G=(V,E)$ be a multi-graph, let $s,t\in V$ with $s\neq t$, let $Z\subseteq E$ be a special $(s,t)$-cut of size $k$, and let $\lambda^\ast = |Z_{s,t}|$.
Then there exists a $\lambda^\ast$-flow-augmenting set $A$ compatible with $Z$ and a witnessing flow $\witnessflow$ for $Z$ in $G+A$.
\end{lemma}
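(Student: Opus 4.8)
The plan is to construct $A$ explicitly from the connected components of $G - Z$, mimicking the idea already sketched in Section~\ref{sec:overview} but tailored to the special cut structure. First I would recall that since $Z$ is special, $Z_{s,t}$ is itself an $(s,t)$-cut, so $R_s(Z_{s,t})$ and $R_t(Z_{s,t})$ partition $V$ into the ``$s$-side'' $V_s$ and the ``$t$-side'' $V_t$. Every connected component $D$ of $G-Z$ lies entirely on one side (it cannot meet both, as $Z_{s,t}$ separates them), so $\{V_s, V_t\}$ is refined by the component partition of $G-Z$. The natural candidate is to let $A$ contain, for each connected component $D$ of $G-Z$, enough edges to make $D$ connected in a controlled way — concretely, fix a spanning tree structure is not even needed since $D$ is already connected; instead add edges turning each of $V_s$ and $V_t$ into a connected graph \emph{using only pairs within a single component of $G-Z$}. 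But that is impossible if $V_s$ has several components. So instead I would add edges that connect the components within $V_s$ to each other \emph{through $s$}: but such edges would join two distinct components of $G-Z$, violating compatibility. The resolution, exactly as in the overview, is that we are allowed to add edges joining two vertices of the \emph{same} component of $G-Z$; to make $Z_{s,t}$ a minimum cut it suffices to route $\lambda^\ast = |Z_{s,t}|$ edge-disjoint $(s,t)$-paths, and each such path only needs to be connected \emph{within} a component on each side plus cross one edge of $Z_{s,t}$.

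More precisely, here is the construction I would carry out. Enumerate $Z_{s,t} = \{e_1,\dots,e_{\lambda^\ast}\}$ with $e_i = u_i v_i$ where $u_i \in R_s(Z)$ and $v_i \in R_t(Z)$ — this is well-defined by the definition of $Z_{s,t}$. For each $i$, add to $A$ the edge $s u_i$ (unless $s = u_i$ already has this) and the edge $v_i t$. Each added edge $su_i$ joins two vertices both reachable from $s$ in $G-Z$, hence lying in the same connected component of $G-Z$, so it is compatible with $Z$; symmetrically for $v_i t$. Now in $G + A$ the paths $P_i := s,\, u_i,\, v_i,\, t$ (using the new edges and $e_i$) are $\lambda^\ast$ paths from $s$ to $t$; however they are not edge-disjoint in general because different $u_i$'s may coincide, making the edge $su_i$ shared. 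This is where the convention from the preliminaries pays off: edges of $A$ have infinite capacity (equivalently $k+1$ parallel copies), so they may be shared by arbitrarily many flow paths. Hence $\{P_1,\dots,P_{\lambda^\ast}\}$ is a valid unit-capacity flow of value $\lambda^\ast$ in $G+A$ once we interpret $A$-edges as having capacity $\ge \lambda^\ast$, giving $\lambda_{G+A}(s,t) \ge \lambda^\ast$, i.e.\ $A$ is $\lambda^\ast$-flow-augmenting. Setting $\cP := \{P_1,\dots,P_{\lambda^\ast}\}$, every $e_i \in Z_{s,t}$ occurs on $P_i$, and each $P_i$ meets $Z$ in exactly the single edge $e_i$ (the other edges of $P_i$ lie in $A$, hence not in $Z \subseteq E$); so $\cP$ is a witnessing flow for $Z$ in $G+A$, as required.

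The one remaining subtlety — which I expect to be the only real point needing care — is the interaction between the max-flow value and the witnessing-flow definition: a witnessing flow must be an $(s,t)$-\emph{max} flow, not merely a flow of value $\lambda^\ast$. So I must also argue $\lambda_{G+A}(s,t) = \lambda^\ast$, i.e.\ that $A$ does not over-augment. For this observe that $Z_{s,t}$ remains an $(s,t)$-cut in $G+A$: no edge of $A$ crosses from $V_s$ to $V_t$, since each such edge has both endpoints in one component of $G-Z$ and every component is contained in one of $V_s, V_t$. Thus $\lambda_{G+A}(s,t) \le |Z_{s,t}| = \lambda^\ast$, and combined with the lower bound we get equality, so $\cP$ is indeed a maximum flow. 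Compatibility of $A$ with $Z$ was checked edge-by-edge above, and every path of $\cP$ meets $Z$ in exactly one edge. This completes the construction and the verification of all three required properties (flow-augmenting, compatible, witnessing). I would present this cleanly by (i) defining the side partition via $Z_{s,t}$, (ii) defining $A$ and $\cP$ as above, (iii) the compatibility check, (iv) the two-sided bound on $\lambda_{G+A}(s,t)$, and (v) reading off the witnessing-flow conditions; each is short.
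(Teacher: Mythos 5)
Your proof is correct and takes essentially the same approach as the paper's: both build the witnessing flow $\cP$ as the family of paths $s,\,u_i,\,v_i,\,t$ through the edges $u_iv_i\in Z_{s,t}$, after adding to $A$ only edges that stay inside a single connected component of $G-Z$, which makes compatibility immediate. The only real difference is that the paper (calling it ``somewhat overkill'') adds $k+1$ copies of \emph{every} within-component edge, whereas you add only the $\Oh(\lambda^\ast)$ links $su_i$, $v_it$ that are actually used and explicitly verify that $Z_{s,t}$ remains an $(s,t)$-cut in $G+A$ so that $\cP$ is indeed a \emph{maximum} flow — a point the paper leaves implicit.
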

\begin{proof}
For each pair $u$ and $v$ of vertices in the same connected component of $G-Z$, add to $A$ a set of $k+1$ copies of the edge $uv$.
Clearly, $A$ is compatible with $Z$.
For every $e = uv \in Z_{s,t}$ with $u \in R_s(Z)$ and $v \in R_t(Z)$, let $P_e$ be a path in $G+A$ consisting of the edges $su \in A$, $uv \in Z_{s,t}$, and $vt \in A$.
Then, $\witnessflow := \{P_e~|~e \in Z_{s,t}\}$ is a witnessing flow for $Z$ in $G+A$ of cardinality $\lambda^\ast$. Hence, $A$ is $\lambda^\ast$-flow-augmenting.
\end{proof}
A few remarks are in place. 
The proof of Lemma~\ref{lemma:allexist} shows that a set $Z \subseteq E$ admits a $\lambda^\ast$-flow-augmenting set $A$ if and only if $Z$ does not contain
an $(s,t)$-cut of cardinality less than $\lambda^\ast$. Indeed, in one direction such a cut $C \subseteq Z$ remains an $(s,t)$-cut in $G+A$, preventing the flow from increasing above $|C|$,
and in the other direction the set $A$ constructed in the proof of Lemma~\ref{lemma:allexist} is in some sense ``maximum possible'' and all $(s,t)$-cuts of cardinality at most $k$ in $G+A$
are contained in $Z$. 
Furthermore, even if $Z$ is a special $(s,t)$-cut where $Z_{s,t}$ is an $(s,t)$-min cut (so no flow increase is possible), 
while $Z$ may not admit a witnessing flow in $G$, it is possible to augment $G$ with a set of edges compatible with $Z$ so that a witnessing flow exists.

Lemma~\ref{lemma:allexist} motivates the following extension of the definition of compatibility.
\begin{definition}[compatible pair]
A pair $(A,\witnessflow)$ is \emph{compatible} with a special $(s,t)$-cut $Z$ if
$A$ is a $\lambda^\ast$-flow-augmenting set compatible with $Z$ for $\lambda^\ast = |Z_{s,t}|$
and $\witnessflow$ is a witnessing flow for $Z$ in $G+A$.
\end{definition}

\paragraph{Problem formulation.}
The proof of Lemma~\ref{lemma:allexist} shows that the task of finding a compatible flow-augmenting set and a witnessing flow would be trivial if only we knew $Z$ in advance.
Not knowing $Z$, we will have to place additional edges more sparingly than in the proof of Lemma~\ref{lemma:allexist} to arrive at a sufficient success probability.
Let us formally define our goal, taking into account that the set $Z$ is not known.

In the \probsample problem we are given an instance $(G,s,t,k, \lambda^*)$ consisting of an undirected multi-graph $G=(V,E)$, vertices $s,t\in V$, 
and integers $k$ and $\lambda^*$ such that $k\geq \lambda^* \geq \lambda:= \lambda_G(s,t)$. 
The goal is to find (in probabilistic polynomial-time) a multi-set  $A$ of $V \choose 2$ and an $(s,t)$-flow $\witnessflow$ in $G+A$ such that the following holds:
 \begin{itemize}
 \item $\lambda_{G+A}(s,t)\geq\lambda^*$, $|\witnessflow| = \lambda^\ast$, and
 \item for each $\lambda^*$-eligible $(s,t)$-cut $Z$ of size exactly $k$, the output $(A,\witnessflow)$ is compatible with $Z$ with probability at least $p$.
 \end{itemize}
The function $p$ (that may depend on $k$ or $\lambda$) is called the \emph{success probability}.

In order to relax some corner cases, we allow for the event that $\lambda_{G+A}(s,t)>\lambda^*$, and note that if $Z$ is an eligible $(s,t)$-cut with $|Z_{s,t}|=\lambda^*$ then for any such output $(A,\witnessflow)$ such that $A$ is compatible with $Z$ we must have $\lambda_{G+A}(s,t)=\lambda^*$. 

\paragraph{Results.}
We can now formulate the main result of this section.  

\begin{theorem} \label{theorem:flow-augmentation}
There is a randomized polynomial-time algorithm that, given a \probsample{} instance $(G,s,t,k,\lambda^*)$ with $\lambda_G(s,t) \leq \lambda^* \leq k$,
outputs a set $A$ with $\lambda_{G+A}(s,t) \geq \lambda^*$ and a flow $\witnessflow$ in $G+A$ of cardinality $\lambda^\ast$,
such that the following holds: for each set $Z\subseteq E$ of size $k$ that is $\lambda^*$-eligible for $(G,s,t,k)$, the output $A$ is compatible with $Z$ and $\witnessflow$ is a witnessing
flow for $Z$ in $G+A$ with success probability $2^{-\Oh(k \log k)}$.
The algorithm can be implemented to run in time $k^{\Oh(1)}\Oh(m)$. 
\end{theorem}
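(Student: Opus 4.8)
The plan is to follow the recursion sketched in Section~\ref{sec:overview}, but now carefully tracking the bookkeeping needed to push the success probability down only to $2^{-\Oh(k\log k)}$ (rather than the $2^{-k^{\Oh(1)}}$ one would naively get), and to handle the general case where $Z$ is merely $\lambda^*$-eligible rather than minimal. First I would set $\lambda = \lambda_G(s,t)$ and, by repeating an atomic subroutine at most $k$ times (each time either increasing $\lambda$ by one or detecting that $Z_{s,t}$ is already a min cut), reduce to the task of raising the flow value by a single unit while keeping $Z$ an $(s,t)$-cut with good probability; the witnessing flow $\cP$ can be maintained alongside by augmenting paths as the flow grows. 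So the heart of the argument is one ``flow-increase step'' on a graph with current flow value $\lambda < \lambda^*$.

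For that step I would build the canonical decomposition of $G$ along non-crossing minimum $(s,t)$-cuts $C_0,\dots,C_p$ into blocks $V_0,\dots,V_{p+1}$ and then into bundles $W_0,\dots,W_{q+1}$ exactly as in Section~\ref{sec:overview}, using Propositions~\ref{prop:over:structureofmincuts}--\ref{prop:over:twobundlesconnect}. The key structural point, already argued there, is that an eligible $Z$ (with $Z_{s,t}$ a fixed min cut on the relevant scale) is confined to a stretch of $\Oh(k)$ consecutive \emph{affected} bundles, since by Proposition~\ref{prop:over:twobundlesconnect} any bundle with no $Z$-edge in itself or its two neighbors is unaffected, and there are only $\Oh(k)$ bundles meeting $Z$. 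I would then run the three color-coding / guessing layers: (i) two-colour the bundles so that every affected bundle and none of its unaffected neighbours is ``blue'', recolour long blue stretches, and bypass red bundles by turning their left/right interfaces into cliques of undeletable edges (collapsing earlier blue bundles onto $s$ and later ones onto $t$) — this costs a $2^{-\Oh(k)}$ factor and leaves $q = \Oh(k)$ bundles; (ii) guess $\phi\colon K\to\{s,t\}$ on $K=\bigcup_i V(C_i')$, which has size $\Oh(k^2)$, add the cliques $\binom{\phi^{-1}(s)}{2}$ and $\binom{\phi^{-1}(t)}{2}$, and guess the budget split $k_D=|Z\cap E(D)|$ over the $\Oh(k^2)$ components $D$ of the bundles — naively $2^{-\Oh(k^2)}$, but this is where the sharper analysis must enter. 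Then recurse into each component $D$ with terminals $\phi^{-1}(s)\cap D$ and $\phi^{-1}(t)\cap D$ and budget $k_D$; the base cases are when one side is empty or when $Z\cap E(D)$ is already a minimum cut between the two sides. The correctness glue — that simultaneous flow-increase in all recursive calls yields a flow-increase in $G$ while every $Z\cap E(D)$ staying a cut keeps $Z$ a cut — is the routine-but-tedious verification flagged in the overview, using that the added interface cliques live inside components of $G-Z$ by the guessed $\phi$, hence $A$ is compatible with $Z$, and that $\cP$ is obtained by stitching witnessing flows of the subcalls through the clique edges.

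The progress argument bounds the recursion: $\sum_D k_D \le k$, and the only way to have $k_D = k$ is for $Z$ to sit entirely inside a connected bundle $W_i$, which by definition is a single block and therefore contains no $(s,t)$-cut of size $\lambda$ between its interfaces, so $\lambda$ strictly increases in that subcall; otherwise every $k_D < k$ and the budget drops. Hence the recursion tree has $\Oh(k^2)$ nodes in total (depth $\Oh(k)$ from $\lambda$ increasing, and within each $\lambda$-level the budget $k$ decreasing), and the running time is $k^{\Oh(1)}\Oh(m)$ per node if all min-cut / closest-cut computations are done with a single unit-capacity max-flow scaled appropriately. \textbf{The main obstacle} I expect is getting the probability bound right: the naive per-step cost $2^{-\Oh(k^2)}$ multiplied over $\Oh(k^2)$ recursion nodes would give $2^{-\Oh(k^4)}$, far worse than claimed. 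The fix is to observe that the guesses need not be made afresh at each node with the full budget $k$: the bundle two-colouring, the $\phi$-guess on $K$, and the budget split are all ``charged'' against either the local drop in $k$ (one pays $2^{-\Oh(k_{\text{local}}\log k)}$ for distributing a budget of size $k_{\text{local}}$ among $\Oh(k_{\text{local}})$ things, and these budgets sum to $k$ along any root-to-leaf path) or against the $\lambda$-increase (of which there are at most $\lambda^*\le k$ many), so that the total exponent telescopes to $\Oh(k\log k)$. Making this accounting precise — defining a potential that simultaneously absorbs the colour-coding over bundles, the $\phi$-guess over cut vertices, and the budget partition, and showing it is additive over the recursion — is the delicate part of the proof; everything else is bookkeeping on top of the structural propositions already established.
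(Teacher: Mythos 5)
Your high-level blueprint (blocks and bundles, two-colouring the bundles to isolate an $\Oh(k)$-length affected stretch, a $\phi$-guess on cut vertices, budget split across components, recursion with the ``$\lambda$ increases or $k$ drops'' progress measure) matches the structure of the paper's Section~\ref{section:specialcutsaugmentation}. But there is a genuine gap precisely where you flag ``this is where the sharper analysis must enter,'' and it is not a bookkeeping fix — it forces a different recursion structure.

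The step that breaks is guessing $\phi$ on \emph{all} of $K = \bigcup_i V(C_i')$ at once. Even after the colour-coding leaves $q = \Oh(k_{\text{local}})$ bundles, $K$ has $\Theta(q\lambda)$ vertices, which is $\Theta(k_{\text{local}}^2)$ when $\lambda$ is comparable to $k_{\text{local}}$. So the $\phi$-guess alone costs $2^{-\Theta(k_{\text{local}}^2)}$ at a single node of the recursion. Your amortization idea (charge against the drop in $k_{\text{local}}$ or the increase in $\lambda$) does work for the budget split, which really is a composition of $k_{\text{local}}$ into $\Oh(k_{\text{local}})$ parts and hence $k^{-\Oh(k_{\text{local}})}$; but it cannot rescue the $\phi$-guess, because that cost is quadratic in $k_{\text{local}}$, not linear, and the very first call already has $k_{\text{local}} = k$. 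No charging scheme against linear potentials can absorb a $2^{-\Omega(k^2)}$ term at the root.

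What the paper does instead (Subsection~\ref{subsection:innerloop}, Figure~\ref{fig:alg-inner}) is to \emph{not} guess all of $K$ in one shot. The inner loop (\algoshort) only labels the single cut $C = C'_1$ between $W_1$ and $W_{2,q}$, which has $\lambda$ edges, and it labels them with types in $\{s,t,\bot\}\times\{s,t,\bot\}$ rather than just $\{s,t\}$ (the $\bot$ option, ``neither side reachable,'' is needed because $Z$ is merely eligible, not minimal). It then recurses on $W_1$ with the one-bundle algorithm \algoshortone and on $W_{2,q}$ with \algoshort again, so each recursion level pays only for $\lambda$ labels. Even this would naively give $k^{\Oh(\lambda)}$ per level; the paper avoids that by first sampling the \emph{multiplicities} $(\lambda_\gamma)_{\gamma}$ of each type and then drawing each edge's label from the resulting empirical distribution, so the probability of the correct labelling is $\prod_\gamma (\lambda_\gamma/\lambda)^{\lambda_\gamma}$, an entropy term. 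The telescoping you want is then carried out against the potential $\probfun(\lambda,k) = (c_1 k - c_2)(1+\ln k) + c_2\max(0,k-\lambda)$ via the concavity estimate in Claim~\ref{cl:fa-calculus}, and the $\log k$ factor in the final exponent comes from a second device you also don't have: the bundle colouring is done in two ``modes'' (blue with probability $1/2$ if you expect a single affected bundle, blue with probability $1/k$ otherwise), so that a maximal blue stretch of length $\ell \geq 2$ is hit with probability about $k^{-\ell}$, which is exactly what matches the $(c_1 k)(1+\ln k)$ shape of $\probfun$. Without restructuring the guess to one bundle-boundary per level and without the asymmetric colouring, the exponent cannot be brought down to $\Oh(k\log k)$.
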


The rest of the section is devoted to proving Theorem~\ref{theorem:flow-augmentation}.
For clarity, we will only argume polynomial-time running time bound through the proof,
and only discuss how to reach $k^{\Oh(1)} \Oh(m)$ bound in the end.

We begin by introducing an appropriate decomposition of (the vertex set of) $G$ into what we call \emph{bundles}, which in turn consist of what is called \emph{blocks}. We then present our recursive flow-augmentation algorithm, splitting the presentation into an ``outer loop'' and an ``inner loop.'' Note that we assume that the input multi-graph $G$ is connected as this somewhat simplifies presentation, but we will circumvent this assumption in applications.

It will be convenient to assume that we only care about $\lambda^*$-eligible cuts that do not contain any edge incident with $s$ nor $t$. 
This can be easily achieved by adding an extra terminal $s'$ connected with $s$ with $k+1$ edges, adding an extra terminal $t'$ connected with $t$ 
with $k+1$ edges, and asking for $(s',t')$-cuts instead. 
Consequently, in the proof we can assume one more property of an $\lambda^*$-eligible $(s,t)$-cut $Z$:
\begin{enumerate}
\setcounter{enumi}{\value{eligibleprops}}
\item $Z$ contains no edge incident with $s$ or $t$. \label{probsample:propertyST} 
\end{enumerate}

\subsection{Blocks and bundles}\label{subsection:blocksandbundles}

Given an instance $(G,s,t,k,\lambda^*)$ of \probsample, it should come as no surprise that the minimum $(s,t)$-cuts of $G$ will be crucial for flow augmentation. Recall, however, that even structurally simple graphs may exhibit an exponential number of possibly crossing minimum $(s,t)$-cuts. We will use the notion of closest cuts (and implicitly the well-known uncrossing of minimum $(s,t)$-cuts as used in Proposition~\ref{proposition:uniqueclosestmincut}) to identify a sequence of non-crossing minimum $(s,t)$-cuts. The parts between consecutive cuts will be called blocks; we will also define a partition of blocks into consecutive groups called bundles. The decomposition of $G$ into bundles will guide the choice of edges for the flow-augmenting set $A$ in our algorithm and will be used to capture parts of $G$ to recurse on.

For convenience, let us fix an instance $(G,s,t,k,\lambda^*)$ and let $\lambda:=\lambda_G(s,t) \leq k$ for use in this subsection. Accordingly, in $G$ there is a packing of $\lambda$ edge-disjoint $(s,t)$-paths $P_1,\ldots,P_\lambda$ (and no larger packing exists). Clearly, every minimum $(s,t)$-cut in $G$ contains exactly one edge from each path $P_j$ and no further edges. As noted earlier, we assume for now that $G$ is connected.

\paragraph{Blocks.}
We first define a sequence $C_0,\ldots,C_p$ of non-crossing minimum $(s,t)$-cuts; recall that minimum $(s,t)$-cuts in $G$ all have cardinality $\lambda$. To start, let $C_0$ be the unique minimum $(s,t)$-cut that is closest to $s$. Inductively, for $i\geq 1$, let $C_i$ be the minimum $(s,t)$-cut closest to $s$ among all cuts that fulfil $N[R_s(C_{i-1})]\subseteq R_s(C_i)$. The cut $C_i$ is well-defined (i.e., unique) by an easy variant of Proposition~\ref{proposition:uniqueclosestmincut}: Minimum cuts $X$ fulfilling the requirement that $N[R_s(C_{i-1})]\subseteq R_s(X)$ uncross into minimum cuts fulfilling the same requirement. Intuitively, the construction is equivalent to asking that each $C_i$ is closest to $s$ among minimum $(s,t)$-cuts that do not intersect $C_0\cup\ldots\cup C_{i-1}$ but this would need a formal proof and we do not require it.

We can now define the \emph{blocks} $V_0,\ldots,V_{p+1}\subseteq V$, which will be seen to form a partition of $V$. Block $V_0$ is simply set to $R_s(C_0)$. For $i\in\{1,\ldots,p\}$, we define block $V_i$ as the set of vertices reachable from $s$ in $G-C_i$ but not in $G-C_{i-1}$, i.e., $V_i:=R_s(C_i)\setminus R_s(C_{i-1})$. Finally, $V_{p+1}$ contains all vertices reachable from $s$ in $G$ but not in $G-C_p$ which, since $G$ is connected, equates to $V_{p+1}=V\setminus R_s(C_p)$. By construction of the cuts $C_i$ we clearly have $s\in R_s(C_0)\subsetneq R_s(C_1)\subsetneq\ldots\subsetneq R_s(C_p)\subseteq V\setminus\{t\}$, so the blocks $V_i$ are all nonempty and clearly form a partition of $V$; see Figure~\ref{fig:bundles}.

\begin{figure}[tb]
\begin{center}
\includegraphics{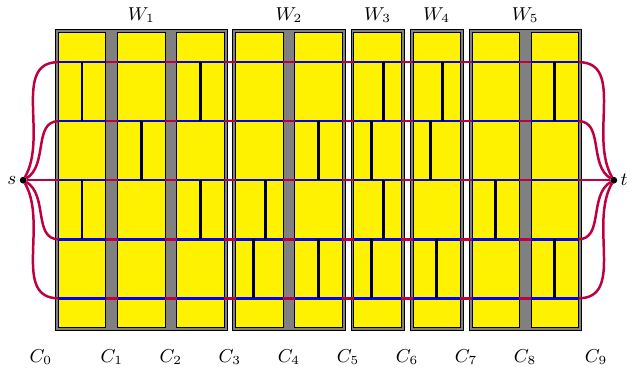}
\caption{A schematic picture of nine blocks (yellow) partitioned into five bundles (gray). The extremal blocks and bundles, containing $s$ and $t$, are not depicted.
  The $s-t$ flow of value $5$ is depicted using blue flow paths, with purple edges being the edges of the consecutive cuts $C_i$.
    The bundle $W_3$ is a connected bundle. }\label{fig:bundles}
\end{center}
\end{figure}

Let us point out that blocks $V_i$ do not need to be connected even though $G$ is connected. It will be useful to note, however, that blocks $V_0$ and $V_{p+1}$ are connected: The graph $G$ is connected and each minimum $(s,t)$-cut $C_i$ will therefore separate it into exactly two connected components $R_s(C_i)$ and $R_t(C_i)$. Blocks $V_0=R_s(C_0)$ and $V_{p+1}=V\setminus R_s(C_p)=R_t(C_p)$ are therefore connected. 
Moreover, each block is at least somewhat connected through subpaths of the flow paths $P_1,\ldots,P_\lambda$ that are contained therein. We establish a bit more structure via the following two propositions.

\begin{proposition}\label{proposition:structureofmincuts}
For each $(s,t)$-flow path $P_j\in\{P_1,\ldots,P_\lambda\}$, seen as being directed from $s$ to $t$, the edges of the minimum $(s,t)$-cuts $C_0,\ldots,C_p$ appear in order of the cuts. These edges define a partition of the flow path $P_j$ into $P^0_j,\ldots,P^{p+1}_j$ so that $P^i_j$ is contained in block $V_i$ for $i\in\{0,\ldots,p+1\}$.
\end{proposition}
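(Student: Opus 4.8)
The plan is to prove Proposition~\ref{proposition:structureofmincuts} by exploiting the non-crossing structure of the cuts $C_0,\ldots,C_p$ together with the fact that each $P_j$ is one of $\lambda$ edge-disjoint $(s,t)$-paths, so every minimum $(s,t)$-cut meets $P_j$ in exactly one edge. First I would fix a flow path $P_j$ directed from $s$ to $t$ and, for each $i$, let $e^i_j$ be the unique edge of $C_i \cap E(P_j)$. The heart of the argument is to show that walking along $P_j$ from $s$ to $t$ we encounter $e^0_j, e^1_j, \ldots, e^p_j$ in this order. Recall that $s \in R_s(C_0) \subsetneq R_s(C_1) \subsetneq \cdots \subsetneq R_s(C_p) \subseteq V \setminus \{t\}$, so the vertices of $P_j$ lying in $R_s(C_i)$ form exactly the prefix of $P_j$ up to (and including) the tail of $e^i_j$ — indeed, if $v$ is a vertex of $P_j$ in $R_s(C_i)$, the prefix of $P_j$ from $s$ to $v$ cannot contain an edge of $C_i$ (else $v \notin R_s(C_i)$), and conversely once $P_j$ crosses $e^i_j$ it leaves $R_s(C_i)$ for good since it has no further edges of $C_i$. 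Hence the prefix of $P_j$ inside $R_s(C_i)$ is strictly shorter than the prefix inside $R_s(C_{i+1})$ (as $R_s(C_i) \subsetneq R_s(C_{i+1})$ and both are "downward closed" along $P_j$), which forces $e^i_j$ to precede $e^{i+1}_j$ on $P_j$.

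Next I would use these $p+1$ edges $e^0_j, \ldots, e^p_j$, appearing in this linear order along $P_j$, to cut $P_j$ into $p+2$ consecutive subpaths $P^0_j, \ldots, P^{p+1}_j$, where $P^0_j$ is the initial segment up to the tail of $e^0_j$, $P^{p+1}_j$ is the final segment from the head of $e^p_j$ to $t$, and for $1 \le i \le p$, $P^i_j$ runs from the head of $e^{i-1}_j$ to the tail of $e^i_j$. It remains to check $V(P^i_j) \subseteq V_i$. For $i \in \{1,\ldots,p\}$: every vertex of $P^i_j$ lies on $P_j$ strictly after $e^{i-1}_j$, so it is not in $R_s(C_{i-1})$, and it lies weakly before $e^i_j$ (in fact the tail of $e^i_j$ is the last vertex), so the prefix of $P_j$ reaching it avoids $C_i$ and hence it is in $R_s(C_i)$; thus it lies in $R_s(C_i) \setminus R_s(C_{i-1}) = V_i$. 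The cases $i = 0$ and $i = p+1$ are handled analogously using $V_0 = R_s(C_0)$ and $V_{p+1} = V \setminus R_s(C_p) = R_t(C_p)$.

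The step I expect to need the most care is pinning down the claim that the portion of $P_j$ lying in $R_s(C_i)$ is exactly a prefix of $P_j$ — i.e., that $P_j$ never "re-enters" $R_s(C_i)$ after leaving it. This is precisely where the edge-disjointness of the path packing matters: because $C_i$ is a minimum cut it contains exactly one edge of $P_j$, so $P_j$ can cross the boundary $\delta(R_s(C_i)) = C_i$ (using Proposition~\ref{proposition:reachable}, as $C_i$ is minimal) only once, and an $(s,t)$-path starting in $R_s(C_i)$ and ending outside it must therefore cross at the unique edge $e^i_j$ and stay outside afterwards. I would state this as a small internal claim and prove it cleanly before assembling the ordering, since everything else follows from it by the strict nesting $R_s(C_0) \subsetneq \cdots \subsetneq R_s(C_p)$. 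All remaining verifications are routine bookkeeping about which vertex of $P_j$ falls into which block.
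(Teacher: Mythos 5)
Your plan is essentially the same as the paper's proof---fix $P_j$, observe each $C_i$ meets $P_j$ in exactly one edge $e^i_j$, establish that the vertices of $P_j$ lying in $R_s(C_i)$ form exactly a prefix of $P_j$ (ending at the tail of $e^i_j$), and use the monotonicity of these prefixes to order the $e^i_j$'s and read off the partition into blocks. The "prefix" characterization you highlight as the step needing the most care is indeed the crux, and your argument for it (using that $C_i$ meets $P_j$ in only one edge) is correct.

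However, there is a genuine gap in the step where you deduce strict ordering. You write that the prefix of $P_j$ inside $R_s(C_i)$ is \emph{strictly shorter} than the prefix inside $R_s(C_{i+1})$ "as $R_s(C_i)\subsetneq R_s(C_{i+1})$ and both are downward closed along $P_j$." That inference does not go through: $R_s(C_{i+1})\setminus R_s(C_i)$ is nonempty, but nothing in your argument forces any of those extra vertices to lie on $P_j$, so the two prefixes could a priori have the same length, in which case $e^i_j$ and $e^{i+1}_j$ would coincide and the partition would degenerate. What you actually need is the defining condition of the cut sequence, namely $N[R_s(C_i)]\subseteq R_s(C_{i+1})$: the head of $e^i_j$ is adjacent to its tail (which lies in $R_s(C_i)$), so the head lies in $N[R_s(C_i)]\subseteq R_s(C_{i+1})$, and the head is a vertex of $P_j$ strictly after the tail of $e^i_j$. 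This is precisely what makes the prefix strictly longer, and is the one-line observation the paper's proof hinges on. Once you invoke it, the rest of your bookkeeping is fine.
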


\begin{proof}
Fix an $(s,t)$-flow path $P_j$ and let $e_i$ denote the unique edge of $P_j$ that is contained in the minimum $(s,t)$-cut $C_i$, for all $i\in\{0,\ldots,p\}$. Moreover, let $u_i$ and $v_i$ be the endpoints of $e_i$ in order of appearance on $P_j$; note that $v_i=u_{i+1}$ is possible. (We tacitly assume that the flow paths are cycle free.)

Clearly, the vertices of the subpath $P^0_j$ of $P_j$ from $s$ to $u_0$ are contained in $R_s(C_0)$ because no further edge of $P_j$ is in $C_0$. Thus, all vertices of $P_j$ up to and including $v_0$ are contained in $N[R_s(C_0)]\subseteq R_s(C_1)$. Accordingly, the edge $e_1\in C_1$ on $P_j$ must be part of the subpath from $v_0$ to $t$ or else, combined with $v_0\in N[R_s(C_0)]\subseteq R_s(C_1)$ there would be a path from $s$ to $t$. Thus, $e_1$ appears after $e_0$ on $P_j$, when seeing $P_j$ as directed from $s$ to $t$. Observe that $e_0$ and $e_1$ together define a subpath $P^1_j$ from $v_0$ to $u_1$ on $P_j$ that is contained in $R_s(C_1)\setminus R_s(C_0)=V_1$. Iterating this argument for increasing $i$ completes the proof. (Note that the final subpath of $P_j$, denoted $P^{p+1}_j$ starts with $v_p$ and ends in $t$. Clearly, it is contained in $V\setminus R_s(C_p)$.)
\end{proof}

Using the fact that, for each $(s,t)$-flow path $P_j$, the blocks $V_i$ contain consecutive subpaths of $P_j$, we can prove that each block has at most $\lambda$ connected components. Moreover, each such component in a block $V_i$, with $i\in\{1,\ldots,p\}$ is incident with some number of edges of $C_{i-1}$ and the same number of edges in $C_i$.

\begin{proposition}\label{proposition:connectivityofblocks}
Each block $V_i$ has at most $\lambda$ connected components. Moreover, each connected component in a block $V_i$, with $i\in\{1,\ldots,p\}$, is incident with $c\geq 1$ edges in $C_{i-1}$ and with exactly $c$ edges in $C_i$. (Clearly, $V_0$ is incident with all $\lambda$ edges of $C_0$, and $V_{p+1}$ is incident with all $\lambda$ edges of $C_p$.)
\end{proposition}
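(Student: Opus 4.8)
The plan is to exploit Proposition~\ref{proposition:structureofmincuts}: each flow path $P_j$ contributes a (possibly trivial) subpath $P_j^i$ inside $V_i$, and these subpaths are what glue $V_i$ together. First I would handle the bound on the number of connected components. Consider a block $V_i$ with $i \in \{1,\dots,p\}$ and let $D$ be a connected component of $G[V_i]$. Every vertex of $V_i = R_s(C_i)\setminus R_s(C_{i-1})$ lies on the ``$s$-side'' of $C_i$ and is separated from $s$ by $C_{i-1}$; hence in $G$ any path from $s$ to a vertex of $D$ must cross $C_{i-1}$. Since $G$ is connected, $D$ is incident with at least one edge of $C_{i-1}$ (and, symmetrically, with at least one edge of $C_i$, because a vertex of $D$ is separated from $t$ by $C_i$ and $G$ is connected). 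As $|C_{i-1}| = \lambda$ and the components of $G[V_i]$ are pairwise disjoint, there can be at most $\lambda$ of them. The cases $i=0$ and $i=p+1$ are immediate since $V_0 = R_s(C_0)$ and $V_{p+1} = R_t(C_p)$ are connected (as already observed in the text) and are incident with all $\lambda$ edges of $C_0$ and $C_p$ respectively.

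Next I would prove the ``$c$ edges on each side'' claim. Fix a component $D$ of $G[V_i]$ for $i\in\{1,\dots,p\}$, let $c_{-} = |\delta(D)\cap C_{i-1}|$ and $c_{+} = |\delta(D)\cap C_i|$; we showed $c_-,c_+\ge 1$ above. The key point is that the flow paths $P_1,\dots,P_\lambda$ witness equality $c_-=c_+$. Indeed, by Proposition~\ref{proposition:structureofmincuts}, each $C_{i-1}$ meets $P_j$ in a single edge $e_{i-1}^{(j)}$ whose ``lower'' endpoint $v_{i-1}^{(j)}$ opens the subpath $P_j^i \subseteq V_i$, and $C_i$ meets $P_j$ in a single edge $e_i^{(j)}$ whose ``upper'' endpoint $u_i^{(j)}$ closes $P_j^i$. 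So the edges of $C_{i-1}$ incident with $D$ are exactly the edges $e_{i-1}^{(j)}$ for those $j$ with $v_{i-1}^{(j)}\in D$, and similarly the edges of $C_i$ incident with $D$ are the $e_i^{(j)}$ with $u_i^{(j)}\in D$. I would argue these two index sets coincide: if $v_{i-1}^{(j)}\in D$ then the whole subpath $P_j^i$ lies in $D$ (it is connected and contained in $V_i$), hence $u_i^{(j)}\in D$; conversely if $u_i^{(j)}\in D$ then $P_j^i\subseteq D$ and $v_{i-1}^{(j)}\in D$. Therefore $c_- = |\{j : v_{i-1}^{(j)}\in D\}| = |\{j : u_i^{(j)}\in D\}| = c_+$, giving $c := c_- = c_+$ as desired. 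I would also note as a sanity check that summing over all components $D$ of $V_i$ recovers $\sum_D c = \lambda$, consistent with $|C_{i-1}|=|C_i|=\lambda$ and with each edge of $C_{i-1}$ (resp.\ $C_i$) having exactly one endpoint in $V_i$.

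The main obstacle, and the step I would be most careful about, is making rigorous the claim that $\delta(D)\cap C_{i-1}$ consists \emph{exactly} of edges of the form $e_{i-1}^{(j)}$ — i.e., that every edge of $C_{i-1}$ with an endpoint in $V_i$ is a flow-path edge and lands in the ``right'' component. The first half is immediate: $C_{i-1}$ is a minimum cut so each of its $\lambda$ edges lies on exactly one $P_j$ (each $P_j$ carries one unit of the $(s,t)$-flow and crosses every minimum cut exactly once). For the second half one has to check that the endpoint of $e_{i-1}^{(j)}$ lying in $V_i$ is precisely $v_{i-1}^{(j)}$, the lower endpoint on $P_j$; this is exactly what Proposition~\ref{proposition:structureofmincuts} delivers, since the other endpoint $u_{i-1}^{(j)}\in V_{i-1}$. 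With that bookkeeping pinned down, the bijection between $\{j:v_{i-1}^{(j)}\in D\}$ and $\delta(D)\cap C_{i-1}$ (and likewise on the $C_i$ side) is clean, and the equality $c_-=c_+$ follows purely from connectivity of $P_j^i$ inside $V_i$. No submodularity or uncrossing is needed here beyond what was already used to define the $C_i$'s.
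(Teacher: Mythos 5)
Your proof is correct and follows essentially the same approach as the paper: both establish the $\geq 1$ incidence with $C_{i-1}$ via connectivity of $G$ (giving the $\lambda$ bound), and both derive $c_-=c_+$ by tracking the subpaths $P_j^i$ of the flow paths through the component via Proposition~\ref{proposition:structureofmincuts}. You merely formalize the bookkeeping (the index-set bijection between $\{j:v_{i-1}^{(j)}\in D\}$ and $\{j:u_i^{(j)}\in D\}$) more explicitly than the paper's terse statement that the $c$ edges "correspond to exactly $c$ subpaths."
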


\begin{proof}
We already know that $V_0$ and $V_{p+1}$ are connected. Consider now a block $V_i$ with $i\in\{1,\ldots,p\}$. Clearly, $R_s(C_i)$ is connected in $G-C_i$ because all of its vertices are reachable from $s$. At the same time, the vertices in $V_i\subseteq R_s(C_i)$ are not reachable from $s$ in $G-C_{i-1}$ by definition of $V_i=R_s(C_i)\setminus R_s(C_{i-1})$, so paths from $s$ to $V_i$ must use at least one edge in $C_{i-1}$. Thus, each connected component $K$ of $V_i$ must be incident with at least one edge of $C_{i-1}$; let $c\geq 1$ be the number of such edges. Now, recall that the edges in $C_{i-1}$ together with those in $C_i$ define the subpaths of $P_1,\ldots,P_\lambda$ that are in block $V_i$. This implies that the $c$ edges of $C_{i-1}$ that are incident with component $K$ in block $V_i$ correspond to exactly $c$ subpaths of paths $P_j\in\{P_1,\ldots,P_\lambda\}$ that are part of component $K$. This of course implies that $K$ must be incident by the $c$ edges of $C_i$ that define those paths. Since connected components of $V_i$ do not share vertices and edges of $C_{i-1}$ and $C_i$ have exactly one endpoint in $V_i$ each, no two connected components of block $V_i$ can share their incident edges in $C_{i-1}$ or $C_i$. Thus, there are at most $\lambda$ connected components in each block $V_i$. This completes the proof.
\end{proof}

It can be easily verified that the decomposition into blocks can be computed in polynomial time.

\begin{proposition}\label{proposition:computingblocks}
Given a multi-graph $G=(V,E)$ and vertices $s,t\in V$, the unique sequence of cuts $C_0,\ldots,C_p$ and decomposition of blocks $V_0,\ldots,V_{p+1}$ can be computed in polynomial time.
\end{proposition}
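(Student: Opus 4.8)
The plan is to turn the inductive definition of the cuts $C_0,\ldots,C_p$ directly into an algorithm, performing at each step a single maximum-flow computation and then reading off the cut closest to the source from the residual graph. First I would compute $\lambda:=\lambda_G(s,t)$ and a maximum $(s,t)$-flow by any polynomial-time algorithm. If $\lambda=0$ then $s$ and $t$ already lie in distinct connected components, $C_0=\emptyset$, $p=0$, and the blocks are immediate, so I may assume $\lambda\geq 1$. To obtain $C_0$ I take $R_s(C_0)$ to be the set of vertices reachable from $s$ in the residual graph of the max flow and set $C_0:=\delta(R_s(C_0))$; by Propositions~\ref{proposition:reachable} and~\ref{proposition:uniqueclosestmincut} this is exactly the unique minimum $(s,t)$-cut closest to $s$.

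For the inductive step, having $C_{i-1}$ and $R:=R_s(C_{i-1})$, I would set $S:=N[R]$ and decide whether some minimum $(s,t)$-cut (i.e.\ of size $\lambda$) has source side containing $S$; $C_i$, when it exists, is the one closest to $s$ among these. To test this I build $G'$ from $G$ by adding a fresh vertex $s^*$ joined to every vertex of $S$ by $\lambda+1$ parallel edges, and compute $\mu:=\lambda_{G'}(s^*,t)$ with one more max-flow call. Any $(s^*,t)$-cut of $G'$ whose source side $T$ does not contain all of $S$ must cut all $\lambda+1$ parallel edges joining $s^*$ to some vertex of $S$, so it has size $\geq\lambda+1$; and for $S\subseteq T$ the $G'$-cut equals $\delta_G(T\setminus\{s^*\})$, an $(s,t)$-cut of $G$ of size $\geq\lambda$. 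Hence $\mu\geq\lambda$, and $\mu=\lambda$ holds precisely when the desired $C_i$ exists (in particular $\mu>\lambda$ whenever $t\in S$). If $\mu>\lambda$ I stop and set $p:=i-1$. Otherwise I take $Q$ to be the residual-reachable set from $s^*$ in a maximum $(s^*,t)$-flow of $G'$, note that $S\subseteq Q$ and $\delta_{G'}(Q)=\delta_G(Q\setminus\{s^*\})$ has size $\lambda$, and set $R_s(C_i):=Q\setminus\{s^*\}$ and $C_i:=\delta_G(R_s(C_i))$. That this is the \emph{closest} such cut — hence the unique intended $C_i$ — follows from the submodular uncrossing argument of Proposition~\ref{proposition:uniqueclosestmincut} applied within the family of $(s,t)$-cuts whose source side contains $S$.

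To bound the running time I would argue termination: while the loop continues, $R_s(C_{i-1})\subsetneq N[R_s(C_{i-1})]\subseteq R_s(C_i)$, the first inclusion being strict because $C_{i-1}=\delta(R_s(C_{i-1}))$ is nonempty (as $\lambda\geq 1$), so $R_s(C_{i-1})$ has a neighbour outside itself. Since $t\notin R_s(C_i)$ always, the sizes of $R_s(C_0)\subsetneq R_s(C_1)\subsetneq\cdots$ strictly increase and are bounded by $|V|-1$, so there are at most $|V|$ iterations, each a max-flow computation plus a graph search; this yields the claimed polynomial bound. The blocks are then read off directly: $V_0:=R_s(C_0)$, $V_i:=R_s(C_i)\setminus R_s(C_{i-1})$ for $1\leq i\leq p$, and $V_{p+1}:=V\setminus R_s(C_p)$.

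I expect the only mildly delicate points to be the equivalence ``$\mu=\lambda$ iff a suitable $C_i$ exists'' and the claim that the residual-reachable set of $G'$ produces the \emph{closest} (hence the unique) cut $C_i$ rather than merely some minimum cut with $S$ on its source side; both are, however, instances of the submodular uncrossing already carried out for Proposition~\ref{proposition:uniqueclosestmincut}, so no genuinely new argument is needed and the statement is in essence bookkeeping around standard max-flow/min-cut machinery.
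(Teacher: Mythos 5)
Your proof is correct and follows essentially the same route as the paper: both compute each $C_i$ as the closest minimum cut with source side containing $N[R_s(C_{i-1})]$ via a single max-flow computation followed by residual reachability, and both invoke the submodular uncrossing argument (Proposition~\ref{proposition:uniqueclosestmincut}) for uniqueness and closeness. The only superficial difference is that you make the super-source explicit (adding $s^*$ with $\lambda+1$ parallel edges and a feasibility test $\mu=\lambda$), whereas the paper directly speaks of closest minimum $(S,T)$-cuts for a vertex set $S$; algorithmically these coincide, and your added detail on termination is a small clarity bonus rather than a departure.
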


\begin{proof}
This comes down to computing a polynomial number of closest minimum cuts. A closest minimum $(S,T)$-cut $C$ can be computed by a standard maximum (unit capacity) flow algorithm based on maintaining a residual graph: When the maximum $(S,T)$-flow is reached, let $R\supseteq S$ be the set of vertices that are reachable from $S$ in the residual graph. Clearly, when viewing packed paths as being directed from $S$ to $T$, there is no path edge entering $R$ from $V\setminus R$ because that would yield an edge leaving $R$ in the residual graph. As the flow is maximum, $T$ may not be reachable in the residual graph, so $R\cap T=\emptyset$. Consequently, each path in the packing leaves $R$ exactly once and does not return. Thus, the cardinality of $C$ is equal to the number of paths in the packing, say $\lambda$, making it a minimum $(S,T)$-cut.

To see that $C$ is closest to $S$, assume that there was a minimum $(S,T)$-cut $C'\neq C$ with $R_S(C')\subseteq R_S(C)$. Since both $C$ and $C'$ are also minimal cuts, $R_S(C')=R_S(C)$ would imply $C'=C$ by Proposition~\ref{proposition:reachable}, so assume that $R_S(C')\subsetneq R_S(C)$ and let $v\in R_S(C)\setminus R_S(C')$. Since the cardinality of $C'$ is equal to the size of the path packing, all of its edges are used by $(S,T)$-paths leaving $R_S(C')$. Thus, in the residual graph, there is no edge leaving $R_S(C')$ and hence no path from $S\subseteq R_S(C')$ to $v\notin R_S(C')$; a contradiction.
\end{proof}

\paragraph{Bundles.}
We will now inductively define a decomposition of $V$ into \emph{bundles} $W_0,\ldots,W_{q+1}$; see also Figure~\ref{fig:bundles}. The first bundle $W_0$ is simply equal to the (connected) block $V_0$, which contains $s$. For $i\geq 1$, supposing that blocks $V_0,\ldots,V_{j-1}$ are already parts of previous bundles, 
\begin{itemize}
 \item let $W_i:=V_j$ if $V_j$ is connected (i.e., if $G[V_j]$ is connected) and call it a \emph{connected bundle}
 \item otherwise, let $W_i:=V_j\cup\ldots\cup V_{j'}$ be the union of contiguous blocks, where $j'$ is maximal such that $G[V_j\cup\ldots\cup V_{j'}]$ is \emph{not} connected and call it a \emph{disconnected bundle}.
\end{itemize}
Observe that the final bundle is $W_{q+1}=V_{p+1}$ because $V_{p+1}$ is connected and, due to the included subpaths of $(s,t)$-flow paths (cf.\ Proposition~\ref{proposition:connectivityofblocks}), any union $V_j\cup\ldots\cup V_{p+1}$ induces a connected graph (see also Proposition~\ref{proposition:connectivityofbundles}). We use $\block(W_i)$ to denote the set of blocks whose union is equal to $W_i$, i.e., $\block(W_i)=\{V_j\}$ and $\block(W_i)=\{V_j,\ldots,V_{j'}\}$ respectively in the two cases above. We say that two bundles $W_i$ and $W_{i'}$ are \emph{consecutive} if $\abs{i-i'}=1$.

Intuitively, bundles are defined as maximal sequences of blocks that permit a good argument to apply recursion in our algorithm.
In case of a single block, if we augment the edges incident with the block, then in the recursive step the cardinality of the maximum flow $\lambda_G(s,t)$ increases. 
In case of a union of contiguous blocks that does not induce a connected subgraph, if we recurse into every connected component independently, we split the budget $k$ in a nontrivial way,
as every connected component contains the appropriate part of at least one flow path of $\mathcal{P}$.

Clearly, the bundles $W_0,\ldots,W_{q+1}$ are well defined and they form a partition of the vertex set $V$ of $G$. We emphasize that $W_0=V_0\ni s$ and $W_{q+1}=V_{p+1}\ni t$ and that they are both connected bundles. We note without proof that the bundles inherit the connectivity properties of blocks because the cuts between blocks combined into a bundle connect their subpaths of $(s,t)$-flow paths $P_1,\ldots,P_\lambda$ into longer subpaths, whereas the incidence to the preceding and succeeding cuts stays the same (see Proposition~\ref{proposition:connectivityofbundles}). For ease of reference, let us denote by $C'_0,\ldots,C'_q$ those cuts among $C_0,\ldots,C_p$ that have endpoints in two different (hence consecutive) bundles, concretely, with $C'_i$ having endpoints in both $W_i$ and $W_{i+1}$; note that $C'_0=C_0$ as $W_0=V_0$ and $C'_q=C_p$ as $W_{q+1}=V_{p+1}$.

\begin{proposition}\label{proposition:connectivityofbundles}
Each bundle $W_i$ has at most $\lambda$ connected components. Moreover, each connected component in a bundle $W_i$, with $i\in\{1,\ldots,q\}$, is incident with $c\geq 1$ edges in $C'_{i-1}$ and with $c$ edges in $C'_i$. (Clearly, $W_0=V_0$ is incident with all $\lambda$ edges of $C'_0=C_0$, and $W_{q+1}=V_{p+1}$ is incident with all $\lambda$ edges of $C'_q=C_p$.)
\end{proposition}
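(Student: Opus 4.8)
The plan is to derive the statement from Proposition~\ref{proposition:connectivityofblocks} together with the definition of bundles, using the crucial fact that each bundle is a union of \emph{consecutive} blocks and that the cuts $C_i$ strung together along a bundle splice the flow-path subpaths $P^i_j$ into longer pieces. The key observation to carry throughout is that for every $(s,t)$-flow path $P_j$, by Proposition~\ref{proposition:structureofmincuts} the edges $e^{(0)}_j,\ldots,e^{(p)}_j$ of $C_0,\ldots,C_p$ on $P_j$ appear in this order, so if $\block(W_i)=\{V_j,\ldots,V_{j'}\}$ then the portion of $P_j$ lying inside $W_i$ is exactly the contiguous subpath $P^j_j \cup P^{j+1}_j \cup \cdots \cup P^{j'}_j$, which is a single connected subpath of $P_j$ from the endpoint of $C'_{i-1}$-edge (equivalently $C_{j-1}$-edge) on $P_j$ to the endpoint of the $C'_i$-edge ($C_{j'}$-edge) on $P_j$. (For $W_0$ and $W_{q+1}$ the relevant endpoint is replaced by $s$ or $t$ respectively.)

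Given this, I would argue as follows. \textbf{Bound on components.} For $W_i$ with $1\le i\le q$, write $C'_{i-1}=C_a$ and $C'_i=C_b$ where $\block(W_i)=\{V_a,\ldots,V_b\}$ (with $a$ replaced appropriately at the ends). The $\lambda$ edges of $C_a$ have their ``upper'' endpoints in $W_i$, and each such endpoint lies on a unique flow path $P_j$; following $P_j$ forward, it stays inside $W_i$ (by the previous paragraph) until it leaves through the corresponding edge of $C_b$. Hence every vertex of $W_i$ reachable in $G$ from $s$ must, tracing back along residual/flow structure, be connected within $W_i$ to one of these $\lambda$ flow-path subpaths: more directly, any vertex $v\in W_i$ lies in $R_s(C_b)$, so is reached from $s$ in $G-C_b$, and this path enters $W_i$ through some edge of $C_a$ (since $v\notin R_s(C_a)$ when $v\in V_{a'}$ with $a'>a$; and if $v\in V_a$ it is already in the first block) — either way the portion of that path inside $W_i$ connects $v$ to a vertex incident to $C_a$. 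So each connected component of $W_i$ contains at least one endpoint of $C_a$, and since the $\lambda$ edges of $C_a$ supply at most $\lambda$ such endpoints, $W_i$ has at most $\lambda$ components. For $W_0$ and $W_{q+1}$ connectivity is already established in the text, and they meet all $\lambda$ edges of $C_0$ resp.\ $C_p$ because those cuts meet every flow path.

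\textbf{Matching incidence count $c$.} Fix a component $K$ of $W_i$, $1\le i\le q$, and let $c$ be the number of edges of $C'_{i-1}=C_a$ incident with $K$; as just argued $c\ge 1$. Each of these $c$ edges lies on a distinct flow path, and the in-$W_i$ portion of each such path is a connected subpath lying entirely in $K$ (connectivity within the block decomposition forces it into the same component) that terminates at a distinct edge of $C'_i=C_b$; this gives an injection from the $C_a$-edges at $K$ to $C_b$-edges at $K$, so $K$ meets at least $c$ edges of $C_b$. The reverse inequality is symmetric, tracing flow paths backward from $C_b$-edges at $K$ to $C_a$-edges at $K$. Hence $K$ is incident with exactly $c$ edges of $C'_i$. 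Summing over components and noting $|C_a|=|C_b|=\lambda$ also re-confirms the component bound. The endpoint cases $W_0,W_{q+1}$ are immediate since $C'_0=C_0$ and $C'_q=C_p$ each touch all $\lambda$ flow paths, which all pass through the single component.

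\textbf{Main obstacle.} The one point needing care is not the counting but justifying that the portion of a flow path $P_j$ lying inside a bundle $W_i$ is a single contiguous subpath and that it stays within one connected component of $W_i$ — i.e.\ that consecutive blocks $V_a,\ldots,V_b$ forming $W_i$ are ``stitched together'' along each $P_j$. This is exactly where Proposition~\ref{proposition:structureofmincuts} (order of cut edges along each flow path) and the definition of $V_{i'}=R_s(C_{i'})\setminus R_s(C_{i'-1})$ are used; once that contiguity is pinned down, the injection arguments in both directions and the component bound follow mechanically, paralleling the proof of Proposition~\ref{proposition:connectivityofblocks}.
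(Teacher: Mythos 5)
Your proof is correct and matches the justification the paper sketches (``We note without proof that the bundles inherit the connectivity properties of blocks because the cuts between blocks combined into a bundle connect their subpaths of flow paths into longer subpaths, whereas the incidence to the preceding and succeeding cuts stays the same''): you splice the flow-path pieces across interior cuts into one contiguous subpath per path per bundle, then run the same $C'_{i-1}$-to-$C'_i$ injection (plus its reverse) used for blocks in Proposition~\ref{proposition:connectivityofblocks}. The only blemish is a notational off-by-one: if $\block(W_i)=\{V_a,\ldots,V_{b}\}$ then $C'_{i-1}=C_{a-1}$, not $C_a$; this does not affect the substance of the argument.
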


Let us introduce some more notation for bundles: For $0\leq a\leq b\leq q+1$ let $W_{a,b}:=\bigcup_{i=a}^b W_i$. Let $W_{\leq a}:=W_{0,a}$ and $W_{\geq a}:=W_{a,q+1}$. For any (union of consecutive bundles) $W_{a,b}$ we define the \emph{left interface $\lefti(W_{a,b})$} as $\partial(W_{\geq a}) \cap W_{\geq a}$ when $a\geq 1$ and as $\{s\}$ when $a=0$. (I.e., when $a\geq 1$ then $\lefti(W_{a,b})$ are those vertices of $W_{a,b}$ that are incident with the cut $C'_{a-1}$ that precedes bundle $W_a$). Similarly, we define the \emph{right interface $\righti(W_{a,b})$} as $\partial(W_{\leq b}) \cap W_{\leq b}$ when $b\leq q$ and as $\{t\}$ when $b=q+1$. (I.e., when $b\leq q$ then $\righti(W_{a,b})$ are those vertices of $W_{a,b}$ that are incident with the cut $C'_b$ that succeeds bundle $W_b$.) For single bundles $W_i$ the same notation applies using $W_i=W_{i,i}$. A consecutive subsequence of bundles is called a \emph{stretch} of bundles, or simply a stretch. 

While a union of consecutive blocks may be disconnected, this is not true for bundles where, as can be easily checked, any two consecutive bundles together induce a connected subgraph of $G$.

\begin{proposition}\label{proposition:twobundlesconnect}
For any two consecutive bundles $W_i$ and $W_{i+1}$ the graph $G[W_i\cup W_{i+1}]$ is connected.
\end{proposition}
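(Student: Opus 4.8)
The plan is to argue directly from the structure of bundles established in the previous propositions, distinguishing cases according to whether $W_i$ and $W_{i+1}$ are connected or disconnected bundles. Recall that the "seam" between $W_i$ and $W_{i+1}$ is exactly the cut $C'_i$, which equals some $C_j$ from the original sequence of minimum $(s,t)$-cuts. The key facts I would lean on are: (a) by Proposition~\ref{proposition:structureofmincuts}, each flow path $P_\ell$ is partitioned into consecutive subpaths $P^0_\ell,\ldots,P^{p+1}_\ell$ with $P^m_\ell\subseteq V_m$, and these subpaths, glued along the edges of the cuts $C_0,\ldots,C_p$, reconstitute $P_\ell$; and (b) by Proposition~\ref{proposition:connectivityofbundles}, every connected component of a bundle is incident to at least one edge of the preceding seam cut and at least one edge of the succeeding seam cut.

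First I would handle the endpoints of $C'_i$. By construction of bundles, every vertex incident to $C'_i$ lies either in $W_i$ (its left endpoint, i.e.\ the endpoint in $R_s(C'_i)$-side block) or in $W_{i+1}$ (its right endpoint); the edge itself joins the two. So each of the $\lambda$ edges of $C'_i$ is an edge of $G[W_i\cup W_{i+1}]$ connecting a vertex of $W_i$ to a vertex of $W_{i+1}$. The remaining work is to show (1) $G[W_i]$ is connected "up to" the left interface of $W_{i+1}$ via these $\lambda$ edges, i.e.\ every component of $G[W_i\cup W_{i+1}]$ meets one of these seam edges, and (2) all the seam edges end up in one component.

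For step (1): take any connected component $K$ of $G[W_i]$. If $W_i$ is a connected bundle ($W_i=V_j$), then $K=W_i$ and by Proposition~\ref{proposition:connectivityofbundles} it is incident to at least one edge of $C'_i$ (since it is incident to all $c\ge 1$ edges, matching the $c$ edges of $C'_{i-1}$, and for $W_0$ it is incident to all $\lambda$ edges of $C'_0$). If $W_i$ is a disconnected bundle, $W_i=V_j\cup\cdots\cup V_{j'}$, and again by the component-incidence statement of Proposition~\ref{proposition:connectivityofbundles} each connected component of $W_i$ is incident to some edge of $C'_i$. Symmetrically, every connected component of $G[W_{i+1}]$ is incident to some edge of $C'_i$ — here I would note that if $W_{i+1}=W_{q+1}=V_{p+1}$ it is connected and incident to all $\lambda$ edges of $C'_q=C_p$, and otherwise its components are incident to $C'_i$ by the same proposition (applied with $C'_i$ playing the role of the preceding seam of $W_{i+1}$). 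Hence every component of $G[W_i]$ and every component of $G[W_{i+1}]$ is glued, in $G[W_i\cup W_{i+1}]$, to at least one seam edge of $C'_i$; it follows that each component of $G[W_i\cup W_{i+1}]$ contains at least one edge of $C'_i$, and the number of components of $G[W_i\cup W_{i+1}]$ is at most $\lambda$.

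For step (2): I would use the flow paths $P_1,\ldots,P_\lambda$. Consider the subpaths of these paths restricted to $W_i\cup W_{i+1}$: for each $\ell$, the portion of $P_\ell$ inside $W_i$ (a union of consecutive $P^m_\ell$'s) together with the portion inside $W_{i+1}$, joined by the single edge of $P_\ell$ in $C'_i$, forms a connected subpath of $P_\ell$ lying entirely in $W_i\cup W_{i+1}$. So $P_\ell\cap(W_i\cup W_{i+1})$ is connected and contains the $\ell$-th seam edge. But these subpaths are all I need for step (1) to actually force a single component — no, they are edge-disjoint and a priori could lie in different components. So I cannot conclude connectedness from the flow paths alone; this is the crux. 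The main obstacle, then, is to rule out $G[W_i\cup W_{i+1}]$ having two or more components each carrying some of the seam edges. Here is where the maximality in the definition of a \emph{disconnected} bundle (and the corresponding connectedness of a \emph{connected} bundle, plus the constraint $j'<p+1$) must be invoked. Concretely, if $W_i\cup W_{i+1}$ (or $W_i$ alone, when $W_i$ is disconnected) were disconnected, then since $W_i$ is a union of a maximal block-run that is not connected, appending more blocks eventually \emph{does} connect it; the only way $W_i\cup\cdots\cup V_{j'+1}=W_i\cup W_{i+1}$ could remain disconnected while $V_j\cup\cdots\cup V_{j'+1}$ is connected is a contradiction, since $W_{i+1}$ \emph{is} $V_{j'+1}\cup\cdots$ (a connected bundle $V_{j'+1}$, or the start of the next disconnected run). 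So I would argue: by maximality of $j'$ in the construction of $W_i$, $G[V_j\cup\cdots\cup V_{j'+1}]$ is connected; but $V_j\cup\cdots\cup V_{j'}=W_i$ and $V_{j'+1}\subseteq W_{i+1}$, and $W_{i+1}$ is obtained by adding only further blocks to $V_{j'+1}$ — and $G[V_{j'+1}]$ together with later blocks up to $W_{i+1}$'s end is connected by the same block-incidence argument (every block $V_m$ is incident to $C_{m-1}$ and $C_m$, so consecutive blocks attach), hence $G[W_i\cup W_{i+1}]\supseteq G[V_j\cup\cdots\cup V_{j'+1}]$ on its vertex set is connected. The symmetric degenerate cases — $W_i$ a connected bundle (then $W_i=V_j$ is connected and attaches to $W_{i+1}$ through the seam, which by Proposition~\ref{proposition:connectivityofbundles} touches every component of $W_{i+1}$), and $W_i=W_0$ or $W_{i+1}=W_{q+1}$ (both connected, and attached through a full $\lambda$-edge seam) — are immediate. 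I expect the bookkeeping of "which proposition gives incidence to which seam for which degenerate endpoint" to be the fiddly part, but no new idea beyond maximality of the disconnected-bundle run and the block-chaining via the $C_m$'s is needed.
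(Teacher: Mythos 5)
Your proof is correct and takes essentially the same approach as the paper's: split on whether $W_i$ is a connected or disconnected bundle, use Proposition~\ref{proposition:connectivityofbundles} for the connected case, and for the disconnected case invoke the maximality of $j'$ to get $G[V_j\cup\cdots\cup V_{j'+1}]$ connected, then chain-attach the remaining blocks of $W_{i+1}$ via Propositions~\ref{proposition:connectivityofblocks}~and~\ref{proposition:connectivityofbundles}. Your initial flow-path attempt for step (2) is, as you yourself note, insufficient, and your self-correction lands exactly on the paper's argument.
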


\begin{proof}
If $W_i$ is a connected bundle then, using Proposition~\ref{proposition:connectivityofbundles}, we immediately get that $G[W_i\cup W_{i+1}]$ is connected. If $W_i$ is a disconnected bundle then $G[W_i\cup V_{j'+1}]$ is connected, where $W_i=V_j\cup\ldots\cup V_{j'}$ and $V_{j'+1}$ is the first block of $W_{i+1}$. Using Propositions~\ref{proposition:connectivityofblocks} and~\ref{proposition:connectivityofbundles} we again directly get that adding the remaining blocks of $W_{i+1}$ to $G[W_i\cup V_{j'+1}]$ does not break connectivity.
\end{proof}

Clearly, the decomposition into bundles can be efficiently computed from the one into blocks.

\begin{proposition}\label{proposition:computingbundles}
Given a multi-graph $G=(V,E)$ and vertices $s,t\in V$, the unique sequence of cuts $C'_0,\ldots,C'_q$ and decomposition of bundles $W_0,\ldots,W_{q+1}$ can be computed in polynomial time.
\end{proposition}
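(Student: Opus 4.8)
The plan is to observe that the bundle decomposition is nothing more than a deterministic post-processing of the block decomposition, which is already known to be polynomial-time computable by Proposition~\ref{proposition:computingblocks}.

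First, use Proposition~\ref{proposition:computingblocks} to compute the unique sequence of closest minimum $(s,t)$-cuts $C_0,\ldots,C_p$ together with the blocks $V_0,\ldots,V_{p+1}$ in polynomial time. Then implement the inductive definition of bundles verbatim. Maintain a pointer $j$ to the index of the leftmost block not yet assigned to a bundle, initialised with $j=0$. Given the current $j$, test in time $\Oh(|V|+|E|)$ whether $G[V_j]$ is connected; if so, emit the connected bundle $W=\{V_j\}$ and advance $j$ by one. Otherwise, for every $j'$ with $j\le j'\le p+1$ test whether $G[V_j\cup\ldots\cup V_{j'}]$ is connected, let $j'^\ast$ be the largest index for which this induced subgraph is \emph{not} connected (this set is nonempty as it contains $j$, and bounded since by Proposition~\ref{proposition:connectivityofblocks} any union extending to $V_{p+1}$ is connected), emit the disconnected bundle $V_j\cup\ldots\cup V_{j'^\ast}$, and advance $j$ to $j'^\ast+1$. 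Repeat until all blocks are consumed; the process terminates with $V_{p+1}$ forming the last (connected) bundle $W_{q+1}$. Each step performs $\Oh(p)$ connectivity tests on induced subgraphs of $G$, and there are at most $p+2$ steps, so the whole computation uses $\Oh(p^2)$ connectivity tests, each of cost $\Oh(|V|+|E|)$; since $p\le |V|$, this is polynomial. The $\block(W_i)$ sets are recorded along the way for free.

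Finally, read off $C'_0,\ldots,C'_q$. Recall that for each $i$ the cut $C_i=\delta(R_s(C_i))$ has one endpoint in the last block $V_i$ of $R_s(C_i)$ and one in the first block $V_{i+1}$ of $V\setminus R_s(C_i)$. Thus $C_i$ belongs to the subsequence $C'_0,\ldots,C'_q$ precisely when $V_i$ and $V_{i+1}$ lie in different (hence consecutive) bundles $W_m$ and $W_{m+1}$, in which case $C'_m=C_i$. Whether $V_i$ and $V_{i+1}$ lie in different bundles is immediate from the grouping computed above, so the subsequence $C'_0,\ldots,C'_q$ is extracted in linear time; by construction $C'_0=C_0$ and $C'_q=C_p$. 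Uniqueness of the whole output is inherited from the uniqueness of $C_0,\ldots,C_p$ and $V_0,\ldots,V_{p+1}$ guaranteed by Proposition~\ref{proposition:computingblocks}, as the grouping into bundles is entirely deterministic.

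There is essentially no hard step here: the statement is a bookkeeping corollary of Proposition~\ref{proposition:computingblocks}. The only points needing (minor) care are that the greedy search for the maximal $j'$ is well-defined — which is exactly the content of the definition of a disconnected bundle together with the already-noted fact (via Proposition~\ref{proposition:connectivityofblocks}) that any union $V_j\cup\ldots\cup V_{p+1}$ is connected, so the last bundle is always $V_{p+1}$ — and that the running-time accounting over all bundles stays polynomial, which it does since there are at most $p+2$ bundles with $p=\Oh(|V|)$.
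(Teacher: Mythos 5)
Your proof is correct and takes the only natural route: the paper itself gives no explicit proof of this proposition, merely noting just before it that the bundle decomposition "can be efficiently computed from the one into blocks," and your argument simply fleshes out that observation by implementing the inductive definition of bundles on top of Proposition~\ref{proposition:computingblocks} and reading off the $C'_i$ afterwards.
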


\paragraph{Affected and unaffected bundles.}
We will later need to reason about the interaction of a special $(s,t)$-cut $Z\subseteq E$ and $G=(V,E)$ and, hence, about the interaction with the bundles of $G$. We say that a bundle $W$ is \emph{unaffected by $Z$} if $N[W]$ is contained in a single connected component of $G-Z$; otherwise we say that $W$ is \emph{affected by $Z$}. As an example, the cut $Z=C'_i$ affects both $W_i$ and $W_{i+1}$ but no other bundles. Similarly, a cut $Z$ entirely confined to $G[W_i]$ affects only $W_i$, since  $W_{\leq i-1}$ and $W_{\geq i+1}$ are both connected and disjoint from $Z$. The more interesting/difficult cuts $Z$ affect several bundles in a non-trivial way. 

The following observation limits the number and arrangement of affected bundles. It will be important for reducing the general case (probabilistically) to the case where $G$ decomposes into a bounded number of bundles. Concretely, this is the purpose of the outer-loop part of our algorithm, which is presented in the following section.

\begin{lemma}\label{lemma:affectedbundles}
Let $Z\subseteq E$ be an $(s,t)$-cut of size at most $k$. 
Let $0 \leq a \leq b \leq q+1$ and let $\ell$
be the number of indices $a \leq i \leq b$ such that the bundle $W_i$ is affected. 
Then,
  $$\ell \leq 2 \left| W_{a-1,b+1} \cap Z \right|.$$
In particular,  at most $2k$ bundles are affected by $Z$.
\end{lemma}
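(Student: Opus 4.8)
The plan is to show that each affected bundle $W_i$ with $a \le i \le b$ can be charged to an edge of $Z$ lying in the slightly enlarged stretch $W_{a-1,b+1}$, in such a way that each such edge is charged at most twice. The starting point is Proposition~\ref{proposition:twobundlesconnect}: if none of $W_{i-1}$, $W_i$, $W_{i+1}$ meets $Z$ (i.e.\ $Z$ contains no edge with an endpoint in $W_{i-1} \cup W_i \cup W_{i+1}$, equivalently $Z \cap \delta(W_{i-1} \cup W_i \cup W_{i+1}) \cup E(G[W_{i-1} \cup W_i \cup W_{i+1}]) = \emptyset$), then $G[W_{i-1} \cup W_i \cup W_{i+1}]$ is a connected subgraph of $G$ disjoint from $Z$, and moreover (using Proposition~\ref{proposition:connectivityofbundles}) it is connected in $G-Z$ to both $W_{\le i-1}$ and $W_{\ge i+1}$, hence $N[W_i]$ is inside a single component of $G-Z$. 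So a bundle can only be affected if $Z$ has an edge incident to $W_{i-1} \cup W_i \cup W_{i+1}$; I would make this contrapositive precise as the first step.

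Next I would set up the charging. For each affected index $i \in \{a,\ldots,b\}$, fix an edge $z_i \in Z$ incident with $W_{i-1}\cup W_i\cup W_{i+1}$; since $i-1 \ge a-1$ and $i+1 \le b+1$, such an edge has an endpoint in $W_{a-1,b+1}$, so $z_i \in E(G[W_{a-1,b+1}]) \cup \delta(W_{a-1,b+1})$. Actually, to land exactly the multiset $W_{a-1,b+1}\cap Z$ as written, I would instead observe that $z_i$ has at least one endpoint in one of the bundles $W_{i-1},W_i,W_{i+1}$, all of which are among $W_{a-1},\ldots,W_{b+1}$, so $z_i$ indeed has an endpoint in $W_{a-1,b+1}$ and thus contributes to $|W_{a-1,b+1}\cap Z|$ under the natural reading (edges of $Z$ touching $W_{a-1,b+1}$). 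The key counting claim is then: each edge $z \in Z$ is chosen as $z_i$ for at most two values of $i$. This holds because a single edge $z$ has two endpoints, each endpoint lies in exactly one bundle $W_j$ (bundles partition $V$), and an edge incident with $W_j$ can only be the witness for indices $i$ with $j \in \{i-1,i,i+1\}$, i.e.\ $i \in \{j-1,j,j+1\}$. So an edge with endpoints in bundles $W_{j_1}$ and $W_{j_2}$ can witness at most the indices in $\{j_1-1,j_1,j_1+1\} \cup \{j_2-1,j_2,j_2+1\}$ — but here I need to be more careful, since that set can have up to six elements.

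The way to get the clean factor $2$ is to refine which witness edge I pick. Instead of "incident to $W_{i-1}\cup W_i\cup W_{i+1}$", I would argue that an affected bundle $W_i$ must be incident with an edge of $Z$ in the stronger sense that $Z$ meets $\delta(W_{\le i-1}) \cup E(G[W_i]) \cup \delta(W_{\ge i+1})$ restricted appropriately — more simply: if $W_i$ is affected then $Z$ contains an edge with at least one endpoint in $W_i$, OR $Z$ contains an edge of the cut $C'_{i-1}$ or $C'_i$ (the cuts flanking $W_i$). Here is the cleaner formulation I would push through: say $W_i$ is affected implies $Z \cap (\delta(W_i) \cup E(G[W_i])) \ne \emptyset$, because if $Z$ avoids all edges with an endpoint in $W_i$ then $G[W_i]$ is intact and, by Proposition~\ref{proposition:connectivityofbundles} together with connectivity of $W_{\le i-1}$ and $W_{\ge i+1}$ and of $W_{i-1}\cup W_i$ and $W_i\cup W_{i+1}$ (Proposition~\ref{proposition:twobundlesconnect}), $N[W_i]$ lies in one component of $G-Z$ provided also $W_{i-1}$ and $W_{i+1}$ are not separated from $W_i$. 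That last proviso is exactly where the neighbouring bundles re-enter, so I think the honest route is: $W_i$ affected $\Rightarrow$ $Z$ has an edge incident with $W_{i-1} \cup W_i \cup W_{i+1}$, pick such an edge $z_i$ and let $j(i)$ be the index of one of its endpoint-bundles, so $j(i) \in \{i-1,i,i+1\}$ and hence $i \in \{j(i)-1, j(i), j(i)+1\}$; now for each bundle index $j$, the set of affected $i$ with $j(i) = j$ has size at most $3$, not $2$. To squeeze to $2$, observe that the two "extreme" elements $j-1$ and $j+1$ of $\{j-1,j,j+1\}$ cannot both occur as $i$ with the \emph{same} witness-endpoint-bundle $j$ unless... hmm, they can. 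So I would instead drop the ambition of the exact constant $2$ from first principles and note that what is really needed downstream is just $\ell = O(|W_{a-1,b+1}\cap Z|)$; but since the statement says $2$, I expect the authors use a sharper pick, e.g.\ for affected $W_i$ take an edge of $Z$ that \emph{separates} $W_{\le i}$ from $W_{\ge i+1}$ within a small window, equivalently an edge of $Z$ crossing one of the two flanking cuts $C'_{i-1}, C'_i$, or strictly inside $G[W_i]$; then each edge of $Z$ inside a fixed bundle $W_j$ witnesses only $i=j$, and each edge of $Z$ crossing a fixed cut $C'_j$ (which lies between $W_j$ and $W_{j+1}$) witnesses only $i \in \{j, j+1\}$ — at most two. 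Summing, $\ell \le 2|\{z \in Z : z \text{ incident with } W_{a-1,b+1}\}| = 2|W_{a-1,b+1} \cap Z|$.

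The main obstacle, and the step I would spend the most care on, is the implication "$W_i$ affected $\Rightarrow$ $Z$ crosses $C'_{i-1}$ or $C'_i$ or lives inside $G[W_i]$". This needs: if $Z$ is disjoint from $E(G[W_i])$ and from $C'_{i-1}$ and $C'_i$, then $N[W_i]$ is in one component of $G-Z$. Disjointness from $C'_{i-1}$ keeps $W_i$ connected to $W_{\le i-1}$? No — $C'_{i-1}$ is a single cut between $W_{i-1}$ and $W_i$, and by Proposition~\ref{proposition:connectivityofbundles} each component of $W_i$ touches both flanking cuts, so if $Z$ avoids $C'_{i-1}$, $C'_i$, and all internal edges of $W_i$, then (since $Z$ has size $\le k$ but crucially is an $(s,t)$-cut, and $W_{\le i-1} \cup W_i \cup W_{\ge i+1} = V$) every component of $W_i$ is joined to $W_{\le i-1}$ through an uncut $C'_{i-1}$-edge and $W_{\le i-1}$ is connected (induct: $W_0$ connected, $Z$ avoiding the relevant cuts)... this requires a small induction on the stretch and careful bookkeeping of which cuts $Z$ avoids. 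I would state and prove a helper claim: for a special $(s,t)$-cut $Z$, the union of bundles not meeting $Z$ on either flank decomposes into intervals each of which is connected and uncut in $G-Z$, and is attached to a single side; from this the affected-bundle characterization and the factor-$2$ charging both follow. Everything else is routine set-theoretic counting over the partition of $V$ into bundles.
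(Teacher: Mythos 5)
Your opening observation (the contrapositive via Proposition~\ref{proposition:twobundlesconnect}) and the charging strategy match the paper's approach, and your own hesitation correctly identifies the weak link. The implication you finally settle on --- ``$W_i$ affected $\Rightarrow Z$ intersects $C'_{i-1} \cup C'_i \cup E(G[W_i])$,'' equivalently $Z\cap(\delta(W_i)\cup E(G[W_i]))\neq\emptyset$ --- is false, and your attempt to prove it via ``$W_{\le i-1}$ is connected'' cannot work because $Z$ is free to have edges inside $W_{\le i-1}$. The true necessary condition (which the paper proves) is strictly weaker: $W_i$ is affected only if ($Z$ hits $\delta(W_i)\cup E(G[W_i])$) \emph{or} ($Z$ has an edge with both endpoints in $W_{i-1}$ \emph{and} an edge with both endpoints in $W_{i+1}$). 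The second disjunct is precisely the case you tried to rule out, and it can genuinely occur: for a disconnected bundle $W_i$ whose components $K_1,K_2$ are joined to each other only through $W_{i-1}$ and through $W_{i+1}$, a cut $Z$ that severs both of those detours inside $W_{i-1}$ and $W_{i+1}$ affects $W_i$ without touching any edge incident with $W_i$.

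The paper still gets the constant $2$, but by a fractional charging scheme rather than a single-witness assignment: a bundle affected via the first disjunct charges $1$ point to an edge of $Z$ incident with $W_i$, while a bundle affected only via the second disjunct charges $1/2$ point to an edge of $Z$ inside $W_{i-1}$ and $1/2$ to one inside $W_{i+1}$. An edge interior to $W_j$ can then receive at most $1$ (from $i=j$) plus $1/2+1/2$ (from $i=j\pm1$), and an edge of $C'_j$ at most $1+1$ (from $i=j$ and $i=j+1$); both cap out at $2$. So your plan is salvageable if you replace the false claim with the correct disjunction and switch from ``pick one witness edge per affected bundle'' to the $1$-versus-$(1/2,1/2)$ charging. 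As written, your argument would only cover the first disjunct and under-count the affected bundles.
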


\begin{proof}
We argue that a bundle $W_i\in\W$ is unaffected if $Z$ contains no edge from $C_i'$ and no edge with both endpoints in $W_{i-1} \cup W_i$ (Condition $\star$). First of all, we have $G[W_{i-1}\cup W_i]-Z=G[W_{i-1}\cup W_i]$ in this case, which is connected by Proposition~\ref{proposition:twobundlesconnect} and all vertices of $W_{i-1}\cup W_i$ are in the same component of $G-Z$. As $Z$ is furthermore disjoint from the min-cut $C'_i$ succeeding $W_i$, all vertices of $N[W_i]$ are in the same connected component. Thus, $W_i$ is not affected. A symmetric argument holds for $W_i$ if $Z$ contains no edge of $C_{i-1}'$ and no edge with both endpoints in $W_i \cup W_{i+1}$ (Condition $\dagger$).

To bound the number of affected bundles, first note that $W_0$ and $W_{q+1}$ are affected only if $Z$ contains an edge of $G[W_0]$ or $C_0$, respectively of $G[W_{q+1}]$ or $C_q'$. 
For any remaining bundle $W_i$, say that $W_i$ is \emph{potentially affected} if neither of the conditions $(\star)$ and $(\dagger)$ in the previous paragraph applies.  A bundle $W_i$, $i \in [q]$ is thus potentially affected if (1) $Z$ contains an edge of $G[W_i]$ or intersects $C_{i-1}' \cup C_i'$, 
or (2) $Z$ contains both an edge of $G[W_{i-1}]$ and of $G[W_{i+1}]$. 
For each potentially affected bundle $W_i$, let us charge some edge of $Z$ as described by 1 point if the former case applies, and otherwise charge one edge of $Z$ in $W_{i-1}$ and in $W_{i+1}$ by 1/2 point each. 
Then the total amount of allocated charge equals the number of potentially affected bundles, and it can be seen that every edge is charged at most 2 points. The lemma follows. 
\end{proof}

\begin{lemma}\label{lemma:onlyoneinterestingcut}
Let $Z\subseteq E$ be an $(s,t)$-cut of size at most $k$. There is at most one maximal stretch $W_{a,b}$ of bundles such that every bundle $W_i$, $a \leq i \leq b$ is affected by $Z$ and such that $W_{a,b}$ contains both a vertex reachable from $s$ and a vertex reachable from $t$ (in $G-Z$). Moreover, all vertices in the left interface of $W_{a,b}$ are reachable from $s$ and all vertices in the right interface are reachable from $t$.
Finally, if $Z$ is a special $(s,t)$-cut then there must be such a stretch.
\end{lemma}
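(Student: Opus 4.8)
\noindent The plan is to extract structure from the linear arrangement $W_0, W_1, \dots, W_{q+1}$ of bundles together with the separating min-cuts $C'_0, \dots, C'_q$. First I would record the basic fact that, for each $i$, the cut $C'_i$ consists of $\lambda$ edges whose endpoints straddle $W_{\le i}$ and $W_{\ge i+1}$, and that $G[W_{\le i}]$ and $G[W_{\ge i+1}]$ are each connected (this follows from Proposition~\ref{proposition:twobundlesconnect} applied repeatedly, using that consecutive bundles unite into a connected graph). Hence if $Z$ is disjoint from $C'_i$ and from all edges inside $W_{\le i}$ and inside $W_{\ge i+1}$, then $W_{\le i}$ lies entirely in one component of $G-Z$ and $W_{\ge i+1}$ lies entirely in one component; in general the ``left part'' $W_{\le i}$ and ``right part'' $W_{\ge i+1}$ each live in a single component of $G-Z$ precisely when $Z$ avoids all their internal edges and $C'_i$.

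\noindent The key step is to define, for each index $i \in \{0,\dots,q\}$, a status: say the cut $C'_i$ is \emph{clean} if $Z$ contains no edge of $C'_i$ and no edge with both endpoints in $W_{\le i}$, nor any edge with both endpoints in $W_{\ge i+1}$ — equivalently, if the partition $(W_{\le i}, W_{\ge i+1})$ of $V$ is ``respected'' by $Z$ in the sense that $W_{\le i}$ is untouched inside and $W_{\ge i+1}$ is untouched inside and the interface $C'_i$ is untouched. For such a clean $C'_i$, both $W_{\le i}$ and $W_{\ge i+1}$ sit in single components of $G-Z$; since $s \in W_{\le i}$ and $t \in W_{\ge i+1}$ and $Z$ is an $(s,t)$-cut, these must be two \emph{different} components, so every vertex of $W_{\le i}$ is on the $s$-side and every vertex of $W_{\ge i+1}$ is on the $t$-side (here ``$s$-side'' means not reachable from $t$, and vice versa; if $Z$ is special one gets the sharper statement with $R_s(Z)$ and $R_t(Z)$). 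Now observe that a bundle $W_j$ is affected only if it lies strictly between the rightmost clean cut to its left and the leftmost clean cut to its right; more precisely, if $C'_i$ is clean then, for $j \le i$, $W_j \subseteq W_{\le i}$ is entirely $s$-side and hence unaffected, and symmetrically for $j \ge i+1$ and the $t$-side. Therefore the set of affected bundles is contained in the interval between the last clean cut index on the $s$-side and the first clean cut index on the $t$-side, and this interval is unique. Concretely: let $a$ be one more than the largest $i$ such that $C'_i$ is clean and its left part is $s$-side, and $b$ the largest index of a bundle before the first clean cut whose right part is $t$-side; then $W_{a,b}$ is the claimed unique stretch (after trimming to the maximal sub-stretch that is entirely affected and meets both sides). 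The reachability of the left/right boundary follows because the boundary vertices of $W_{a,b}$ are exactly the endpoints (inside $W_{a,b}$) of $C'_{a-1}$ and $C'_b$, and $C'_{a-1}$, $C'_b$ are clean cuts whose outer parts are respectively $s$-side and $t$-side, so the inner endpoints are forced to the opposite side only if $Z$ uses those cut edges — but $Z$ does not (they are clean), so actually the boundary vertices are connected to $W_{\le a-1}$ resp.\ $W_{\ge b+1}$ through $C'_{a-1}$ resp.\ $C'_b$, placing $\lefti(W_{a,b})$ on the $s$-side and $\righti(W_{a,b})$ on the $t$-side.

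\noindent For the final sentence, suppose $Z$ is a special $(s,t)$-cut; I must show the stretch exists, i.e.\ is nonempty. Since $Z$ is an $(s,t)$-cut, $R_s(Z) \ne V$ and some vertex is reachable from $t$; walk along any $(s,t)$-flow path $P_j$ from $s$ to $t$. It starts in $W_0$ on the $s$-side and ends in $W_{q+1}$ on the $t$-side, so somewhere along it we cross from an $s$-side vertex to a $t$-side vertex; that crossing edge lies in $Z$ (in fact in $Z_{s,t}$, using that $Z$ is special so $Z_{s,t}$ is itself an $(s,t)$-cut — this is where speciality is used, to guarantee the crossing edge is genuinely between $R_s(Z)$ and $R_t(Z)$ and not merely between $R_s(Z)$ and some third component). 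The bundle containing the $s$-side endpoint of that edge therefore contains both an $s$-reachable vertex and (via the same path, slightly further) leads toward a $t$-reachable vertex; pinning this down shows that at least one bundle is affected and lies in a stretch meeting both sides, so the maximal such stretch is nonempty.

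\noindent The main obstacle I anticipate is the bookkeeping in the middle step: making precise the claim that the affected bundles form a contiguous interval bounded by clean cuts, and handling the possibility that there are several disjoint ``clean intervals'' so that one must argue only \emph{one} of the resulting gaps can contain vertices of both colours (this is forced because the bundles are linearly ordered with $s$ at the far left and $t$ at the far right, so the $s$-side forms a prefix-like region and the $t$-side a suffix-like region, with the affected/mixed part squeezed in between — but spelling out ``prefix-like'' rigorously, given that individual bundles and blocks can be disconnected, requires care). Everything else is a direct consequence of connectivity of $G[W_{\le i}]$, $G[W_{\ge i+1}]$ and the interface structure of the $C'_i$.
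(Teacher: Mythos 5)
Your notion of a \emph{clean} cut $C'_i$ is vacuous, and this breaks the core of the uniqueness argument. You require that $Z$ contains no edge of $C'_i$, no edge internal to $W_{\le i}$, and no edge internal to $W_{\ge i+1}$. But every edge of $G$ falls into exactly one of these three classes (bundles partition $V$, and edges between bundles are exactly the edges of the cuts $C'_j$), so ``$C'_i$ is clean'' forces $Z=\emptyset$, which cannot happen for an $(s,t)$-cut in a connected graph. Consequently no $C'_i$ is ever clean, and the whole machinery of ``intervals bounded by clean cuts'' never fires. What you need instead is the much weaker, \emph{local} property: a bundle $W_i$ is unaffected by $Z$ whenever $Z$ misses, say, $C'_i$ and all edges with both endpoints in $W_{i-1}\cup W_i$ (or the mirror condition); this is exactly what Lemma~\ref{lemma:affectedbundles} establishes. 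The paper then avoids your global bookkeeping entirely: it takes a maximal stretch $W_{i,j}$ with the two-sided property, observes that maximality forces $W_{i-1}$ (and $W_{j+1}$) to be unaffected, and so $N[W_{i-1}]$ lives in a single component of $G-Z$; since the left interface $X\subseteq N[W_{i-1}]$ must contain an $s$-reachable vertex (any $s$-to-$p$ path for $p$ in the stretch crosses $C'_{i-1}$ into $X$), all of $X$ is $s$-reachable. The one-sidedness of both boundaries then rules out a second qualifying stretch, because any path from $s$ to a vertex beyond $W_j$ would have to cross the $t$-side right boundary, and vice versa.

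Your existence argument also has a subtle gap. You propose walking a flow path from $s$ to $t$ and taking ``the crossing from an $s$-side vertex to a $t$-side vertex,'' claiming this edge lies in $Z_{s,t}$. But the first edge on which the walk leaves $R_s(Z)$ may land in a third component of $G-Z$ (neither $R_s(Z)$ nor $R_t(Z)$), so that edge need not be in $Z_{s,t}$; the walk can afterwards oscillate through several components before reaching $R_t(Z)$, and ``the crossing'' is not a well-defined single edge. The paper sidesteps this entirely: since $Z_{s,t}$ is an $(s,t)$-cut (this is precisely what specialness gives), the flow path $P$ must contain \emph{some} edge $e\in Z_{s,t}$; both endpoints of $e$ lie in $N[W]$ for some bundle $W$, and that $W$ is affected and contains both an $s$-reachable and a $t$-reachable vertex. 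You should replace the ``walk until you flip sides'' reasoning by this direct appeal to $Z_{s,t}$ being a cut, and then note that the bundle containing (either endpoint of) $e$ witnesses the required stretch.
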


\begin{proof}
If there is no such stretch of bundles then the lemma holds vacuously. Else, let $W_{i,j}$ be a maximal  stretch of consecutive affected bundles such that $W_{i,j}$ contains a vertex $p$ that is reachable from $s$ in $G-Z$ and a vertex $q$ that is reachable from $t$ in $G-Z$. Because the left interface $X$ of the stretch separates the stretch from $s$, at least one vertex of $X$ must be reachable from $s$ in $G-Z$. The preceding bundle, namely $W_{i-1}$, is unaffected (by choosing the stretch as maximal). Since $N[W_{i-1}]\supseteq X$ contains a vertex reachable from $s$ in $G-Z$ it follows that all vertices in $N[W_{i-1}]$ are reachable from $s$ in $G-Z$. Since $Z$ separates $s$ from $t$, no vertex is reachable from both $s$ and $t$ in $G-Z$. In particular, this holds for all vertices in $X$ and, hence, for all bundles preceding $W_i$. By a symmetric argument, all vertices in the right interface of the stretch, say $Y$, are reachable from $t$ but not from $s$.

For the final part of the lemma, consider any flow path $P$ from $s$ to $t$ in $G$. Since $Z_{s,t}\subseteq Z$ is also an $(s,t)$-cut, the path $P$ must contain some edge $e\in Z_{s,t}$. By definition of $Z_{s,t}$, $e$ has one endpoint reachable from $s$ in $G-Z$ and one reachable from $t$ in $G-Z$. Clearly, there is a bundle $W$ with both endpoints of $e$ in $N[W]$ (as the bundles are a partition of the vertex set, some bundle $W$ contains an endpoint of $e$ and, hence, $N[W]$ must contain both endpoints). This bundle $W$ is affected and contains a vertex reachable from $s$ and one reachable from $t$ in $G-Z$, namely the endpoints of $e$.  By definition of stretch, $W$ is contained in precisely one maximal stretch of affected bundles, and this stretch meets the conditions of the lemma.
\end{proof}

The previous lemma says that each special $(s,t)$-cut $Z$ yields exactly one maximal stretch of affected bundles in which it separates $s$ from $t$ (and possibly creates further connected components). We say that $Z$ \emph{strongly affects} that stretch. For all other maximal affected stretches of bundles we say that they are \emph{weakly affected by $Z$}.  Note that a non-special cut such as $C_i' \cup C_j'$ for $j \geq i+3$ may contain no strongly affected stretch.

Let us make some useful observations about bundles not in the strongly affected stretch.

\begin{proposition} \label{prop:all-nonstrong-stretches}
  Let $Z$ be a special $(s,t)$-cut and let $W_{a,b}$ be the unique strongly affected stretch.  Then the following hold.
  \begin{enumerate}
  \item For every $i < a$, if $W_i$ is an unaffected bundle then $W_i \subseteq R_s(Z)$
  \item For every $i > b$, if $W_i$ is an unaffected bundle then $W_i \subseteq R_t(Z)$
  \item If $W_{i,j}$ is a (maximal) weakly affected stretch with $j<a$, then $\lefti(W_i) \cup \righti(W_j) \subseteq R_s(Z)$ and $(j-i+1) \leq 2|Z \cap W_{i,j}|$
  \item If $W_{i,j}$ is a (maximal) weakly affected stretch with $i>b$, then $\lefti(W_i) \cup \righti(W_j) \subseteq R_t(Z)$ and $(j-i+1) \leq 2|Z \cap W_{i,j}|$
  \end{enumerate}
\end{proposition}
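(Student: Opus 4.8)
Parts 2 and 4 are the mirror images of parts 1 and 3 under the symmetry that exchanges $s$ with $t$ and reverses the order of the bundles, so I would only prove parts 1 and 3 and invoke symmetry for the rest. The workhorse throughout is Lemma~\ref{lemma:onlyoneinterestingcut}, which tells us (for the strongly affected stretch $W_{a,b}$) that $\lefti(W_a)\subseteq R_s(Z)$ and $\righti(W_b)\subseteq R_t(Z)$, and that all other affected stretches are weakly affected.

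The first thing I would establish is the auxiliary fact that $R_t(Z)\cap W_{\leq a-1}=\emptyset$ (and symmetrically $R_s(Z)\cap W_{\geq b+1}=\emptyset$). Indeed, a path from $t$ in $G-Z$ reaching a vertex of $W_{\leq a-1}$ must cross the minimum $(s,t)$-cut $C'_{a-1}=\delta(W_{\leq a-1})$ along an edge \emph{not} in $Z$, hence it visits a vertex of $\lefti(W_a)$; but $\lefti(W_a)\subseteq R_s(Z)$, so that vertex would be reachable from both $s$ and $t$ in $G-Z$, contradicting that $Z$ is an $(s,t)$-cut. A useful consequence, obtained the same way, is that no edge of $Z_{s,t}$ is incident to $W_{\leq a-1}$, so $Z_{s,t}\subseteq E(G[W_{\geq a}])$; this is exactly where the hypothesis that $Z$ is \emph{special} enters.

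Next I would run a downward induction on the bundle index, maintaining the statement ``$\lefti(W_m)\subseteq R_s(Z)$'' (with $\lefti(W_0)=\{s\}$) and, whenever $W_m$ is unaffected, ``$N[W_m]\subseteq R_s(Z)$''. The base case is $W_{a-1}$: it is unaffected because $W_{a,b}$ is a maximal run of affected bundles, and $\lefti(W_a)\subseteq N[W_{a-1}]$; since $W_{a-1}$ is unaffected its closed neighbourhood lies in one component of $G-Z$, and that component meets $R_s(Z)$, hence equals $R_s(Z)$, giving $N[W_{a-1}]\subseteq R_s(Z)$ and $W_{a-1}\subseteq R_s(Z)$. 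In the inductive step there are two cases. If the next bundle $W_{m-1}$ to the left is unaffected, the same ``single component meeting $R_s(Z)$'' argument (using $\lefti(W_m)\subseteq N[W_{m-1}]\cap R_s(Z)$) pushes the conclusion one bundle further and yields $W_{m-1}\subseteq R_s(Z)$, which is exactly part~1 for that bundle. If instead $W_{m-1}$ lies in a maximal weakly affected stretch $W_{c,d}$ with $d<a$, I must jump over the whole stretch: from the unaffected bundle $W_{d+1}$ on its right (already handled, so $N[W_{d+1}]\subseteq R_s(Z)$) I get $\righti(W_d)\subseteq R_s(Z)$ for free, and it remains to exhibit a single vertex of $\lefti(W_c)$ in $R_s(Z)$ — for then $\lefti(W_c)\subseteq N[W_{c-1}]$ and unaffectedness of $W_{c-1}$ upgrade this to $\lefti(W_c)\cup N[W_{c-1}]\subseteq R_s(Z)$, closing the induction and giving the $\lefti(W_c)\cup\righti(W_d)\subseteq R_s(Z)$ part of part~3. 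To produce that vertex I would use the $\lambda$ edge-disjoint $(s,t)$-flow paths: each of them crosses $W_{c,d}$ along a subpath whose right endpoint is in $\righti(W_d)\subseteq R_s(Z)$, and since $Z_{s,t}$ lives inside $G[W_{\geq a}]$ and (by the auxiliary fact) the stretch contains no vertex of $R_t(Z)$, at least one such crossing subpath avoids $Z$ and therefore certifies its left endpoint, a vertex of $\lefti(W_c)$, to be in $R_s(Z)$.

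Finally, the counting bound $(j-i+1)\le 2|Z\cap W_{i,j}|$ for a maximal weakly affected stretch is a direct consequence of Lemma~\ref{lemma:affectedbundles} applied to the index range $[i,j]$: every bundle there is affected, so the number of affected indices equals $j-i+1$, and the charging argument of that lemma assigns each affected bundle two units of charge to edges of $Z$ associated with the stretch.

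\textbf{Main obstacle.} The delicate step is the jump over a weakly affected stretch $W_{c,d}$ in the inductive step above — specifically, verifying that some flow-path crossing subpath of $W_{c,d}$ carries no edge of $Z$ (equivalently, that $Z$ does not spend an edge on every such subpath inside $W_{\leq a-1}$). The ``no leakage to $t$'' half of this is exactly the auxiliary reachability fact, but controlling where $Z\setminus Z_{s,t}$ can sit relative to the flow paths in $W_{\leq a-1}$ is the real work, and it is here that the confinement $Z_{s,t}\subseteq E(G[W_{\geq a}])$ (hence the special-cut hypothesis) must be exploited carefully.
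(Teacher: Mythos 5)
Your overall framework is correct: the auxiliary reachability fact, the confinement $Z_{s,t}\subseteq E(G[W_{a,b}])$, the ``one vertex in $R_s(Z)$ upgrades to all of $N[W_{c-1}]$ via unaffectedness'' step, and the counting bound via Lemma~\ref{lemma:affectedbundles} are all sound, and items 2 and 4 do follow by symmetry.

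However, the step you flag as the main obstacle is a genuine gap, and the argument you sketch for it is circular. You claim that, because $Z_{s,t}$ is confined to $W_{\geq a}$ and the stretch meets no vertex of $R_t(Z)$, ``at least one crossing subpath avoids $Z$.'' This does not follow from those two facts: if $\lefti(W_c)$ were contained in a third component $K\neq R_s(Z)$ of $G-Z$, then every crossing subpath starts in $K$ and ends in $\righti(W_d)\subseteq R_s(Z)$, so every one of them must cross an edge of $\delta(K)\subseteq Z$. There is no counting contradiction either, since $|Z|$ may well exceed $3\lambda$ while $Z$ remains special. So the claimed existence of a $Z$-avoiding crossing subpath is exactly the conclusion $\lefti(W_c)\subseteq R_s(Z)$ in disguise, and cannot be used to establish it.

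The whole downward induction (and the ``jump'') is in fact unnecessary. Since $Z$ is special, $Z_{s,t}$ is a nonempty $(s,t)$-cut, and by the two confinement facts both endpoints of every edge of $Z_{s,t}$ lie in $W_{a,b}$. Fix any $pq\in Z_{s,t}$ with $p\in R_s(Z)$; then $p\in W_{\geq a}$. As $R_s(Z)$ is connected in $G-Z$, pick a path $\pi$ from $s$ to $p$ inside $R_s(Z)$. For every $i$ with $1\leq i\leq a$ the path $\pi$ goes from $W_{\leq i-1}$ (which contains $s$) to $W_{\geq i}$ (which contains $p$), so it traverses an edge of $C'_{i-1}$ and hence visits a vertex of $\lefti(W_i)$; that vertex lies on $\pi$ and so lies in $R_s(Z)$. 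Thus $\lefti(W_i)\cap R_s(Z)\neq\emptyset$ for all $i\leq a$. Now for an unaffected $W_i$ with $i<a$, the single component containing $N[W_i]$ meets $\lefti(W_{i+1})\cap R_s(Z)$ and hence equals $R_s(Z)$, giving item~1, and the interfaces of a maximal weakly affected stretch $W_{i,j}$ with $j<a$ are contained in $N[W_{i-1}]$ and $N[W_{j+1}]$ for unaffected $W_{i-1},W_{j+1}$, giving the containment part of item~3; the counting part is Lemma~\ref{lemma:affectedbundles} exactly as you say.

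Summary: items 1--4 are proved, but the mechanism is a single global path argument through the nested cuts $C'_{i-1}$, not a bundle-by-bundle induction; the flow-subpath step in your write-up, which you yourself flag as the delicate point, does not close the gap.
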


\subsection{The outer loop of the algorithm}\label{subsection:outerloop}

Our algorithm \algosample consists of an outer loop (to be explained in this section), which is applied first to an input instance $(G,s,t,k,\lambda^*)$ and also to certain instances in recursive calls, and an inner loop, which is applied only to short sequences of bundles. The outer loop part uses a color-coding approach to guess weakly and strongly affected stretches of bundles in $G$, and calls the inner-loop subroutine called \algoshort on the latter. This subroutine (to be described in detail in the following section) then seeks to recursively find an output $(A,\witnessflow)$,
 using the assumption that whenever it is called on a stretch $W_{a,b}$, then either $Z$ is disjoint from the stretch $W_{a,b}$ or $W_{a,b}$ is precisely the unique strongly affected stretch in $G$. 

Each call to our algorithm will return a pair $(A,\witnessflow)$ for the instance in question, where $(A,\witnessflow)$ may or may not be compatible for an arbitrary (unknown) $(s,t)$-cut $Z$.
A crucial observation for the correctness of our algorithm is that any flow-augmentation set guessed for an unaffected stretch of bundles will always be compatible with $Z$.
This allows us to focus our attention in the analysis on the guesses made while processing affected bundles.  This is essential in bounding the success probability purely in terms of $k$.

We will argue that for some sufficiently large constants $c_1 \gg c_2 \gg 0$, \algosample{}($G,s,t,k,\lambda^*$) returns an output $(A,\witnessflow)$ which is, with probability at least $e^{-\probfun(\lambda_G(s,t), k)}$, 
compatible with an (unknown) eligible $(s,t)$-cut
$Z$, where $\probfun(\lambda, k) = (c_1 k - c_2) (1 + \ln k) + c_2 \max(0, k - \lambda)$. 

The main (outer loop) algorithm is shown in Figure~\ref{fig:alg-outer}.
  
\begin{figure}[ht]
\small
\noindent {\bf Algorithm} \algosample{}($G,s,t,k,\lambda^*$)
\begin{enumerate}
\item \label{step:outer-one} If 
it does not hold that $\lambda_G(s,t) \leq \lambda^\ast \leq k$, then 
set $A$ to be $\max(k+1,\lambda^\ast)$ copies of $\{s,t\}$, $\witnessflow$ to be any
$\lambda^\ast$ of these copies, and return $(A,\witnessflow)$.
\item Initialize $A = \emptyset$ and $\witnessflow$ to be a set of $\lambda^\ast$ zero-length
paths starting in $s$.
\item Compute the partition $V=W_0 \cup \ldots \cup W_{q+1}$ of
  $G$ into bundles. 
\item Go into \textsf{single} mode or \textsf{multiple} mode with probability $1/2$ each.
\begin{itemize}
\item In \textsf{single} mode, set $p_\mathrm{blue} = p_\mathrm{red} = 1/2$.
\item In \textsf{multiple} mode, set $p_\mathrm{blue} = 1/k$, $p_\mathrm{red} = 1-1/k$.
\end{itemize}
\item Randomly color each bundle blue or red; blue with probability $p_\mathrm{blue}$ and red with probability $p_\mathrm{red}$.
\item Randomly sample an integer $\lambda^\ast \leq k' \leq k$ as follows:
set $k' = k$ with probability $1/2$ and with remaining probability sample $\lambda^\ast \leq k' < k$ uniformly at random.
\item For every maximal stretch $W_{a,b}$ of bundles colored with
  the same color, do the following in consecutive order starting with $a=0$, and maintaining
  the property that at the begining of the loop 
  $\witnessflow$ is a family of $\lambda^\ast$ edge-disjoint paths in $G+A$
  starting in $s$ and ending in $\lefti(W_a)$:
  \begin{enumerate}
  \item If $a > 0$, then add to $A$ all edges $uv$ for $u,v \in \righti(W_{a-1}) \cup \lefti(W_a)$;
    (We henceforth refer to the edges added in this step as \emph{link edges}.)
  \item If the stretch is colored red and consists of one bundle in \textsf{single} mode, or at least two and at most $2k'$ bundles in \textsf{multiple} mode, then perform the following:
    \begin{enumerate}
    \item Let $G'$ be the graph $G[N[W_{a,b}]]$ with vertices of
      $W_{\leq a-1}$ contracted to a single vertex $s'$ and
      vertices of $W_{\geq b+1}$ contracted to a single vertex $t'$. If $a=0$, and hence $W_{\leq a-1}=\emptyset$, then instead add a new vertex $s'$ and connect it to $s\in W_a$ via $\lambda$ parallel edges $\{s,s'\}$. Similarly, if $b=q+1$ then $W_{\geq b+1}=\emptyset$ and we instead add a new vertex $t'$ and connect it to $t\in W_b$ via $\lambda$ parallel edges $\{t,t'\}$. Observe that $\deg(s')=\deg(t')=\lambda$.
    \item Do a recursive call:
    \begin{itemize}
    \item In \textsf{single} mode, let $(A',\witnessflow') \gets \algoshortone{}(G', s', t', k',\lambda^*)$.
    \item In \textsf{multiple} mode, let $(A',\witnessflow') \gets \algoshort{}(G', s', t', k',\lambda^*)$.
    \end{itemize}
    \item Update $A$ as follows:
    \begin{itemize}
    \item Add to $A$ all edges of $A'$ that are not incident with $s'$ or $t'$.
    \item For every edge $s'v \in A'$, add to $A$ a separate edge $uv$ for each vertex $u\in \righti(W_{\leq a-1})$. 
          If $a=0$ then ignore edges $s's \in A'$ and for each edge $s'v \in A'$ add $sv$ to $A$;
    \item Analogously, for every edge $vt' \in A'$, add to $A$ a separate edge $vw$
          for each vertex $w\in \lefti(W_{b+1})$.
          If $b=q+1$ then ignore edges $tt' \in A'$ and for each edge $vt' \in A'$ add $vt$ to $A$.
    \end{itemize}
    \item Update $\witnessflow$ as follows:
      For every path $P' \in \witnessflow'$, if the first or last edge of $P'$ belongs to $A'$, replace
      it with one of its corresponding edges in $A$, and then 
      pick a distinct path $P \in \witnessflow$
      and append $P'$ at the end of $P$, using a link edge to connect the endpoints of 
      $P$ and $P'$ if necessary.
    \end{enumerate}
  \item Otherwise:
   \begin{enumerate}
   \item Add to $A$, with multiplicity $k+1$, all edges $\{u,w\}$ with $u \in \righti(W_{a-1})$, taking $u=s$ if $a=0$, and $w \in \lefti(W_{b+1})$, taking $w=t$ if $b=q+1$.
   \item Prolong every path $P \in \witnessflow$ with a link edge (if $a > 0$) and an edge of $A$,
   so that $P$ ends in $\lefti(W_{b+1})$, or in $t$ if $b = q+1$. 
   \end{enumerate}
  \end{enumerate}
\end{enumerate}
\caption{The outer loop algorithm}
\label{fig:alg-outer}
\end{figure}

\paragraph{Interface of the inner loop algorithm.}
The inner-loop algorithm expects as input an instance $(G',s',t',k', \lambda')$ that has two additional properties and will return a pair $(A',\witnessflow')$.
A \emph{valid input $(G',s',t',k',\lambda')$ for the inner loop algorithm} has the following properties:
\begin{enumerate}
 \item The graph $G'$ decomposes into bundles $W'_0,\ldots,W'_{q+1}$, with $1\leq q\leq 2k'$, and such that $W'_0=\{s'\}$ and $W'_{q+1}=\{t'\}$.
 If $q = 1$, then we say that the instance is a \emph{single-bundle} instance, otherwise if $q > 1$ it is a \emph{multiple-bundle} instance. 
 \item We have $\lambda_{G'}(s',t')<\lambda' \leq k'$, i.e., the maximum $(s',t')$-flow in $G'$ is lower than the target flow value $\lambda'$ after augmentation. %
 \end{enumerate}
 Furthermore, let $Z'$ be an $(s',t')$-cut in $G'$.  We say that $Z'$ is a \emph{valid cut for $(G',s',t',k',\lambda')$} if the following hold.
 \begin{enumerate}
 \item $Z'$ is an eligible $(s',t')$-cut in $G'$ with $|Z'| = k$ and $|Z'_{s,t}| = \lambda'$;
 \item $Z'$ affects precisely the bundles $W'_1$, \ldots, $W'_q$ in $G'$
 \end{enumerate} 
In the following section we will describe a realization of this interface by two algorithms called \algoshortone and \algoshort with the following success guarantee:
\begin{itemize}
\item 
for a valid single-bundle instance $(G',s',t',k',\lambda')$, the algorithm \algoshortone returns a flow-augmenting set $A'$ with $\lambda_{G'+A'}(s',t') \geq \lambda'$
and an $(s,t)$-flow $\witnessflow'$ in $G+A$ of size $\lambda'$
such that for every valid cut $Z'$, $(A',\witnessflow')$ is compatible with $Z'$ with probability at least $32 \cdot e^{-\probfun(\lambda_{G'}(s',t'), k')}$;
\item 
for a valid multiple-bundle instance $(G',s',t',k',\lambda')$, the algorithm \algoshort returns a flow-augmenting set $A'$ with $\lambda_{G'+A'}(s',t') \geq \lambda'$
and an $(s,t)$-flow $\witnessflow'$ in $G+A$ of size $\lambda'$
such that for every valid cut $Z'$, $(A',\witnessflow')$ is compatible with $Z'$ with probability at least $32(k')^3 \cdot e^{-\probfun(\lambda_{G'}(s',t'), k')}$. 
\end{itemize}

\paragraph{Correctness of the outer loop part.}
We are now ready to prove correctness of the outer loop algorithm \algosample assuming a correct realization of the inner loop algorithm according to the interface stated above.

It is straightforward to verify the invariant stated in the loop:
at every step, $\witnessflow$ is a family of $\lambda^\ast$ edge-disjoint paths in $G+A$,
starting in $s$ and ending in $\lefti(W_a)$. 
It is also straightforward to verify the feasibility of the updates of $\witnessflow$. 
Furthermore, observe that after the last iteration of the loop, all paths of $\witnessflow$
end in $t$. Thus, at the end of the algorithm $\witnessflow$ is indeed a family of $\lambda^\ast$
edge-disjoint paths from $s$ to $t$ in $G+A$.

We now prove that, in a well-defined sense, most edges in the returned set $A$ are compatible with most minimal $(s,t)$-cuts $Z$.

\begin{lemma}\label{lemma:mostedgesarecompatible}
  Let $W_{a,b}$ be a stretch processed by \algosample such that every bundle of the stretch is unaffected by $Z$. 
  Then every edge added to $A$ while processing $W_{a,b}$ is compatible with $Z$. 
\end{lemma}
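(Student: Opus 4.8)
The plan is to inspect each of the two cases in Step~7 of \algosample{} separately and show that in both cases, the only edges added to $A$ while processing an entirely-unaffected stretch $W_{a,b}$ join two vertices lying in a common connected component of $G-Z$. The key structural fact I would rely on is that if every bundle of the stretch $W_{a,b}$ is unaffected by $Z$, then $N[W_{a,b}]$ --- and in particular all of $W_{a-1,b+1}$ touched by the processing --- lies in a single connected component of $G-Z$. Indeed, by definition of ``unaffected'', each $W_i$ with $a \le i \le b$ has $N[W_i]$ inside one component of $G-Z$; since consecutive bundles overlap in their shared interface (each cut $C'_i$ has endpoints in both $W_i$ and $W_{i+1}$, and those endpoints lie in $N[W_i] \cap N[W_{i+1}]$), these single components must coincide, so all of $N[W_{a,b}]$ is in one component $D$ of $G-Z$. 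The preceding and succeeding interfaces $\righti(W_{a-1})$, $\lefti(W_a)$, $\righti(W_b)$, $\lefti(W_{b+1})$ are all contained in $N[W_{a,b}]$, hence in $D$.

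First I would handle the \textbf{link edges} added in Step~7(a): these are all edges $uv$ with $u,v \in \righti(W_{a-1}) \cup \lefti(W_a)$. Since the stretch $W_{a,b}$ is unaffected, $\lefti(W_a) \subseteq N[W_a] \subseteq D$ for the component $D$ above; and $\righti(W_{a-1})$ consists of endpoints of $C'_{a-1}$ on the $W_{a-1}$ side, which are the same vertices (up to the cut $C'_{a-1}$) as $\lefti(W_a)$'s counterparts --- more precisely, $\righti(W_{a-1})$ and $\lefti(W_a)$ are the two endpoint sets of the cut $C'_{a-1}$. The subtlety is that $C'_{a-1}$ itself could contain edges of $Z$, which would place $\righti(W_{a-1})$ and $\lefti(W_a)$ in different components of $G-Z$. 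But the whole point is that if the stretch starting at $W_a$ is processed as unaffected, the previous stretch was also processed, and one checks from the loop structure (and the maximality of monochromatic stretches) that when $W_{a,b}$ is entirely unaffected, the relevant interface vertices $\righti(W_{a-1}) \cup \lefti(W_a)$ do end up in one component: if $W_{a-1}$ is also unaffected the claim is immediate, and if $W_{a-1}$ is affected then $W_a$ being unaffected with $N[W_a] \supseteq \lefti(W_a)$ still forces $\lefti(W_a)$ into one component, while $\righti(W_{a-1}) = \lefti(W_a)$ as vertex sets of the parallel-free cut $C'_{a-1}$ --- here I would use Proposition~\ref{proposition:connectivityofbundles} to argue the cut $C'_{a-1}$ is never entirely in $Z$ under the hypothesis. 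I expect this to be the main obstacle: pinning down exactly why the link edges are safe requires carefully tracking which of $W_{a-1}$, $W_a$ the hypothesis covers, and invoking Lemma~\ref{lemma:onlyoneinterestingcut} / Proposition~\ref{prop:all-nonstrong-stretches} to conclude that an unaffected stretch together with its immediate neighbourhood sits on one side of $Z$.

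Next I would handle the two cases in Step~7(b)/(c). In the ``Otherwise'' branch (Step~7(c)), the edges added are $\{u,w\}$ with $u \in \righti(W_{a-1})$ (or $u=s$) and $w \in \lefti(W_{b+1})$ (or $w=t$); since the entire stretch $W_{a,b}$ is unaffected, $\righti(W_{a-1}) \cup W_{a,b} \cup \lefti(W_{b+1}) \subseteq N[W_{a,b}] \subseteq D$ lies in one component of $G-Z$, so every such edge is compatible, and for the $a=0$ or $b=q+1$ corner cases $s$ (resp.\ $t$) is already in $N[W_0]$ (resp.\ $N[W_{q+1}]$). For the recursive branch (Step~7(b)) --- a red single bundle in \textsf{single} mode or a short red stretch in \textsf{multiple} mode --- the edges added to $A$ are (i) edges of $A'$ not incident to $s'$ or $t'$, which lie among vertices of $N[W_{a,b}] \subseteq D$; (ii) for each $s'v \in A'$, edges $uv$ with $u \in \righti(W_{\le a-1})$ and $v \in N[W_{a,b}]$, again all inside $D$; and (iii) symmetrically for $vt' \in A'$. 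Since $D$ is a single component of $G-Z$, all of these are compatible with $Z$. This exhausts the edges that can be added while processing $W_{a,b}$, completing the proof. I would close by remarking that this lemma is exactly what lets the later analysis ignore all random choices made on unaffected stretches and charge the success probability purely to the $O(k)$ affected bundles.
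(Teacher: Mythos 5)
Your opening paragraph already contains the entire proof, and it is exactly the paper's: (i) since every bundle $W_i$, $a \le i \le b$, is unaffected, each $N[W_i]$ lies in one component of $G-Z$, and consecutive $N[W_i]$, $N[W_{i+1}]$ overlap, so all of $N[W_{a,b}]$ lies in one component $D$; (ii) every edge added to $A$ while processing $W_{a,b}$ --- link edges, recursion edges of kind (i)--(iii), and the ``Otherwise'' edges --- has both endpoints in $N[W_{a,b}] \subseteq D$. In particular $\righti(W_{a-1}) \subseteq N[W_a] \subseteq N[W_{a,b}]$ and $\lefti(W_{b+1}) \subseteq N[W_b] \subseteq N[W_{a,b}]$, which you state.

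However, your second paragraph then re-derives compatibility of the link edges from scratch and, in doing so, both overcomplicates and misstates things. The claim ``$\righti(W_{a-1}) = \lefti(W_a)$ as vertex sets of the parallel-free cut $C'_{a-1}$'' is false: by definition $\righti(W_{a-1}) \subseteq W_{a-1}$ and $\lefti(W_a) \subseteq W_a$ are the two \emph{opposite} endpoint sets of $C'_{a-1}$, so they are disjoint. The ``subtlety'' you worry about ($C'_{a-1}$ containing edges of $Z$) is a red herring you have already resolved: since $\righti(W_{a-1}) \subseteq N[W_a]$ and $W_a$ is unaffected, $\righti(W_{a-1}) \cup \lefti(W_a) \subseteq D$ regardless of what $Z$ does to $C'_{a-1}$. (In fact, since $Z$ is eligible, both endpoints of each $C'_{a-1}$-edge being in $D$ forces $C'_{a-1} \cap Z = \emptyset$, but that is not needed.) There is no case split on whether $W_{a-1}$ is affected, and no need to invoke Lemma~\ref{lemma:onlyoneinterestingcut}, Proposition~\ref{prop:all-nonstrong-stretches}, or Proposition~\ref{proposition:connectivityofbundles}. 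Excise that paragraph and keep only the first and last, and your proof coincides with the paper's.
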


\begin{proof}
  Note that every vertex of $N[W_{a,b}]$ is in the same connected component of $G-Z$.
  Hence it suffices to observe that every edge added to $A$ in this phase has both endpoints in $N[W_{a,b}]$. 
\end{proof}

Now we are set to prove correctness of the outer loop algorithm assuming a correct realization of the inner-loop interface.

\begin{lemma}\label{lemma:outerloop:correct}
Assume that an algorithm \algoshort correctly realizes the above interface
such that for every valid single-bundle (multiple-bundle) instance $(G',s',t',k',\lambda')$ with $k' \leq k$, the returned pair $(A',\witnessflow')$ is compatible with a fixed valid cut $Z'$
with probability $32 e^{-\probfun(\lambda_{G'}(s',t'),k')}$ ($32(k')^3 e^{-\probfun(\lambda_{G'}(s',t'),k')}$). 
Then for any $(G,s,t,k,\lambda^*)$, \algosample returns an $(s,t)$-flow-augmenting set $A$ such that $\lambda_{G+A}(s,t)\geq \lambda^*$ and for any eligible $(s,t)$-cut $Z$ in $G$ of size $k$ and with $|Z_{s,t}|=\lambda^*$, the returned pair $(A,\witnessflow)$ is compatible with $Z$ with probability at least $e^{-\probfun(\lambda_{G}(s,t), k)}$.
\end{lemma}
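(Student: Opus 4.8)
The plan is to prove Lemma~\ref{lemma:outerloop:correct} in two parts: first the deterministic guarantee that $\lambda_{G+A}(s,t)\ge\lambda^*$ and that $(A,\cP)$ is a well-formed output regardless of the random choices, and then the probabilistic guarantee that $(A,\cP)$ is compatible with a fixed eligible cut $Z$ with the claimed probability. The deterministic part follows from the loop invariant already observed before the lemma (that $\cP$ stays a family of $\lambda^*$ edge-disjoint $s$-to-$\lefti(W_a)$ paths in $G+A$), together with Step~\ref{step:outer-one}: either we short-circuit and output $\max(k+1,\lambda^*)$ copies of $\{s,t\}$, which trivially gives flow $\ge\lambda^*$, or we run the loop and at termination all paths of $\cP$ end in $t$, witnessing $\lambda_{G+A}(s,t)\ge\lambda^*$. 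I would state this briefly and move on.

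The core is the probabilistic analysis. Fix an eligible $(s,t)$-cut $Z$ with $|Z|=k$, $|Z_{s,t}|=\lambda^*$. By Lemma~\ref{lemma:onlyoneinterestingcut} there is exactly one maximal stretch that $Z$ strongly affects; call it $W_{a^*,b^*}$, and by Proposition~\ref{prop:all-nonstrong-stretches} all other affected bundles fall into weakly affected stretches, each of length at most twice the number of $Z$-edges it contains, lying entirely on the $s$-side (if before $a^*$) or the $t$-side (if after $b^*$), with their interfaces uniformly on that side. I would then describe the "good event" for the coloring: we want (i) in \textsf{multiple} mode, every bundle of $W_{a^*,b^*}$ and of every weakly affected stretch colored blue, and the unaffected bundles immediately flanking each such stretch colored red, so that each maximal blue stretch is exactly one affected stretch and is of length $\le 2k'$; and (ii) we want the sampled $k'$ to equal $|Z_{s,t}|$... — actually $k'$ should be sampled so that it matches $|Z|$ restricted appropriately; I would set $k' = |Z|$ when $Z$ is entirely inside the strongly affected stretch and handle the general case by having $k'$ bound the number of affected bundles; the precise bookkeeping is that $k'$ must be $\ge$ the number of bundles in the strongly affected stretch and $=\lambda_{G+A}$-target inside it. With $|Z|=k$ edges distributed among $O(k)$ affected bundles, the coloring event has probability $2^{-O(k)}$ in \textsf{multiple} mode (each of the $O(k)$ affected/flanking bundles gets its prescribed color, blue w.p. $1/k$ contributes the $k^{-O(k)} = 2^{-O(k\log k)}$ loss, red w.p. $1-1/k$ is $\Omega(1)$ each), and the sampling of $k'$ costs a further $1/k$ factor; the single-bundle special case is covered by \textsf{single} mode with the $1/2$ mode-selection factor.

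Given the good coloring and correct $k'$, I would invoke Lemma~\ref{lemma:mostedgesarecompatible}: every edge added while processing an \emph{unaffected} stretch (which includes all link edges between unaffected regions and all clique edges added in the "Otherwise" branch for red unaffected stretches) has both endpoints in $N[W_{a,b}]$ and hence is compatible with $Z$, since $N[W_{a,b}]$ lies in one component of $G-Z$. For each weakly affected stretch, Proposition~\ref{prop:all-nonstrong-stretches} tells us the whole stretch's closed neighbourhood is on one side of $Z$ (it contains no $Z_{s,t}$-edge, only edges of $Z$ internal to that side), so — I need to check this — the edges added there by the recursive \algoshort call, pulled back through the contraction, are again compatible; this requires that the recursion, called on the contracted graph $G'$ with the (empty, since $Z$ does not strongly affect this stretch) valid cut, returns a set whose edges stay within $N[W_{a,b}]$, which holds because when $Z'=\emptyset$ in a valid instance the compatibility condition is vacuous and the construction only adds edges inside the stretch. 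For the unique strongly affected stretch $W_{a^*,b^*}$, the contraction produces $G'$ with $s',t'$ as required, $Z\cap E(G')$ maps to a valid cut $Z'$ for $(G',s',t',k',\lambda^*)$ (eligibility and the "affects precisely $W'_1,\dots,W'_q$" conditions follow from $W_{a^*,b^*}$ being the strongly affected stretch and from Lemma~\ref{lemma:affectedbundles}/Proposition~\ref{prop:all-nonstrong-stretches}), so by hypothesis the recursive call returns $(A',\cP')$ compatible with $Z'$ with probability $32 e^{-g(\lambda_{G'}(s',t'),k')}$ (single) or $32(k')^3 e^{-g(\lambda_{G'}(s',t'),k')}$ (multiple); translating $A'$ back through the contraction and the interface-splicing of $\cP'$ into $\cP$ preserves compatibility because the translation only replaces an edge $s'v$ by edges $uv$ with $u\in\righti(W_{\le a^*-1})\subseteq R_s(Z)$ and all such $u,v$ lie in one component of $G-Z$, and symmetrically at $t'$.

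Finally I would assemble the probability bound. The total success probability is at least (mode-selection: $1/2$) $\times$ (coloring good: $k^{-O(\#\text{blue bundles})}\cdot\Omega(1)^{\#\text{red flanks}}$) $\times$ ($k'$ sampled correctly: $\ge 1/(2k)$) $\times$ (recursive call succeeds: $32 e^{-g(\lambda',k')}$ or $32(k')^3 e^{-g(\lambda',k')}$), and I must check this is $\ge e^{-g(\lambda_G(s,t),k)}$. Here the key point — this is the step I expect to be the main obstacle, namely getting the constants to close — is the choice of $c_1\gg c_2\gg 0$ in $g(\lambda,k)=(c_1k-c_2)(1+\ln k)+c_2\max(0,k-\lambda)$: the factor $32(k')^3$ (or $32$) from the recursion must absorb the constant $32$ we need plus the $(k')^3$ slack, and the $(c_1k-c_2)(1+\ln k)$ term must dominate the $k^{-O(k)}=e^{-O(k\log k)}$ coloring loss while the $c_2\max(0,k-\lambda)$ term must pay for the case analysis where the recursive instance has $\lambda_{G'}(s',t')$ possibly much smaller than $\lambda_G(s,t)$ but $k'$ possibly close to $k$ — precisely the "progress" discussion from Section~\ref{sec:overview}, where if some component inherits the full budget $k'=k$ then necessarily $\lambda$ strictly increased, so the $(k-\lambda)$-potential drops and pays. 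I would organize the inequality as: $g(\lambda_G(s,t),k) \ge (\text{coloring and sampling loss, in }\log) + g(\lambda_{G'}(s',t'),k') - \log 32 - \log(k')^3$, verify it case by case on whether $k'<k$ (drop in the $(c_1k-c_2)(1+\ln k)$ term with $c_1$ large enough to dominate $O(k\log k)$) or $k'=k$ (then $\lambda_{G'}(s',t') > \lambda_G(s,t)$, drop in the $c_2\max(0,k-\lambda)$ term), and conclude. The routine but lengthy part is checking the link-edge and interface-splicing bookkeeping keeps $\cP$ a valid flow and keeps all added edges inside the relevant $N[W_{a,b}]$ blocks; I would relegate that to the verification of the loop invariant and Lemma~\ref{lemma:mostedgesarecompatible} rather than redo it.
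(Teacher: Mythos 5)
Your coloring scheme is the opposite of the paper's and, more importantly, is symmetric where the paper's is crucially asymmetric — and this is not a cosmetic difference; it breaks the argument. You color every affected stretch (the strongly affected one $W_{a^*,b^*}$ \emph{and} every weakly affected one) the same color, with the flanking unaffected bundles getting the other color. The paper instead colors the strongly affected stretch $W_{a,b}$ \emph{red} (so it becomes a maximal red stretch and receives the recursive call in step (b)), while $W_{a-1}$, $W_{b+1}$, and \emph{every other affected bundle} are colored blue (so weakly affected stretches are absorbed into blue stretches and get the clique bypass of step (c)). This distinction is the whole point: the two kinds of affected stretches must be handled by two different mechanisms, and only the coloring can route them correctly.

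Your scheme therefore fails either way. If recursion happens on the color you give the affected stretches, then weakly affected stretches get recursed, and your justification for why that is safe is wrong: you write ``the whole stretch's closed neighbourhood is on one side of $Z$,'' but Proposition~\ref{prop:all-nonstrong-stretches} only asserts that $\lefti(W_i) \cup \righti(W_j)$ lie on one side. An eligible (not necessarily star) cut $Z$ may disconnect $G$ into more than two components, and a weakly affected stretch $W_{i,j}$ may well contain vertices of a third component; $Z \cap E(W_{i,j})$ is nonempty (that is what ``affected'' means) yet is not an $(s',t')$-cut in the contracted subinstance, so there is no valid cut $Z'$ and the inner-loop guarantee is vacuous — \algoshort{} is free to return edges straddling components of $G-Z$ inside the stretch. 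If instead the clique bypass happens on that color, then $W_{a^*,b^*}$ would be bypassed: the clique between $\righti(W_{c-1})$ and $\lefti(W_{d+1})$ would connect $R_s(Z)$ to $R_t(Z)$ with $k+1$ parallel edges, destroying $Z$ as an $(s,t)$-cut. The probability accounting is also affected: coloring the entire strongly affected stretch (up to $2k'\le 2k$ bundles) with the rare color at rate $1/k$ costs $k^{-\Theta(k)}$, which cannot be charged to a potential drop when $k'=k$ and only $\lambda$ increases (the paper pays only a constant factor for $W_{a,b}$ being red, and $k^{-O(k-k')}$ for the blue bundles, which is absorbed by the $g$-potential). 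Finally, you remain uncertain about $k'$; the paper fixes $k' = |Z \cap W_{a,b}|$, the number of $Z$-edges inside the strongly affected stretch, which is what makes both the length bound $\ell \le 2k'$ for the red stretch and the budget-drop charge work.
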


\begin{proof}
  The lemma holds essentially vacuously if \algosample{}$(G,s,t,k,\lambda^*)$ stops at step~\ref{step:outer-one}. 
  Hence we assume $\lambda \leq \lambda^* \leq k$. 
  Since $G$ is connected, $\lambda \geq 1$, hence $k \geq 1$.

We first prove that all calls to \algoshort or \algoshortone are made for valid instances $(G',s',t',k,\lambda^*)$. Let $(G',s',t',k,\lambda^*)$ be an instance on which \algoshort or \algoshortone is called and let $W_{a,b}$ be the stretch that the call corresponds to.  It can be verified that $G'$, relative to minimum $(s',t')$-cuts, decomposes into bundles $\{s'\},W_a,\ldots,W_b,\{t'\}$. A key point here is that $s'$ and $t'$ are both incident with precisely $\lambda$ edges in $G'$, and $\lambda_{G'}(s',t')=\lambda$.  This makes $\delta(s')$ the unique closest minimum $(s',t')$-cut. From this point on, the sequence of closest minimum $(s',t')$-cuts that define blocks and bundles is identical to ones between the blocks that form bundles $W_a,\ldots,W_b$ in $G$. Clearly, $G'[W_a\cup\ldots W_b]\cong G[W_a\cup\ldots\cup W_b]$ (canonically) so we arrive at the same decomposition into bundles. At the end, $\delta(t')$ can be seen to be final closest minimum $(s',t')$-cut that arises when computing blocks and bundles for $(G',s',t')$, using a symmetric argument to the one for $\delta(s')$.

Now, we show the compatibility property. Let $Z$ be any $\lambda^*$-eligible $(s,t)$-cut of size $k$.
By Lemma~\ref{lemma:onlyoneinterestingcut}, there is a unique strongly affected stretch $W_{a,b}$, and by Lemma~\ref{lemma:affectedbundles} at most $2|Z|$ bundles are affected in total.  Let $\ell=b-a+1$ be the number of bundles in $W_{a,b}$ and let $Z' = Z \cap W_{a,b}$. We have $\ell \leq 2|Z'|$ and $\lambda^\ast \leq |Z'| \leq k$.

We are interested in the following success of the random choices made by the algorithm:
the algorithm goes into mode \textsf{single} if $a=b$ and into mode \textsf{multiple} otherwise, 
$k' = |Z'|$, and the coloring of bundles in the loop is such that every bundle of $W_{a,b}$ is red, while $W_{a-1}$, $W_{b+1}$, and every other affected bundle is blue.
Since there are at most $2(k-|Z \cap W_{a,b}|)$ affected bundles that are not in $W_{a,b}$,
the above success happens with probability at least 
\begin{itemize}
\item if $a = b$ and $k = |Z'|$: $2^{-5}$;
\item if $a = b$ and $k > |Z'|$: 
$$2^{-5} (k - \lambda^\ast)^{-1} 2^{-2(k-|Z'|)} \geq 2^{-5} k^{-1} 2^{-2(k-|Z'|)};$$
\item if $a < b$:
\begin{align*}
&(k - \lambda^\ast + 1)^{-1} \cdot k^{- 2 - 2(k-|Z'|)} \cdot (1-1/k)^{\ell}\\
& \quad  \geq k^{-3-2(k-|Z'|) } \cdot (1-1/k)^{2k} \geq 2^{-4} k^{-3-2(k-|Z'|) }.
\end{align*}
\end{itemize}
Henceforth we assume that the above success indeed happens.

If this is the case, then for every two consecutive bundles $W_i$ and $W_{i+1}$
of different colors, either $W_i$ or $W_{i+1}$ is unaffected. In particular, 
all endpoints of the edges of $E(W_i,W_{i+1})$ are in the same connected component of $G-Z$.
Thus, all link edges added to $A$ are compatible with $Z$.

Let us now consider the processing of some maximal monochromatic stretch $W_{c,d}$ other than $W_{a,b}$.  If $W_{c,d}$ is red, then by assumption on the coloring it is a stretch of unaffected bundles, and any edges added are compatible with $Z$ by Lemma~\ref{lemma:mostedgesarecompatible}. Furthermore, any flow $\witnessflow'$ does not intersect $Z$, so the edges appended in the paths of $\witnessflow$ are disjoint with $Z$.

If $W_{c,d}$ is red, then we claim that $\lefti(W_c) \cup \righti(W_d)$ are contained in the same connected component in $G-Z$.  Indeed,
  by assumption on the coloring, any affected bundle in $W_{c,d}$ is contained in some weakly affected stretch $W_{c',d'}$ where the stretch is contained in $W_{c,d}$ in its entirety.  By Prop.~\ref{prop:all-nonstrong-stretches} the endpoints of such a stretch are contained in the same component of $G-Z$, as are the endpoints of any stretch of unaffected bundles. The claim follows. Thus the edges added by \algosample for $W_{c,d}$ are compatible with $Z$. Furthermore, in this case all edges appended to the paths of $\witnessflow$ are from $A$.

Now consider the strongly affected stretch $W_{a,b}$.
Observe that \algosample will make a recursive call to \algoshort or \algoshortone for this stretch; let the resulting instance be $(G',s',t',k',\lambda^*)$.
Note that $Z'$ are the edges of $Z$ contained in $G'$ and that $Z'$ is a valid cut for $(G',s',t',k',\lambda^*)$. Furthermore, $\lambda = \lambda_G(s,t) = \lambda_{G'}(s',t')$.
Indeed, by Lemma~\ref{lemma:onlyoneinterestingcut} $\lefti(W_a) \subseteq R_s(Z)$ and $\righti(W_b) \subseteq R_t(Z)$, and since $W_{a-1}$ (if any) and $W_{b+1}$ (if any) are unaffected, these are entirely contained in $R_s(Z)$ respectively $R_t(Z)$ as well. 
Hence $Z'$ is an eligible $(s',t')$-cut in $G'$. Finally, $|Z'| = k'$ and  $|Z'_{s',t'}|=|Z_{s,t}|=\lambda^*$, and by assumption $Z'$ affects every bundle $W_i$, $1 \leq i \leq q$, of $G'$.

Thus, since \algoshort and \algoshortone implement the inner-loop interface, with probability at least $32(k')^3 e^{-\probfun(\lambda, k')}$ in case of \algoshort
and $32 e^{-\probfun(\lambda, k')}$ in case of \algoshortone, it returns a pair $(A',\witnessflow')$ that is compatible with $Z'$ in $G'$.

We verify that the edges added to $A$ for $A'$ are compatible with $Z$.
The connected components of $G[W_{a-1,b+1}]-Z$ are the same as those of $G'-Z'$ except that the component of $s'$ has $W_{a-1}$ in place of $s'$, and the component of $t'$ contains $W_{b+1}$ instead of $t'$ (respectively, are identical but are missing $s'$ and $t'$ if $a=0$ and/or $b=q+1$). Thus, the only edges in $A$ that could, in principle, be incompatible with $Z$ are those that were added in place of edges in $A'$ that are incident with $s'$ or $t'$. But in all cases, the endpoint replacing $s'$ respectively $t'$ is contained in $R_s(Z)$ respectively $R_t(Z)$, implying that they are compatible with $Z$ in $G$ if they are compatible with $Z'$ in $G'$.

For the family of paths $\witnessflow$, note that if $(A',\witnessflow')$ is compatible with $Z'$,
    then for every $P' \in \witnessflow'$, the path $P'$ intersects $Z'$ in precisely one edge and that edge belongs to $Z'_{s,t} = Z_{s,t}$. Hence, by appending $P'$ to a path $P \in \witnessflow$ we add one
intersection of $P$ with $Z$ and that intersection belongs to $Z_{s,t}$.
Since there is only one strongly affected stretch and in all other cases the edges appended
to the paths of $\witnessflow$ are disjoint with $Z$, $\witnessflow$ is a witnessing flow for $Z$ in $G+A$ as desired. 

Furthermore, the existence of $\witnessflow$ implies that $\lambda_{G+A}(s,t) \geq |\witnessflow| = \lambda^\ast$.

In summary, \algosample produces a pair $(A,\witnessflow)$ that is compatible with $Z$ with probability at least (assuming $c_1 \geq 5$):
  \begin{itemize}
  \item if $a = b$ and $k = |Z'|$: 
  $$2^{-5} \cdot 32 \cdot e^{-\probfun(\lambda, k)} = e^{-\probfun(\lambda, k)};$$
  \item if $a = b$ and $k > |Z'|$: 
  $$2^{-5} k^{-1} 2^{-2(k-k')} e^{-\probfun(\lambda, k')} \geq e^{-5-\ln k -2(k-k')} e^{c_2(k-k')(1+\ln k)} e^{-\probfun(\lambda, k')} \geq e^{-\probfun(\lambda, k')};$$
  \item if $a < b$:
\begin{align*}
& \frac{1}{16} k^{-3-2(k-k') } \cdot 16(k')^3 e^{-\probfun(\lambda, k')} \\
& \quad \geq e^{-\probfun(\lambda, k)} \cdot k^{c_1(k-k')} \cdot (k')^3 \cdot k^{-3-2(k-k')} \\
& \quad \geq e^{-\probfun(\lambda, k)} \cdot k^{(c_1-2)(k-k')} \cdot (k'/k)^3 \\
& \quad \geq e^{-\probfun(\lambda, k)}.
\end{align*}
\end{itemize}
This finishes the proof of the lemma.
\end{proof}

\subsection{Cut splits and the inner loop}\label{subsection:innerloop}

\subsubsection{Single-bundle case}\label{ss:inner-single}

\begin{figure}[ht]
\small
  \noindent {\bf Algorithm} \algoshortone{}$(G,s,t,k,\lambda^*)$
  \begin{enumerate}
  \item \label{step:inner-out} If $(G,s,t,k,\lambda^*)$ is not a valid input,
  or 
it does not hold that $\lambda_G(s,t) \leq \lambda^\ast \leq k$, then 
set $A$ to be $\max(k+1,\lambda^\ast)$ copies of $\{s,t\}$, $\witnessflow$ to be any
$\lambda^\ast$ of these copies, and return $(A,\witnessflow)$.
  \item Let $V=W_0 \cup W_1  \cup W_2$ be the partition of $G$ into bundles.
  \item If $W_1$ is a connected bundle:
    \begin{enumerate}
    \item Let $A_0=\delta(s) \cup \delta(t)$.
    \item Compute $(A,\witnessflow) \gets$\algosample{}$(G+A_0, s, t, k, \lambda^*)$.
    \item Return $(A_0 \cup A, \witnessflow)$. 
    \end{enumerate}
  \item Otherwise: %
    \begin{enumerate}
    \item Let $W_1=W_1^{(1)} \cup \ldots \cup W_1^{(c)}$ be the partition of $G[W_1]$ into connected components, and    
    for each $i \in [c]$ let $\lambda_i$ be the amount of $(s,t)$-flow routed through $W_1^{(c)}$; i.e.,
    $\lambda=\lambda_1+\ldots+\lambda_c$ where $\lambda_i > 0$ for each $i \in [c]$
  \item Randomly sample partitions $\lambda^*=\lambda_1^*+\ldots+\lambda_c^*$ and $k = k_1 + \ldots k_c$ 
  such that $\lambda_i \leq \lambda_i^* \leq k_i$ for each $i \in [c]$.
  \item For every $i \in [c]$,
    let $G^{(i)}=G[W_1^{(i)} \cup \{s,t\}]$ and
    compute $(A_i, \witnessflow_i) \gets $\algosample{}$(G^{(i)},s,t,k_i,\lambda_i^*)$.
  \item Return $(A := \bigcup_{i=1}^c A_i, \witnessflow := \bigcup_{i=1}^c \witnessflow_i)$.
  \end{enumerate}
  \end{enumerate}
  \caption{Inner loop: Algorithm for a single bundle}
  \label{fig:alg-short-one}
\end{figure}

We will now describe an algorithm \algoshortone that realizes the first half of the inner-loop interface from the previous section.
Given a valid single-bundle instance $(G,s,t,k,\lambda^*)$ where $G$ decomposes into bundles $W_0\cup W_1 \cup W_2$, $W_0=\{s\}$ and $W_{2}=\{t\}$,
it will run in (probabilistic) polynomial time and always return
a $\lambda^\ast$-flow augmenting set $A$.
Moreover, for each $(s,t)$-cut $Z$ that is valid for $(G,s,t,k,\lambda^*)$,
the set $A$ is compatible with $Z$ with probability at least $32 e^{-\probfun(\lambda_G(s,t), k)}$.
We call $W_0=\{s\}$ and $W_{2}=\{t\}$ trivial bundles, $W_1$ is the non-trivial bundle. 
The algorithm is given in Figure~\ref{fig:alg-short-one}.

A few remarks are in place. First, if the algorithm exists 
at Step~\ref{step:inner-out}, then no valid cut $Z$ exists and we can deterministically
output a trivially correct answer. 
Second, sampling of values $(\lambda_1^*,\ldots,\lambda_c^*,k_1,\ldots,k_c)$ does not need to be uniform, but we require that each valid output
$(\lambda_1^*,\ldots,\lambda_c^*,k_1,\ldots,k_c)$ is sampled with probability at least $k^{-2c}$.
Note that there are at most $k^{2c}$ valid outputs.
This can be achieved by, e.g., sampling each $\lambda_i^\ast$ and $k_i$ uniformly at random from $\{1,2,\ldots,k\}$
and, if the sampled values do not satisfy the requirements, return one fixed partition instead. 

Let us now analyse the case when $W_1$ is connected.
  
\begin{lemma} \label{lemma:short-connected}
  Let $(G,s,t,k,\lambda^*)$ and $W_1$ be as above, and let $Z'$ be a valid cut for $(G,s,t,k,\lambda^*)$. 
  If $W_1$ is a connected bundle,
  then $\delta(s) \cup \delta(t)$ is a $(\lambda_G(s,t)+1)$-flow-augmenting set compatible with $Z'$.\end{lemma}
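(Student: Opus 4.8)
I would prove Lemma~\ref{lemma:short-connected} directly from the structure of a single connected bundle together with the definition of a \emph{valid cut}. Recall that a valid cut $Z'$ for $(G,s,t,k,\lambda^*)$ is, in particular, an eligible $(s,t)$-cut that affects precisely the bundles $W'_1,\ldots,W'_q$; here $q=1$, so $Z'$ affects $W_1$. The two things to establish are: (i) $A_0 := \delta(s)\cup\delta(t)$ is compatible with $Z'$, i.e.\ every edge of $A_0$ has both endpoints in one connected component of $G-Z'$; and (ii) $A_0$ is $(\lambda_G(s,t)+1)$-flow-augmenting, i.e.\ $\lambda_{G+A_0}(s,t)\ge\lambda+1$ where $\lambda=\lambda_G(s,t)$.

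\medskip

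For (i): by property~\ref{probsample:propertyST} of eligible cuts (which we arranged to hold), $Z'$ contains no edge incident with $s$ or $t$; but $A_0$ consists only of copies of edges incident with $s$ or $t$ (namely, copies of edges already in $\delta_G(s)\cup\delta_G(t)$). So for each $uv\in A_0$ with, say, $u=s$, the edge $sv\in E(G)\setminus Z'$, hence $s$ and $v$ lie in the same component of $G-Z'$; likewise for $t$. Thus every edge of $A_0$ is compatible with $Z'$, and in fact $A_0$ does not even touch $Z'$.

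\medskip

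For (ii): this is where I expect the only real work. Since $W_1$ is a \emph{connected} bundle, by the definition of bundles (Subsection~\ref{subsection:blocksandbundles}) it is a single block $V_j$ with $G[V_j]$ connected, and there is only one block strictly between $C_0'=C_0=\delta(s)$-side cut and $C_q'=C_p=\delta(t)$-side cut — in the single-bundle valid input we have $W_0=\{s\}$, $W_1$, $W_2=\{t\}$, with $s$ incident to exactly $\lambda$ edges (all into $W_1$) and likewise $t$. The key point is that since $W_1$ is one connected block, $G$ has \emph{no} minimum $(s,t)$-cut other than $\delta(s)$ and $\delta(t)$: a minimum cut strictly between them would, via the closest-cut construction, produce another block, contradicting that $W_1$ is a single block. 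Now add $A_0$: turning $\delta(s)$ and $\delta(t)$ into ``undeletable'' edges (infinite capacity, or $k+1$ parallel copies) removes exactly the cuts $\delta(s)$ and $\delta(t)$ as candidates of size $\le k$. I would argue that $\lambda_{G+A_0}(s,t)>\lambda$ by Menger/max-flow duality: any $(s,t)$-cut in $G+A_0$ of size $\le\lambda$ must avoid all of $A_0$, hence must be an $(s,t)$-cut in $G-(\delta_G(s)\cup\delta_G(t))$ that still separates $s$ from $t$ in $G$; but $s$ has all its $G$-edges in $\delta_G(s)$ and $t$ in $\delta_G(t)$, and after making those uncuttable, separating $s$ from $t$ requires cutting $W_1$ internally. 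Since $G[W_1]$ is connected and carries $\lambda$ edge-disjoint $(s,t)$-subpaths, any such internal cut has size $\ge\lambda$; combined with the fact that there is no minimum cut of size $\lambda$ in $G$ other than $\delta(s),\delta(t)$, every $(s,t)$-cut in $G+A_0$ disjoint from $A_0$ has size $\ge\lambda+1$. Hence $\lambda_{G+A_0}(s,t)\ge\lambda+1$, as required. A cleaner way to package the counting: let $R=R_s(Z')$ (well-defined since $Z'$ is a minimal — actually eligible — cut with $\delta(R)\supseteq Z'_{s,t}$); then contracting $W_{\le 0}=\{s\}$ and $W_{\ge 2}=\{t\}$ gives nothing new here, and one observes $\lambda_{G+A_0}(s,t)$ equals the min number of internal edges of $W_1$ separating $\partial(s)$-attachments from $\partial(t)$-attachments, which is $>\lambda$ precisely because $W_1$ is a connected block (no mincut of value $\lambda$ lives strictly inside).

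\medskip

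\textbf{Main obstacle.} The compatibility half is essentially immediate from property~\ref{probsample:propertyST}. The flow-augmentation half hinges on the one structural fact I would want to state carefully: \emph{a connected bundle $W_1$ in a single-bundle instance is a single block, and therefore $G$ has no minimum $(s,t)$-cut with an endpoint in $W_1$}; equivalently, every $(s,t)$-cut of size $\lambda$ in $G$ equals $\delta(s)$ or $\delta(t)$. Granting that (it follows from the construction of blocks/bundles and Proposition~\ref{proposition:uniqueclosestmincut}, together with the intermediate-bundle normalization $\deg(s)=\deg(t)=\lambda$), the inequality $\lambda_{G+A_0}(s,t)\ge\lambda+1$ drops out by the max-flow/min-cut duality recalled in the Preliminaries. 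I would spend the bulk of the write-up making that structural claim precise and deriving the strict inequality from it; everything else is bookkeeping.
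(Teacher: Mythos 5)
Your proposal is correct and follows essentially the same two-step argument as the paper's own proof: compatibility follows because property~\ref{probsample:propertyST} forces $Z'\cap(\delta(s)\cup\delta(t))=\emptyset$, and the flow increase follows because a single connected bundle is a single block, so any $(s,t)$-cut of size $\lambda_G(s,t)$ in $G+A_0$ would have to be disjoint from $\delta(s)\cup\delta(t)$ in $G$, contradicting the block structure. The only small imprecision is your intermediate phrasing ``every $(s,t)$-cut of size $\lambda$ in $G$ equals $\delta(s)$ or $\delta(t)$''---this is stronger than what is needed or used (min cuts could mix edges from $\delta(s)$ with internal edges), and the correct and sufficient claim, which your argument actually exploits, is that no min cut is disjoint from $\delta(s)\cup\delta(t)$.
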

\begin{proof}
  Let $A=\delta(s) \cup \delta(t)$.  Since $Z'$ is a valid cut, $Z \cap A = \emptyset$
  and $A$ is compatible with $Z'$.  Furthermore, if $W_1$ is a connected bundle,
  then it consists of a single block. Assume for a contradiction that $G+A$ has an $(s,t)$-cut $C$ of size $\lambda_G(s,t)$.  Then $C \cap A = \emptyset$, and $C$ is an $(s,t)$-min cut in $G$ disjoint from $\delta(s) \cup \delta(t)$.  This contradicts the assumption that $W_1$ a block. Thus every $(s,t)$-min cut in $G$ intersects $\delta(s) \cup \delta(t)$ in at least one edge $e$.  Since $A$ contains a copy of $e$, $C$ is no longer an $(s,t)$-cut in $G+A$.  Hence $G+A$ has no $(s,t)$-cuts of size $\lambda_G(s,t)$, and $\lambda_{G+A}(s,t)>\lambda_G(s,t)$. %
\end{proof}

\begin{lemma} \label{lemma:short-one}
  Assume that \algosample{} is correct for all inputs $(G',s',t',k',\lambda')$
  where either $k'<k$ or $k'=k$ but $\lambda_{G'}(s',t') > \lambda_G(s,t)$,
  with a success probability of at least $e^{-\probfun(\lambda_{G'}(s',t'), k')}$ for any eligible $(s,t)$-cut $Z$.
  Then \algoshortone{}($G,s,t,k,\lambda^*$) is correct, with a success
  probability of at least $32 e^{-\probfun(\lambda_G(s,t), k)}$.
\end{lemma}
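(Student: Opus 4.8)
The proof is a straightforward case analysis mirroring the structure of algorithm \algoshortone{}, combining the two lemmas just established (Lemma~\ref{lemma:short-connected} for the connected case, and the partition‑sampling argument for the disconnected case). First I would dispose of the trivial exit: if the algorithm stops at Step~\ref{step:inner-out}, then $(G,s,t,k,\lambda^*)$ is not a valid input, hence by definition no valid cut $Z'$ exists, and every claim is vacuous; the returned pair trivially has $\lambda_{G+A}(s,t)\geq\lambda^*$. So assume from now on $(G,s,t,k,\lambda^*)$ is a valid single‑bundle instance with bundles $W_0=\{s\}$, $W_1$, $W_2=\{t\}$, $\lambda_G(s,t)=\lambda<\lambda^*\leq k$, and fix a valid cut $Z'$ with $|Z'|=k$, $|Z'_{s,t}|=\lambda^*$, $Z'$ affecting $W_1$.

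\emph{Connected case.} If $W_1$ is a connected bundle, the algorithm sets $A_0=\delta(s)\cup\delta(t)$ and recurses via \algosample{}$(G+A_0,s,t,k,\lambda^*)$. By Lemma~\ref{lemma:short-connected}, $A_0$ is compatible with $Z'$ and $\lambda_{G+A_0}(s,t)\geq\lambda+1$. Thus $Z'$ is still an eligible $(s,t)$-cut in $G+A_0$ with the same $Z'_{s,t}$ (compatibility means $A_0$ adds no edge across $Z'$, so $R_s(Z'),R_t(Z')$ and hence $Z'_{s,t}$ are unchanged), $|Z'|=k$, $|Z'_{s,t}|=\lambda^*$, and now $\lambda_{G+A_0}(s,t)>\lambda$. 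Since the recursion is on an instance with the same $k$ but strictly larger max‑flow, the induction hypothesis applies and \algosample{} returns $(A,\cP)$ compatible with $Z'$ with probability at least $e^{-\probfun(\lambda_{G+A_0}(s,t),k)}\geq e^{-\probfun(\lambda+1,k)}$, using that $\probfun(\cdot,k)$ is non‑increasing in its first argument on $[0,k]$ (indeed $\probfun(\lambda,k)=(c_1k-c_2)(1+\ln k)+c_2\max(0,k-\lambda)$). We then need $e^{-\probfun(\lambda+1,k)}\geq 32\,e^{-\probfun(\lambda,k)}$, i.e.\ $\probfun(\lambda,k)-\probfun(\lambda+1,k)\geq\ln 32$; but this difference is exactly $c_2$ (as $\lambda<k$, so $\max(0,k-\lambda)=k-\lambda$ drops by one), and $c_2$ is chosen large enough. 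The returned set is $A_0\cup A$, compatible with $Z'$ (both parts are), and $\cP$ is a witnessing flow for $Z'$ in $(G+A_0)+A=G+(A_0\cup A)$.

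\emph{Disconnected case.} Here $G[W_1]$ splits into connected components $W_1^{(1)},\dots,W_1^{(c)}$ with flow values $\lambda_i>0$, $\sum\lambda_i=\lambda$. The key structural point is that $Z'$ restricted to $G^{(i)}=G[W_1^{(i)}\cup\{s,t\}]$ is a valid cut there: since $Z'$ is valid it contains no edge incident with $s$ or $t$, so every edge of $Z'$ lies inside some $W_1^{(i)}$; write $Z'_i=Z'\cap E(G^{(i)})$, $k_i=|Z'_i|$, $\lambda_i^*=|(Z'_i)_{s,t}|$. One checks $\sum k_i=k$, $\sum\lambda_i^*=\lambda^*$ (the $s$–$t$ flow paths through $W_1^{(i)}$ are exactly those counted by $\lambda_i$, and $(Z'_i)_{s,t}$ is the part of $Z'_{s,t}$ in $G^{(i)}$), $\lambda_i\leq\lambda_i^*\leq k_i$, and $Z'_i$ is eligible for $(G^{(i)},s,t)$ and affects its unique non‑trivial bundle. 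Now the sampling in step~(b) picks the correct tuple $(\lambda_1^*,\dots,\lambda_c^*,k_1,\dots,k_c)$ with probability at least $k^{-2c}$. Conditioned on that, each recursive call \algosample{}$(G^{(i)},s,t,k_i,\lambda_i^*)$ has $k_i\leq k$, and either $k_i<k$ or ($k_i=k$ and $c=1$, which is excluded since $W_1$ is disconnected so $c\geq 2$); hence $k_i<k$ and the induction hypothesis gives compatibility with $Z'_i$ with probability at least $e^{-\probfun(\lambda_i,k_i)}$. Multiplying over $i$ and over the sampling success, the overall success probability is at least
\begin{align*}
k^{-2c}\prod_{i=1}^{c} e^{-\probfun(\lambda_i,k_i)}
&= k^{-2c}\,\exp\!\Big(-\sum_{i=1}^c\big[(c_1k_i-c_2)(1+\ln k_i)+c_2(k_i-\lambda_i)\big]\Big).
\end{align*}
Since $\sum k_i=k$, $\sum\lambda_i=\lambda$, $k_i\leq k$, and $c\leq\lambda\leq k$, the exponent is at least $-(c_1k-cc_2)(1+\ln k)-c_2(k-\lambda)$, so the product is at least $e^{c c_2(1+\ln k)-2c\ln k}\,e^{-\probfun(\lambda,k)}\geq e^{c(c_2-2)\ln k + cc_2}\,e^{-\probfun(\lambda,k)}$, which exceeds $32\,e^{-\probfun(\lambda,k)}$ for $c\geq 2$ once $c_2$ is a large enough constant. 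Finally $A=\bigcup A_i$ is compatible with $Z'$ because each $A_i$ lives inside $G^{(i)}$ and is compatible with $Z'_i$, hence with $Z'$; and $\cP=\bigcup\cP_i$ is a witnessing flow for $Z'$ in $G+A$ since the $\cP_i$ are witnessing flows in the respective $G^{(i)}+A_i$ and these graphs overlap only in $s,t$.

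\emph{Main obstacle.} The only genuinely delicate part is the bookkeeping that in the disconnected case $Z'_i$ is \emph{valid} for $(G^{(i)},s,t,k_i,\lambda_i^*)$ — in particular that $(Z'_i)_{s,t}$ behaves additively (so $\sum\lambda_i^*=\lambda^*$) and that $Z'_i$ affects the non‑trivial bundle of $G^{(i)}$ — and, correspondingly, verifying that the constant‑juggling in the two probability estimates ($c_2$ large relative to $\ln 32$ in the connected case, and relative to $2$ plus the $c$‑dependence in the disconnected case) goes through with the same fixed $c_1,c_2$ used globally. Everything else is routine.
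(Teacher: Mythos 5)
Your proof follows essentially the same case analysis and probability-accounting as the paper's own proof: the connected case via Lemma~\ref{lemma:short-connected} with the $c_2 \geq \ln 32$ slack, and the disconnected case via correct sampling of the partition, induction on each $G^{(i)}$, and bounding $\sum_i (c_1 k_i - c_2)(1+\ln k_i) \leq (1+\ln k)(c_1 k - c c_2)$. One small arithmetic slip: after pulling $k^{-2c}$ into the exponent and comparing to $e^{-\probfun(\lambda,k)}$, the residual factor should be $e^{(c-1)c_2(1+\ln k)-2c\ln k}$ rather than $e^{c c_2(1+\ln k)-2c\ln k}$, but since $c\geq 2$ and $c_2\geq 4$ the corrected factor still exceeds $32$, so the conclusion stands.
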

\begin{proof}
  Assume that $(G,s,t,k,\lambda^*)$ is a valid input.
  As discussed, we can assume $\lambda_G(s,t) \leq \lambda^\ast \leq k$.
  Let $Z$ be a valid cut. 
  If $W_1$ is a connected bundle, then 
  $A=\delta(s) \cup \delta(t)$ is flow-augmenting and compatible with $Z$ by Lemma~\ref{lemma:short-connected}.
  For the success probability bound, 
  the statement is trivial if $\lambda_{G+A}(s,t) > \lambda^\ast$ (there is no such $Z$
      in this case). Otherwise
  note that $\lambda_{G+A}(s,t) > \lambda_G(s,t)$ so
  $$\probfun(\lambda_G(s,t),k) > \probfun(\lambda_{G+A}(s,t),k) + c_2.$$
  Hence, the probability bound follows as long as $e^{c_2} \geq 32$.
  
  If $W_1$ is a disconnected bundle, let $W_1=W_1^{(1)} \cup \ldots \cup W_1^{(c)}$ be as in the algorithm.
  For $i \in [c]$, let $\lambda_i^*=|Z_{s,t} \cap E(W_1^{(i)})|$ and $k_i = |Z \cap E(W_1^{(i)})|$; then by assumption
  $\lambda^*=\lambda_1^*+\ldots+\lambda_c^*$, $k = k_1 + \ldots k_c$, and $\lambda_i \leq \lambda_i^* \leq k_i$.
  We note that the algorithm guesses the correct values of $k_i$ and $\lambda_i^*$ with probability at least $k^{-2c}$.
  
  Consider some $i \in [c]$ and let $G^{(i)}=G[W_1^{(i)} \cup \{s,t\}]$.
  Let $Z^{(i)}=Z \cap E(G^{(i)})$, and note that
  $Z^{(i)}$ is an $(s,t)$-cut in $G^{(i)}$,
  with endpoints in different connected components of $G^{(i)}-Z^{(i)}$, 
  and with $Z^{(i)} \cap (\delta(s) \cup \delta(t))=\emptyset$.
  Thus $Z^{(i)}$ is eligible for $G^{(i)}$.
  Furthermore by assumption $|Z^{(i)}_{s,t}|=\lambda_i^*$ and $|Z^{(i)}| = k_i < k$.
  Thus each call to \algosample$(G',s,t,k_i,\lambda_i^*)$ will by assumption return a set $A_i$ 
  such that $\lambda_{G+A_i}(s,t) \geq \lambda_i^*$; since $E(G)$ are partitioned across the instances $G^{(i)}$, it follows that $A=A_1 \cup \ldots \cup A_c$ is a flow-augmenting set with $\lambda_{G+A}(s,t)\geq \lambda^*$.  Furthermore, for every $i \in [c]$, with probability at least $e^{-\probfun(\lambda_i, k_i)}$ the set $A_i$ is compatible with $Z^{(i)}$. 
  Now $(A,\witnessflow)$ is compatible with $Z$ if every pair $(A_i,\witnessflow_i)$ is compatible with the respective set $Z^{(i)}$. Hence, the success probability is lower bounded by:
  \begin{align*}
  k^{-2c} \cdot \prod_{i=1}^c e^{-\probfun(\lambda_i, k_i)} & = \exp\left(-2c \ln k - \sum_{i=1}^c \left(c_1(2k_i-\lambda_i) - c_2\right)(1 + \ln k_i)\right) \\
    & \geq \exp\left(-2c \ln k - (1+\ln k) \sum_{i=1}^c c_1(2k_i - \lambda_i) - c_2 \right) \\
    & = \exp\left(-2c \ln k + (1+\ln k) \left(c_1(2k - \lambda) - c_2\right) + (1+\ln k)(c-1)c_2 \right)\\
    & \geq 32 e^{-\probfun(\lambda,k)} \cdot \exp\left(\left((c-1)c_2 - 2c\right) \ln k + \left((c-1)c_2- \ln 32\right) \right) \\
    & \geq 32 e^{-\probfun(\lambda,k)}.
  \end{align*}
  In the above we have used that $c_1 > c_2$ and, in the last inequality, that $c \geq 2$, $c_2 \geq 4 > \ln 32$.
  This finishes the proof of the lemma.
\end{proof}

\subsubsection{Multiple-bundle case}\label{ss:inner-multiple}
We will now describe an algorithm \algoshort that realizes the inner-loop interface from the previous section. Given a valid multiple-bundle instance $(G,s,t,k,\lambda^*)$ where $G$ decomposes into bundles $W_0\cup\ldots\cup W_{q+1}$, with $2\leq q\leq 2k$, and $W_0=\{s\}$ and $W_{q+1}=\{t\}$, and with $\lambda := \lambda_G(s,t)<\lambda^*$ it will run in (probabilistic) polynomial time and always return an $(s,t)$-flow augmenting set $A$. Moreover, for each $(s,t)$-cut $Z$ that is valid for $(G,s,t,k,\lambda^*)$,
the set $A$ is compatible with $Z$ with probability at least $32 k^3 e^{-\probfun(\lambda_G(s,t), k)}$. We call $W_0=\{s\}$ and $W_{q+1}=\{t\}$ trivial bundles; all others are called non-trivial bundles.

\begin{figure}[ht]
\small
\noindent {\bf Algorithm} \algoshort{}($G,s,t,k,\lambda^*$)
\begin{enumerate}
\item If $(G,s,t,k,\lambda^*)$ is not a valid multiple-bundle input, then return $k+1$ copies of the edge $\{s,t\}$ and stop.
\item Let $V=W_0 \cup \ldots \cup W_{q+1}$ be the partition of $G$ into bundles.
Let $C$ be the min-cut between $W_1$ and $W_2$.
  \item Randomly sample values $0 \leq \lambda_\gamma \leq \lambda$ for $\gamma \in \Gamma$ such that $\sum_{\gamma \in \Gamma} \lambda_\gamma = \lambda = |C|$.
  Denote $\lambda_0 = \sum_{\gamma \in \Gamma_0} \lambda_\gamma$.
  \item For every edge $uv \in C$ with $u \in V(W_1)$ and $v \in V(W_2)$, guess a label $\varphi(uv) \in \Gamma$ with the probability of $\varphi(uv) = \gamma$
  being $\lambda_\gamma / \lambda$. 
  \begin{enumerate}
  \item 
  Define $\varphi(u)$ and $\varphi(v)$ such that $(\varphi(u), \varphi(v)) = \varphi(uv)$. 
  If a vertex $x$ obtains two distinct values $\varphi(x)$ in this process, return $A$ being $k+1$ edges $st$ and stop.
  \item Let $\lambda_C^*$ be the number of edges $e \in C$ such that $\varphi(e) \in \{(s,t), (t,s)\}$. 
  \end{enumerate}
  \item Let $A_{st}$  contain $k+1$ copies of each edge $\{u,v\}$ with $u, v \in \{s\} \cup \varphi^{-1}(s)$
    or with $u, v \in \{t\} \cup \varphi^{-1}(t)$
  \item Compute a set $\witnessflow_C$ of size $\lambda_C^\ast$ as follows: for every $e \in C$ such that
  $\varphi(e) \in \{(s,t), (t,s)\}$, let $e = uv$ be such that $\varphi(u) = s$ and $\varphi(v) = t$,and add to $\witnessflow_C$ a three-edge path $P_e$ consisting of the edges $su \in A_{st}$, $e$, and $tv \in A_{st}$. 
  \item Randomly sample a partition $\lambda^*=\lambda_1^*+\lambda_C^*+\lambda_2^*$ subject to the following constraints:
    \begin{enumerate}
    \item $\lambda_1^* \geq \lambda_{(t,s)} + \lambda_{(t,t)} + \lambda_{(t,\bot)}$ and $\lambda_1^* = 0$ 
    if $\lambda_{(t,s)} = \lambda_{(t,t)} = \lambda_{(t,\bot)} = 0$.
    \item $\lambda_2^* \geq \lambda_{(s,s)} + \lambda_{(t,s)} + \lambda_{(\bot,s)}$ and $\lambda_2^* = 0$ 
    if $\lambda_{(s,s)} = \lambda_{(t,s)} = \lambda_{(\bot,s)} = 0$.
    \end{enumerate}
  \item Randomly sample a partition $k = k_1 + k_C + k_2$ subject to the following constraints:
    \begin{enumerate}
    \item $\lambda_1^* \leq k_1$, $\lambda_0 + \lambda_{(t,t)} \leq k_1 + k_C$, $1 \leq k_1$, $\lambda_\leftarrow \leq k_1$;
    \item $\lambda_2^* \leq k_2$, $\lambda_0 + \lambda_{(s,s)} \leq k_2 + k_C$, $1 \leq k_2$, $\lambda_\rightarrow \leq k_2$;
    \item $\sum_{\gamma \in \Gamma_0 \setminus \{(\bot,\bot)\}} \lambda_\gamma \leq k_C \leq \sum_{\gamma \in \Gamma_0} \lambda_\gamma$.
    \end{enumerate}
  \item Construct a flow-augmenting set $A_1$ and a flow in $W_1$:
    \begin{enumerate}
    \item Let $G_1=(G+A_{st})[W_1 \cup \{s,t\}]$;
    \item Compute $(A_1,\witnessflow_1) \gets \algosample(G_1, s,t,k_1,\lambda_1^*)$.
    \end{enumerate}
  \item Construct a flow-augmenting set $A_2$ in $W_{2,q}$:    
    \begin{enumerate}
    \item Let $G_2=(G+A_{st})[W_{2,q} \cup \{s,t\}]$.
    \item Compute $(A_2,\witnessflow_2) \gets \algosample(G_2, s,t,k_2,\lambda_2^*)$.
    \end{enumerate}
  \item Return $(A=A_{st} \cup A_1 \cup A_2, \witnessflow = \witnessflow_C \cup \witnessflow_1 \cup \witnessflow_2)$.
\end{enumerate}
\caption{The inner loop algorithm for multiple-bundle case.}
\label{fig:alg-inner}
\end{figure}

The algorithm is shown in Figure~\ref{fig:alg-inner}, but to discuss it we need
a few results.  Assume that $Z$ is a $\lambda^*$-eligible $(s,t)$-cut which affects every non-trivial bundle $W_1, \ldots, W_q$ of $G$.
Let $C$ be the min-cut between $W_1$ and $W_2$.  We define a \emph{cut labelling}
$\varphi_Z \colon V(C) \to \{s,t,\bot\}$ of $C$ by $Z$ as
\[
  \varphi_Z(v) =
  \begin{cases}
    s & v \in R_s(Z) \\
    t & v \in R_t(Z) \\
    \bot & \text{otherwise.} \\
  \end{cases}
\] 
For every edge $uv \in C$ with $u \in V(W_1)$ and $v \in V(W_2)$, the \emph{type} of the
edge $uv$ is the pair $\varphi_Z(uv) := (\varphi_Z(u), \varphi_Z(v))$. Let $\Gamma = \{s,t,\bot\} \times \{s,t,\bot\}$
be the set of types.
For a type $\gamma \in \Gamma$, let $\lambda_\gamma$ be the number of edges $e \in C$ with $\varphi_Z(e) = \gamma$.
The types $(s,s)$ and $(t,t)$ are somewhat special; we denote $\Gamma_0 = \Gamma \setminus \{(s,s),(t,t)\}$ and $\lambda_0 = \sum_{\gamma \in \Gamma_0} \lambda_\gamma$. 
Furthermore, let $\lambda_{\leftarrow} = \{t, \bot\} \times \{s,t,\bot\}$
and $\lambda_\rightarrow = \{s,t,\bot\} \times \{s, \bot\}$.

Let $Z_1 = Z \cap E(W_1)$, $Z_2 = Z \cap E(W_{2,q})$ and $Z_C = Z \cap C$. 
Note that $Z = Z_1 \cup Z_2 \cup Z_C$ is a partition of $Z$. 
We make some simple observations.
\begin{proposition} \label{prop:multi-recurse-ok}
  The following hold.
  \begin{enumerate}
  \item $Z_{s,t} \cap C = \{e \in C \mid \phi_Z(e) \in \{(s,t), (t,s)\}\}$;
  \item If $Z_{s,t} \cap E(W_1) \neq \emptyset$, then there exists $u \in V(W_1) \cap V(C)$ with $\varphi_Z(u) = t$;
  \item If $Z_{s,t} \cap E(W_{2,q}) \neq \emptyset$, then there exists $v \in V(W_2) \cap V(C)$ with $\varphi_Z(v) = s$;
  \item For every $uv \in C$ such that $\varphi_Z(u) \neq \varphi_Z(v)$, we have $uv \in Z_C$. 
  Conversely, if $uv \in Z_C$, then $\varphi_Z(u) \neq \varphi_Z(v)$ or $\varphi_Z(u) = \varphi_Z(v) = \bot$.
  \item $Z_1 \cup Z_C \neq \emptyset$ and $Z_2 \cup Z_C \neq \emptyset$. 
  \item $|Z_1 \cup Z_C| \geq \lambda_0 + \lambda_{(t,t)}$ and $|Z_2 \cup Z_C| \geq \lambda_0 + \lambda_{(s,s)}$.
  \item $|Z_1| \geq \lambda_\leftarrow$ and $|Z_2| \geq \lambda_\rightarrow$.
  \end{enumerate} 
\end{proposition}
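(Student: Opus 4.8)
All seven statements describe how the $\lambda^*$-eligible cut $Z$ interacts with the minimum $(s,t)$-cut $C=C_1'=\delta(W_{\le 1})$ separating bundle $W_1$ from $W_2$, and the plan is to record a handful of standing facts and then treat the items roughly in order of increasing work. The standing facts are: every edge of $C$ has exactly one endpoint in $W_1$ (call it $u$) and one in $W_2$ (call it $v$), with $R_s(C)=W_{\le 1}$ and $R_t(C)=W_{\ge 2}$, and we fix a packing $P_1,\dots,P_\lambda$ of edge-disjoint $(s,t)$-paths, each of which meets $C$ in exactly one edge and which together use all $\lambda$ edges of $C$ (all of this is the block/bundle theory of Propositions~\ref{proposition:structureofmincuts}--\ref{proposition:connectivityofbundles}); for any edge set $Z$ one has $\delta(R_s(Z))\subseteq Z$ and $\delta(R_t(Z))\subseteq Z$, and both $R_s(Z)$ and $R_t(Z)$ are connected in $G-Z$ and disjoint from each other; and by eligibility (plus the reduction fixed at the start of the section) every edge of $Z$ has its endpoints in distinct components of $G-Z$ and $Z$ is disjoint from $\delta(s)\cup\delta(t)$. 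Items~1 and~4 are then immediate: if $uv\in C$ has $\varphi_Z(u)\ne\varphi_Z(v)$ then $u,v$ lie in distinct components of $G-Z$ (the sets $R_s(Z)$, $R_t(Z)$ and the rest of $V$ are unions of components), so $uv\in Z$, i.e.\ $uv\in Z_C$, and if moreover $\{\varphi_Z(u),\varphi_Z(v)\}=\{s,t\}$ also $uv\in Z_{s,t}$; conversely if $uv\in Z_C$ with $\varphi_Z(u)=\varphi_Z(v)\in\{s,t\}$ then $u,v$ lie in a common component, contradicting eligibility, so the common value is $\bot$. The reverse containment in item~1 is just the definition of $Z_{s,t}$.

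For items~2 and~3 we trace a witnessing path. For item~2, pick $xy\in Z_{s,t}\cap E(W_1)$ with $x\in R_s(Z)$ and $y\in R_t(Z)$, and a $t$--$y$ path $\pi$ in $G-Z$; since $y\in W_{\le 1}$ and $t\notin W_{\le 1}$, $\pi$ uses an edge of $C=\delta(W_{\le 1})$, and if $e'=u'v'$ is the last such edge along $\pi$ (starting from $t$) then the remainder of $\pi$ stays inside $W_{\le 1}$ and, since it cannot reach $s$ ($Z$ separates $s$ from $t$), inside $W_1$; hence $u'\in V(W_1)\cap V(C)$ is joined to $y\in R_t(Z)$ in $G-Z$, so $\varphi_Z(u')=t$. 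Item~3 is the mirror statement, tracing an $s$--$x$ path from an edge $xy\in Z_{s,t}\cap E(W_{2,q})$. For item~5 we prove the contrapositive on each side. If $Z\cap(C\cup E(W_1))=\emptyset$, then, using that $Z$ also misses $\delta(s)$, the induced subgraph on $\{s\}\cup W_1$ survives in $G-Z$ and is connected, because each component of $W_1$ is incident to $C_0=\delta(s)$; every vertex of $V(W_2)\cap V(C)$ is joined to it by a surviving edge of $C$, so $N[W_1]$ lies in a single component of $G-Z$, contradicting that $W_1$ is affected. Symmetrically, if $Z\cap(C\cup E(W_{2,q}))=\emptyset$ then $Z$ misses all edges of $G[W_{2,q+1}]$ and of $\delta(t)=C_q'$; since $G[W_{2,q+1}]$ is connected (a suffix of the block decomposition, cf.\ Proposition~\ref{proposition:connectivityofblocks}) and, in the boundary case $q=2$, the edges of $C$ to $W_1$ survive, $N[W_q]$ lies in a single component and $W_q$ is unaffected --- contradiction. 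Hence $Z_1\cup Z_C\ne\emptyset$ and $Z_2\cup Z_C\ne\emptyset$.

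Items~6 and~7 are the heart of the matter, handled through the fixed path packing. Write $e_j=u_jv_j$ for the $C$-edge of $P_j$ and split $P_j$ at $e_j$ into a prefix $Q_j^-$ from $s$ to $u_j$ (contained in $W_{\le 1}$) and a suffix $Q_j^+$ from $v_j$ to $t$ (contained in $W_{\ge 2}$). For $|Z_1\cup Z_C|\ge\lambda_0+\lambda_{(t,t)}$, note that the right-hand side is the number of indices $j$ with $\varphi_Z(e_j)\ne(s,s)$; for each such $j$, either $\varphi_Z(u_j)\ne s$, in which case $Q_j^-$ runs from $s\in R_s(Z)$ to $u_j\notin R_s(Z)$ and hence contains an edge of $\delta(R_s(Z))\subseteq Z$ lying in $W_{\le 1}$ and not incident to $s$, so in $E(W_1)$ and thus in $Z_1$; or $\varphi_Z(u_j)=s$ and $\varphi_Z(v_j)\ne s$, in which case $e_j\in\delta(R_s(Z))\subseteq Z$, so $e_j\in Z_C$. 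Each relevant $P_j$ therefore carries a designated edge of $Z_1\cup Z_C$, and these edges are pairwise distinct because the $P_j$ are edge-disjoint, which gives the bound. The bound $|Z_2\cup Z_C|\ge\lambda_0+\lambda_{(s,s)}$ is the mirror argument using the suffixes $Q_j^+$, the set $R_t(Z)$, and the indices $j$ with $\varphi_Z(e_j)\ne(t,t)$. Finally, item~7 uses only the first alternative of this dichotomy: for each of the $\lambda_\leftarrow$ indices $j$ with $\varphi_Z(u_j)\in\{t,\bot\}$, the prefix $Q_j^-$ supplies a distinct edge of $Z_1=Z\cap E(W_1)$ (never $e_j$, since $e_j\notin E(W_1)$), whence $|Z_1|\ge\lambda_\leftarrow$; the symmetric argument on suffixes gives $|Z_2|\ge\lambda_\rightarrow$.

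I expect items~1--5 to be routine bookkeeping against the definitions of eligibility, of $Z_{s,t}$, and of $C$; the only step that needs genuine structural input is items~6--7, and there the two points to get right are that the $Z$-edge extracted from each flow path falls into the intended one of $Z_1$, $Z_C$, $Z_2$ --- which is exactly where $R_s(C)=W_{\le 1}$, $R_t(C)=W_{\ge 2}$ and the disjointness of $Z$ from $\delta(s)\cup\delta(t)$ are used --- and that the extracted edges are pairwise distinct, for which edge-disjointness of the fixed $(s,t)$-path packing is the right tool.
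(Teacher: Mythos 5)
Your proof is correct and matches the paper's approach, which handles items 1 and 4 by inspection of the definitions, 2--3 by tracing a $(G-Z)$-path from $v\in R_t(Z)$ across $C$, 5 via the assumption that $Z$ affects every non-trivial bundle, and 6--7 by tracking $Z$-edges along the $\lambda$ edge-disjoint flow paths through $C$. Your write-up simply fills in the bookkeeping that the paper leaves terse (e.g., that the $Z$-edge extracted from each flow-path prefix falls in $E(W_1)$ rather than $\delta(s)$, which indeed relies on $Z$ avoiding $\delta(s)\cup\delta(t)$).
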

\begin{proof}
  \emph{1.} Holds by definition of $Z_{s,t}$ and $\varphi_Z$.
  
  \emph{2--3.} These proofs are symmetric, so we show only the first.
  Let $\{u,v\} \in Z_{s,t} \cap E(W_1)$ where $v \in R_t(Z)$.  Then there is a path $P$
  from $v$ to $t$ in $G-Z$. This path must pass through $C$ through a vertex $w$
  with $w \in R_t(Z)$. 
  
  \emph{4.} This is straightforward from the assumption that every edge of $Z$ connects two distinct connected components of $G-Z$.

  \emph{5.} If this does not hold, then some non-trivial bundle of $G$ is unaffected.

  \emph{6--7.} Consider a maximum $(s,t)$-flow $(P_e)_{e \in C}$ where $e \in E(P_e)$ for $e = u_1u_2 \in C$, $u_1 \in W_1$, $u_2 in W_2$.
  The path $P_e$ first goes from $s$ via $W_1$ to $e$ and then continues via $W_{2,q}$ to $t$.
  If there no edge of $E(P_e) \cap Z$ between $s$ and $u_1$, then 
  $\varphi_Z(u_1) = s$. IF additionally $e \notin Z$, then $\varphi_Z(e) = (s,s)$.
  This proves the two inequalities of 6--7. concering $Z_1$.   The argument for $Z_2$
  is symmetric.
\end{proof}

We use this to show the correctness of the algorithm. 

\begin{lemma} \label{lemma:short-correct}
  Assume that \algosample{} is correct for all inputs $(G',s',t',k',\lambda')$
  where either $k'<k$ or $k'=k$ but $(k'-\lambda_{G'}(s',t')) < (k-\lambda_G(s,t))$,
  with a success probability of at least $e^{-\probfun(\lambda_{G'}(s',t'), k')}$.
  Then \algoshort{}($G,s,t,k,\lambda^*$) is correct, with a success
  probability of at least $32 k^3 e^{-\probfun(\lambda_{G}(s,t),k)}$.
\end{lemma}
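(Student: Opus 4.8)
The plan is to prove the lemma by induction on $k$ (and, secondarily, on $k-\lambda_G(s,t)$), exactly mirroring the structure of the proofs of Lemma~\ref{lemma:short-one} and Lemma~\ref{lemma:outerloop:correct}. First dispose of the degenerate case: if $(G,s,t,k,\lambda^*)$ is not a valid multiple-bundle input then \algoshort returns $k+1$ copies of $\{s,t\}$, and since no valid cut exists the statement holds vacuously. So assume the input is valid, fix a valid cut $Z$, let $C=C'_1$ be the min-$(s,t)$-cut between $W_1$ and $W_2$ (so $|C|=\lambda:=\lambda_G(s,t)$), and introduce the cut labelling $\varphi_Z$ together with the quantities $\lambda_\gamma$, $\lambda_0$, $\lambda_{\leftarrow}$, $\lambda_{\rightarrow}$ and the partition $Z=Z_1\cup Z_C\cup Z_2$. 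The proof then revolves around a single ``good run'' event $E$: the algorithm samples the cut-induced values, i.e.\ it samples the true $\lambda_\gamma$'s, guesses $\varphi(e)=\varphi_Z(e)$ for every $e\in C$, samples $\lambda_1^*=|Z_{s,t}\cap E(W_1)|$ and $\lambda_2^*=|Z_{s,t}\cap E(W_{2,q})|$, and samples $k_1=|Z_1|$, $k_C=|Z_C|$, $k_2=|Z_2|$ (with $k_1,k_2$ rounded up to $1$ in the corner cases discussed below).

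The first substantial step is to verify that these cut-induced values are legal sample outcomes, i.e.\ that they satisfy every constraint imposed in the three sampling steps of Figure~\ref{fig:alg-inner}. This is exactly what Proposition~\ref{prop:multi-recurse-ok} is for: parts~6--7 give $\lambda_1^*\le k_1$, $\lambda_{\leftarrow}\le k_1$, $\lambda_0+\lambda_{(t,t)}\le k_1+k_C$ and the symmetric statements on side~$2$; part~4 gives $\sum_{\gamma\in\Gamma_0\setminus\{(\bot,\bot)\}}\lambda_\gamma\le k_C\le\lambda_0$; part~5 gives $Z_1\cup Z_C\ne\emptyset$ and $Z_2\cup Z_C\ne\emptyset$; and parts~2--3 justify the ``$\lambda_i^*=0$ unless $\ldots$'' clauses, as well as the relations of the form $\lambda_1^*\ge\lambda_{(t,s)}+\lambda_{(t,t)}+\lambda_{(t,\bot)}$ obtained by routing the witnessing flow through $W_1$. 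The delicate point is the constraint $1\le k_1$ (and $1\le k_2$): when $|Z_1|=0$ one checks, using $W_0=\{s\}$ and Proposition~\ref{prop:multi-recurse-ok}, that $W_1\subseteq R_s(Z)$, hence $\lambda_1^*=0$, $\lambda_{\leftarrow}=0$, and that after the $A_{st}$-contraction $G_1$ has $t$ separated from $s$; in that situation we set the target $k_1$ to the least value allowed by the remaining constraints and argue compatibility of the resulting recursive output directly (the target flow there is $0$), rather than through the inductive hypothesis. A symmetric remark handles $|Z_2|=0$.

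Conditioning on $E$, the remaining facts are: the three composition-sampling steps each succeed with probability $k^{-\Oh(1)}$ (by the same sampling convention as in \algoshortone), and the label-guessing step succeeds with probability $\prod_{\gamma\in\Gamma}(\lambda_\gamma/\lambda)^{\lambda_\gamma}\ge 9^{-\lambda}$, where the bound is the max-entropy inequality $\sum_{\gamma}(\lambda_\gamma/\lambda)\ln(\lambda/\lambda_\gamma)\le\ln 9$ (with the convention $0\ln 0=0$); hence $\Pr[E]\ge k^{-\Oh(1)}9^{-\lambda}$. On $E$ we have $\varphi^{-1}(s)\subseteq R_s(Z)$ and $\varphi^{-1}(t)\subseteq R_t(Z)$, so every edge of $A_{st}$ has both endpoints in one component of $G-Z$, i.e.\ $A_{st}$ is compatible with $Z$; and $\cP_C$ consists of $\lambda_C^*=|Z_{s,t}\cap C|$ internally disjoint three-edge $(s,t)$-paths, each meeting $Z$ in exactly its middle edge, which lies in $Z_{s,t}$. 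One then checks that $Z\cap E(G_i)$ is a $\lambda_i^*$-eligible $(s,t)$-cut of size $k_i$ in $G_1=(G+A_{st})[W_1\cup\{s,t\}]$ resp.\ $G_2=(G+A_{st})[W_{2,q}\cup\{s,t\}]$ — eligibility is inherited from $Z$ because, on each side, the $A_{st}$-contraction merges only vertices on the correct side of $Z$, so it adds no connectivity that $Z$ removed and $(Z\cap E(G_i))_{s,t}$ remains an $(s,t)$-cut of $G_i$ — and that $k_i<k$, since $k=k_1+k_C+k_2$ with $k_1,k_2\ge 1$. Thus the inductive hypothesis applies to both recursive calls $\algosample(G_i,s,t,k_i,\lambda_i^*)$, each returning $(A_i,\cP_i)$ compatible with $Z$ with probability $\ge e^{-\probfun(\lambda_{G_i}(s,t),k_i)}$. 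Putting the pieces together on $E$ with both calls successful: $A=A_{st}\cup A_1\cup A_2$ is compatible with $Z$; $\cP=\cP_C\cup\cP_1\cup\cP_2$ is a family of $\lambda_C^*+\lambda_1^*+\lambda_2^*=\lambda^*$ edge-disjoint $(s,t)$-paths each crossing $Z$ exactly once in an edge of $Z_{s,t}$, hence a witnessing flow, and in particular $\lambda_{G+A}(s,t)\ge\lambda^*$.

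It then remains to verify the probability bound $k^{-\Oh(1)}\,9^{-\lambda}\,e^{-\probfun(\lambda_{G_1}(s,t),k_1)}\,e^{-\probfun(\lambda_{G_2}(s,t),k_2)}\ \ge\ 32k^3 e^{-\probfun(\lambda,k)}$. Expanding $\probfun(\lambda,k)=(c_1k-c_2)(1+\ln k)+c_2(k-\lambda)$, using $k_1+k_2=k-k_C$ and $\ln k_i\le\ln k$, one gets $\probfun(\lambda,k)-\probfun(\lambda_{G_1}(s,t),k_1)-\probfun(\lambda_{G_2}(s,t),k_2)\ \ge\ (c_1k_C+c_2)(1+\ln k)+c_2\bigl(k_C-\lambda+\lambda_{G_1}(s,t)+\lambda_{G_2}(s,t)\bigr)$, where the first summand already absorbs $\ln(32k^3)$ and the $k^{\Oh(1)}$ sampling overhead once $c_2$ is a sufficiently large constant, while the second summand is bounded below using $k_C\ge\sum_{\gamma\in\Gamma_0\setminus\{(\bot,\bot)\}}\lambda_\gamma$ (so that $\lambda-k_C\le\lambda_{(s,s)}+\lambda_{(t,t)}+\lambda_{(\bot,\bot)}$) together with lower bounds of the form $\lambda_{G_1}(s,t)\ge\lambda_{(t,t)}$ and $\lambda_{G_2}(s,t)\ge\lambda_{(s,s)}$ obtained by restricting the witnessing flow to the two sides after contraction; the remaining $9^{-\lambda}=e^{-\lambda\ln 9}$ is then absorbed by choosing $c_1\gg c_2$. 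I expect this final bookkeeping — ensuring the gap between $\probfun(\lambda,k)$ and the two sub-instance values always exceeds the combined $32k^3$, $k^{\Oh(1)}$ and $9^{-\lambda}$ overhead for every feasible configuration of the $\lambda_\gamma$'s and the $k_i$'s, including the degenerate ones — to be the main obstacle, closely followed by the routine but lengthy verification via Proposition~\ref{prop:multi-recurse-ok} that the cut-induced sample is always feasible and that the recursive sub-instances are genuinely eligible and strictly smaller.
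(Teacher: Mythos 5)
Your proposal reproduces the same architectural skeleton as the paper's proof — dispose of the invalid case, fix the cut labelling $\varphi_Z$, check via Proposition~\ref{prop:multi-recurse-ok} that the ``correct'' sample outcomes are feasible, verify that $A_{st}$ and $\cP_C$ are compatible, show $Z \cap E(G_i)$ is eligible in $G_i$ with $k_i<k$, and combine probabilities — and you are right to flag the bookkeeping as the hard part. But the bookkeeping plan you sketch does not close, and this is a genuine gap rather than a deferred detail.

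Specifically, you replace the label-guess probability $\prod_{\gamma}(\lambda_\gamma/\lambda)^{\lambda_\gamma}$ by the worst-case bound $9^{-\lambda}$ and then assert that the resulting $e^{-\lambda\ln 9}$ factor ``is absorbed by choosing $c_1\gg c_2$.'' That cannot work. The slack $\probfun(\lambda,k)-\probfun(\lambda_{G_1},k_1)-\probfun(\lambda_{G_2},k_2)$ is dominated by $c_1\bigl(k\ln k - k_1\ln k_1 - k_2\ln k_2\bigr)$ up to $\Oh(\log k)$ lower-order terms, and that dominant quantity collapses to $\Theta(\log k)$ for a lopsided split such as $k_1=1$, $k_2=k-1$, $k_C=0$. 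In exactly that regime $\lambda$ can still be $\Theta(k)$, so $9^{-\lambda}=e^{-\Theta(k)}$ dwarfs the available slack regardless of how $c_1,c_2$ are tuned. From the recursion-tree viewpoint: summing the per-node $\lambda$ across the tree can reach $\Theta(k^2)$, so a flat $9^{-\lambda}$ charge per call would drive the final bound to $e^{-\Theta(k^2)}$, incompatible with the target $e^{-\probfun(\lambda,k)}=e^{-\Theta(k\log k)}$. Crucially, in the lopsided regime the \emph{actual} entropy $\sum_\gamma\lambda_\gamma\ln(\lambda/\lambda_\gamma)$ is in fact $\Oh(\log\lambda)$ — the sampling constraints force $\lambda_{(s,s)}$ (or $\lambda_{(t,t)}$) to be within $\Oh(1)$ of $\lambda$ — which your crude bound is blind to. The paper exploits precisely this structure: after a concavity step~\eqref{eq:fa:2} and entropy-maximisation step~\eqref{eq:fa:4}, Claim~\ref{cl:fa-lambda} ties $k-\lambda_G$ to $k_i-\lambda_{G_i}$ plus $\lambda_{(\bot,\bot)}$, and Claim~\ref{cl:fa-calculus} shows that the residual entropy plus a $c_2 x_0$ correction is controlled by $C_2\bigl(x\ln x - k_1\ln k_1 - k_2\ln k_2\bigr)$ — i.e.\ exactly the budget drop you actually earn, not a flat $\lambda\ln 9$. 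Without that claim (or a substitute of comparable strength) the induction does not close. Relatedly, the paper treats the degenerate case $\lambda_{(s,s)}=\lambda$ (resp.\ $\lambda_{(t,t)}=\lambda$) separately, using that $\varphi$ is deterministic and only one recursive call with $k_2\le k-1$ is made; your plan to ``set $k_1$ to the least allowed value and argue compatibility directly'' is in the right spirit, but you would still need to exhibit an inequality of the form $k^{-13}e^{-\probfun(\lambda,k-1)}\ge 32k^3e^{-\probfun(\lambda,k)}$ to make it rigorous, which is again a place where the specific shape of $\probfun$ does real work.
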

\begin{proof}
  First observe that if a call $(G_i,s,t,k_i,\lambda_i^*)$ is made to \algosample,
  then $s$ and $t$ are connected in $G_i$.  Indeed, $G[W_1 \cup \{s\}]$ is connected,
  and if $\varphi^{-1}(t) \cap V(W_1) = \emptyset$ then the algorithm always
  guesses $\lambda_1^*=0$, hence no recursive call is made.  Similarly,
  $G[W_{2,q} \cup \{t\}]$ is connected and if $s$ is not adjacent
  to $V(W_2)$ in $G+A_0$ then no recursive call into $G_2$ is made.
  Hence each recursive call is only made to a connected graph $G_i$ and we can assume that
  $\witnessflow_i$ is a flow of size $\lambda_i^\ast$ in $G_i+A_i$.
  We show that $\witnessflow$ is a flow of size $\lambda^\ast$ in $G+A$,
  which implies that $\lambda_{G+A}(s,t) \geq \lambda^\ast$.
  Indeed, the paths of $\witnessflow_1 \cup \witnessflow_2$ exist in $G+A$ and are pairwise edge-disjoint. 
  Furthermore, for every edge $e \in C$ with $\varphi(e) \in \{(s,t),(t,s)\}$,
  the constucted path $P_e \in \witnessflow_C$
  is a path from $s$ to $t$ disjoint from $\witnessflow_1 \cup \witnessflow_2$.
  Since $|\witnessflow_C| = \lambda^\ast_C$ and $\lambda^\ast = \lambda_1^\ast + \lambda_C^\ast + \lambda_2^\ast$, $\witnessflow$ is as desired.

  Next, we consider the probability that $(A,\witnessflow)$ is compatible with $Z$.
  The algorithm correctly guesses (in every bullet, we condition on the previous guesses being correct):
  \begin{itemize}
  \item values $\lambda_\gamma$ for $\gamma \in \Gamma$ with probability at least
  $(1+\lambda)^{-|\Gamma|}  \geq k^{-9}$;
  \item $\varphi = \varphi_Z$ with probability
  $$\prod_{e \in C} \frac{\lambda_{\varphi_Z(e)}}{\lambda} = \prod_{\gamma \in \Gamma} \left(\frac{\lambda_\gamma}{\lambda}\right)^{\lambda_\gamma} = \exp\left(-\sum_{\gamma \in \Gamma} \lambda_\gamma \ln(\lambda/\lambda_\gamma)\right).$$
  \item values $\lambda_1^* = |Z_{s,t} \cap E(W_1)|$ and $\lambda_2^* = |Z_{s,t} \cap E(W_{2,q})|$
  with probability at least $k^{-2}$;
  \item values $k_1 = |Z \cap E(W_1)|$, $k_2 = |Z \cap E(W_2)|$, $k_C = |Z \cap C|$
  with probability at least $k^{-2}$, as there are at most $k^2$ possible values
  of $(k_1,k_2)$.
  \end{itemize}
  Proposition~\ref{prop:multi-recurse-ok} ensures that 
  in all of the above guesses, the correct value of is among one of the options
  with positive probability.
  Furthermore, $\lambda_C^* = |Z_{s,t} \cap C|$ is computed (deterministically) 
  by the algorithm.

  It was argued above that each recursive call on a graph $G_i$, $i=1, 2$, is made only if $G_i$ is connected. 
  We claim that furthermore $Z_1:=Z \cap E(W_1)$ is an eligible $(s,t)$-cut in $G_1$.
  Indeed, $Z_1 \cap \delta(s)=\emptyset$ by assumption, and $Z_1 \cap \delta(t)=\emptyset$
  since all edges of $\delta(t)$ in $G_1$ are from $A_0$. Furthermore, by assumption,
  for every vertex $u$ of $N_{G_1}(s)$ and every vertex $v$ of $N_{G_1}(t)$,
  we have $u \in R_s(Z)$ and $v \in R_t(Z)$. Hence $Z_1$ in particular cuts every path from $u$
  to $v$ in $G[W_1]$, and by cutting all these paths $Z_1$ must cut $s$ from $t$ in $G_1$.
  Finally, no edge of $Z_1$ goes within a connected component of $G_1-Z_1$,
  since the only paths that are added to $G[W_1]$ go between vertices of the same component
  (either $R_s(Z)$ or $R_t(Z)$) in $G-Z$. 
  Hence with probability at least $e^{-\probfun(\lambda_{G_1}(s,t), k_1)}$
  (or $1$ if $\lambda_1^\ast = 0$)
  the pair $(A_1,\witnessflow_1)$ is compatible with $Z_1$. 
  All these arguments can also be made symmetrically to argue that 
  with probability at least $e^{-\probfun(\lambda_{G_2}(s,t), k_2)}$
  (or $1$ if $\lambda_2^\ast = 0$),
  $(A_2,\witnessflow_2)$ is compatible with $Z_2$. 
  
  By assumption, $A_{st}$ is compatible with $Z$. Also, if $\varphi=\varphi_Z$,
  then every path $P \in \witnessflow_C$ intersects $Z$ in exactly one edge and this edge belongs
  to $Z_{s,t}$.

  It remains to wrap up the proof of the 
  bound the probability that
  $(A = A_{st} \cup A_1 \cup A_2,\witnessflow = \witnessflow_1 \cup \witnessflow_2 \cup \witnessflow_C)$ is compatible with $Z$. 
  First, consider a corner case when $\lambda_{(s,s)} = \lambda$, that is,
  $\varphi_Z$ is constant at $(s,s)$. Then $k_1 \geq 1$, $k_2 \leq k-1$, $k_C = 0$,
  $\lambda_{G_2}(s,t) \geq \lambda_G(s,t)$,
  and the recursive call on $G_1$ is not made. Furthermore,
  once $\lambda_{(s,s)} = \lambda$ is guessed, $\varphi$ is defined deterministically.
  Hence, for sufficiently large constant $c_1$,
  $(A,\witnessflow)$ is compatible with $Z$ with probability at least
  $$k^{-13} e^{-\probfun(\lambda_{G_2}(s,t), k_2)} \geq k^{-13} e^{-\probfun(\lambda, k-1)} 
  \geq \exp\left(-16 \ln k - \ln 16 + c_2(1+\ln 4)\right) 16k^3 e^{-\probfun(\lambda, k)} \geq e^{-\probfun(\lambda, k)}.$$
  A symmetric argument holds if $\lambda_{(t,t)} = \lambda$, that is, $\varphi_Z$ is constant
  at $(t,t)$.

  For the general case, observe that even if the recursive call on $G_i$
  is not invoked due to $\lambda_i^* = 0$, then $k_i \geq 1$ and $\lambda_{G_i}(s,t) \leq k_i$
  so $e^{-\probfun(\lambda_{G_i}(s,t),k_i)} \leq 1$. Thus, we can use
  $e^{-\probfun(\lambda_{G_i}(s,t),k_i)}$ as a lower bound on the success probability 
  of the recursive call regardless of whether it was actually invoked.

  By the above discussion, the probability that $A$ is compatible with $Z$ is at least
  \begin{equation}\label{eq:fa:1}
  k^{-13} \cdot \exp\left(-\sum_{\gamma \in \Gamma} \lambda_\gamma \ln(\lambda/\lambda_\gamma)\right) \cdot e^{-\probfun(\lambda_{G_1}(s,t), k_1)} e^{-\probfun(\lambda_{G_2}(s,t), k_2)}.
  \end{equation}
  We start by analysing the second term of the above bound. 
  By the concavity of $\ln(\cdot)$, we have that
  \begin{equation}\label{eq:fa:2}
  \sum_{\gamma \in \Gamma_0} \lambda_\gamma \ln \lambda_\gamma \geq \lambda_0 \ln (\lambda_0 / |\Gamma_0) = \lambda_0 \ln \lambda_0 - \lambda_0 \ln 7.
  \end{equation}
  Hence,
  \begin{equation}\label{eq:fa:3}
  \sum_{\gamma \in \Gamma} \lambda_\gamma \ln (\lambda/\lambda_\gamma) \leq
   \lambda_{(s,s)} \ln(\lambda/\lambda_{(s,s)}) + \lambda_{(t,t)} \ln (\lambda/\lambda_{(t,t)})
  + \lambda_0 \ln(\lambda/\lambda_0) + \lambda_0 \ln 7.
  \end{equation}
  Denote $x_1 = k_1 - \lambda_0$, $x_2 = k_2 - \lambda_0$, $x_0 = \lambda_0$, and $x = x_1+x_2+2x_0 = k_1 + k_2$.
  By entropy maximization, 
  \begin{align}
  &\lambda_{(s,s)} \ln(\lambda/\lambda_{(s,s)}) + \lambda_{(t,t)} \ln (\lambda/\lambda_{(t,t)})
  + \lambda_0 \ln(\lambda/\lambda_0) \nonumber\\
    &\quad \leq \lambda_{(s,s)} \ln(x/x_1) + \lambda_{(t,t)} \ln(x/x_2) + \lambda_0 \ln(x/(2x_0)) \label{eq:fa:4}\\
    &\quad \leq x_1 \ln(x/x_1) + x_2 \ln(x/x_2) + 2x_0 \ln(x/(2x_0)).\nonumber
  \end{align}
  We also need the following observation:
  \begin{claim}\label{cl:fa-lambda}
  It holds that 
  $$k_1 - \lambda_{G_1}(s,t) + k_2 - \lambda_{G_2}(s,t) \leq k - \lambda_G(s,t) + \lambda_{(\bot,\bot)}.$$
  \end{claim}
  \begin{proof}
  From Proposition~\ref{prop:multi-recurse-ok}(4.), we infer that 
  $$|C| - k_C - \lambda_{(\bot,\bot)} \leq \lambda_{(s,s)} + \lambda_{(t,t)}.$$
  Since in $G_1$, an endpoint of every edge $e \in C$ with $\varphi_Z(e) = (t,t)$ is connected to $t$ with $k+1$ edges,
  we have 
  $$\lambda_{G_1}(s,t) \geq \lambda_{(t,t)}.$$
  Symmetrically, 
  $$\lambda_{G_2}(s,t) \geq \lambda_{(s,s)}.$$
  As $k_1 + k_2 + k_C = k$ and $\lambda_G(s,t) = |C|$, the claim follows.
  \end{proof}

  To wrap up the analysis, we need the following property of the $z \mapsto z \ln z$ function (for completeness, we provide a proof in Appendix~\ref{app:fa-calculus}):
  \begin{claim}\label{cl:fa-calculus}
  Let $f(z) = z \ln z$ for $z > 0$.
  For every constant $C_1 > 0$ there exists a constant $C_2 > 0$ such that
  for every $x_1,x_2,x_0 > 0$ it holds that
  $$C_2 f(x_1 + x_2 + 2x_0) + f(x_1) + f(x_2) + f(2x_0) \geq f(x_1 + x_2 + 2x_0) + C_2 f(x_1+x_0) + C_2 f(x_2 + x_0) + C_1 x_0.$$
  \end{claim}
  Claim~\ref{cl:fa-calculus} for $C_1 = c_2 + \ln 7$  implies an existence of $C_2 > 0$ (depending on $c_2$) such that
  \begin{equation}\label{eq:fa:5}
  x_1 \ln(x/x_1) + x_2 \ln(x/x_2) + 2x_0 \ln(x/(2x_0)) + x_0 (c_2 + \ln 7) \leq C_2 \left(x \ln x - k_1 \ln k_1 - k_2 \ln k_2\right).
  \end{equation}
  Using the definition of $\probfun(\cdot,\cdot)$, the fact that $\lambda_{(\bot,\bot)} \leq x_0$, and Claim~\ref{cl:fa-lambda}, we obtain that
  \begin{equation}
  \probfun(\lambda_G(s,t), k) \geq 
  \probfun(\lambda_{G_1}(s,t), k_1) + \probfun(\lambda_{G_2}(s,t), k_2) + c_1 \left(x \ln x - k_1 \ln k_1 - k_2 \ln k_2\right) + c_2 (1 + \ln k) - c_2 x_0.
  \label{eq:fa:6}
  \end{equation}
  Thus, we bound the negated exponent of the probability bound of~\eqref{eq:fa:1} as follows:
  \begin{align*}
  & 13 \ln k + \sum_{\gamma \in \Gamma} \lambda_\gamma \ln(\lambda/\lambda_\gamma) + \probfun(\lambda_{G_1}(s,t), k_1) + \probfun(\lambda_{G_2}(s,t), k_2)  & \mathrm{by\ \eqref{eq:fa:3}\ and\ \eqref{eq:fa:4}}\\
  &\quad \leq 13 \ln k + x_1 \ln(x/x_1) + x_2 \ln(x/x_2) + 2x_0 \ln(x/(2x_0)) + x_0 \ln 7 & \mathrm{by\ \eqref{eq:fa:6}}\\
  &\qquad + \probfun(\lambda_{G_1}(s,t), k_1) + \probfun(\lambda_{G_2}(s,t), k_2) &\\
  &\quad \leq 13 \ln k + x_1 \ln(x/x_1) + x_2 \ln(x/x_2) + 2x_0 \ln(x/(2x_0)) + x_0 \ln 7 &  \mathrm{by~\eqref{eq:fa:5},}\\
  &\qquad + \probfun(\lambda_G(s,t), k) + c_2 x_0 - c_2(1+\ln k) & c_2 \geq 16,\ c_1 \geq C_2\\
  &\qquad + c_1 (x_1 + x_0) \ln (x_1+x_0) + c_1 (x_2 + x_0) \ln (x_2+x_0) - c_1 x \ln x &\\
  &\quad \leq \probfun(\lambda_G(s,t), k) - 3 \ln k - \ln 32 &
  \end{align*}
  This finishes the proof of the lemma.
\end{proof}

\subsection{Efficient implementation and final proof}

Finally, we show that the algorithms can be implemented to run in time $k^{\Oh(1)}\Oh(m)$, i.e.,
linear time up to factors of $k$. We first show how to efficiently decompose $G$ into bundles.

\begin{lemma} \label{lemma:fast-bundles}
  Let $G=(V,E)$ be an undirected graph, $s, t \in V$, and let $\lambda^\ast \in \mathbb{Z}$ be given. 
  In time $\Oh(\lambda^\ast m)$ we can either show that $\lambda_G(s,t)>\lambda^\ast$ or 
  compute a max $(s,t)$-flow, blocks, and bundles in $G$.
\end{lemma}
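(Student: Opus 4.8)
The plan is to produce, in turn, (i) a maximum $(s,t)$-flow or a certificate that $\lambda_G(s,t)>\lambda^\ast$, (ii) the cuts $C_0,\dots,C_p$ and blocks $V_0,\dots,V_{p+1}$, and (iii) the bundles $W_0,\dots,W_{q+1}$, with the whole computation costing $\Oh(\lambda^\ast m)$. For the flow I would run at most $\lambda^\ast+1$ rounds of unit-capacity augmenting-path search (BFS in the residual graph), each costing $\Oh(m)$; if the $(\lambda^\ast+1)$-st augmenting path is found we have $\lambda^\ast+1$ edge-disjoint $(s,t)$-paths and report $\lambda_G(s,t)>\lambda^\ast$, otherwise we are left with a maximum flow $f$ of value $\lambda=\lambda_G(s,t)\le\lambda^\ast$, its residual graph $G_f$, and (by tracing out of $s$ using flow conservation, in $\Oh(m)$ time) a decomposition of $f$ into $\lambda$ edge-disjoint paths $P_1,\dots,P_\lambda$. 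Since $G$ is connected, $\lambda\ge 1$.

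For the blocks I would mimic the definition in Section~\ref{subsection:blocksandbundles}, using the residual-graph characterisation of closest minimum cuts from the proof of Proposition~\ref{proposition:computingblocks}: set $R_0$ to be the vertices reachable from $s$ in $G_f$, and iteratively $R_i:=R_{i-1}\cup\{v:\ v$ reachable in $G_f$ from $N[R_{i-1}]\setminus R_{i-1}\}$, stopping at the first round in which $t$ would be added and then taking $V_{p+1}:=V\setminus R_p$. Two observations justify that $R_i=R_s(C_i)$ and $V_i=R_i\setminus R_{i-1}$: first, for any $G_f$-closed set $R$ with $s\in R\not\ni t$ every $P_j$ crosses $\delta(R)$ exactly once and $\delta(R)$ has no further edges, so $\delta(R)$ is a minimum cut, and every vertex of $R$ is joined to $s$ inside $G[R]$ (a residual path witnessing membership in the closure stays inside $R$), hence $G[R_i]$ is connected and $R_s(\delta(R_i))=R_i$; second, for any minimum cut $X$ the set $R_s(X)$ is itself $G_f$-closed (otherwise $X$ is unsaturated or some $P_j$ crosses it twice), so among minimum cuts $X$ with $N[R_{i-1}]\subseteq R_s(X)$ the one closest to $s$ is exactly $\delta$ of the $G_f$-closure of $N[R_{i-1}]$, i.e.\ $\delta(R_i)$, and $t$ entering the closure is precisely the event that no such cut exists, so the process stops at the right place. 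To achieve the claimed running time I would (a) keep the ``visited'' mark defining $R$ persistent across rounds, so that all the residual searches together traverse each arc of $G_f$ at most once ($\Oh(m)$); and (b) find $C_{i-1}=\delta(R_{i-1})$ — whose far endpoints are the seeds $N[R_{i-1}]\setminus R_{i-1}$ — by scanning, when the block $V_{i-1}$ is first created, the edges incident to $V_{i-1}$ and keeping those whose other endpoint is not yet in $R$; here one uses the structural fact (immediate from the observations above) that every edge of $C_{i-1}$ has an endpoint in $V_{i-1}$, so every edge of $G$ is scanned $\Oh(1)$ times, costing $\Oh(m)$.

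For the bundles I would sweep the blocks left to right maintaining a union--find structure. To open a bundle at block $V_j$, insert its vertices as singletons, union along the edges inside $G[V_j]$, and test connectivity (component count $=1$); if connected, $V_j$ is a connected bundle on its own. Otherwise append $V_{j+1},V_{j+2},\dots$ in turn — inserting the new vertices and unioning along all incident edges whose other endpoint is already present — retesting connectivity each time, until the first block $V_m$ whose addition makes the graph connected; then the bundle is $V_j\cup\dots\cup V_{m-1}$ and the next bundle opens at $V_m$. That this reproduces the definition uses only the elementary observation behind Propositions~\ref{proposition:connectivityofblocks} and~\ref{proposition:connectivityofbundles}: once $G[V_j\cup\dots\cup V_m]$ is connected it stays connected when further blocks are added (the flow-path subpaths pull each component of a new block into the rest), so the maximal disconnected prefix $V_j\cup\dots\cup V_{j'}$ is well defined. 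Each block enters the structure at most twice and each edge is handled $\Oh(1)$ times, so this phase is $\Oh(n+m)$ up to an inverse-Ackermann factor, which is absorbed into $\Oh(\lambda^\ast m)$ since $\lambda^\ast\ge 1$ and $m\ge n-1$.

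Summing the three phases gives $\Oh(\lambda^\ast m)$. The part I expect to be the real obstacle is the block phase: there can be $\Omega(n)$ cuts $C_i$, so an $\Oh(m)$ residual search per cut is too slow, and the argument hinges on the amortisation — persistent visited marks and $\Oh(1)$-per-edge scans — making all the searches together cost $\Oh(m)$, together with the structural facts (every edge of $C_{i-1}$ touches $V_{i-1}$; $G[R_i]$ is connected; $R_s(X)$ is $G_f$-closed for minimum $X$) needed to see that this fast procedure really computes the decomposition of Section~\ref{subsection:blocksandbundles}. The flow phase is textbook and the bundle phase is a routine incremental-connectivity computation.
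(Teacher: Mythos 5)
Your proposal follows the same three-phase plan as the paper (at most $\lambda^\ast+1$ rounds of Ford--Fulkerson; amortised residual-reachability for the blocks; incremental connectivity for the bundles), and the block phase in particular is a sound and somewhat more explicit write-up of what the paper does by contracting $V'=V_0\cup\cdots\cup V_{i-1}$ into $s'$ and reorienting the $C_{i-1}$ arcs: your persistent ``visited'' marks, the observation that every edge of $C_{i-1}$ touches $V_{i-1}$, and the residual-closure characterisation of the closest cuts are all correct.

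The one place you fall slightly short of the stated bound is the bundle phase. Generic union--find over the vertex set costs $\Theta(m\,\alpha(n))$, and your attempt to absorb the $\alpha(n)$ factor into $\Oh(\lambda^\ast m)$ ``since $\lambda^\ast\ge 1$'' does not work: $\alpha(n)$ is unbounded while $\lambda^\ast$ can stay fixed, so $m\,\alpha(n)$ is not $\Oh(\lambda^\ast m)$. The paper sidesteps this by never doing connectivity at the vertex level. It maintains a partition $Q$ of the flow-path index set $[\lambda]$, adds one proxy vertex $s_B$ per part $B\in Q$ to the current block (wired to the endpoints of the flow paths $P_i$, $i\in B$, inside that block), computes connected components of this small graph by a single DFS, and reads off the coarsened partition $Q'$; each edge is scanned a constant number of times, so the phase is genuinely $\Oh(m)$. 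The correctness of this proxy-partition tracking rests on exactly the fact you already invoke from Proposition~\ref{proposition:connectivityofblocks} (every component of a block is incident to some of the $\lambda$ flow paths). Swapping your union--find for that scheme, or any other strictly linear incremental-connectivity argument exploiting the ``each block processed at most twice'' structure, closes the gap; the rest of your argument matches the paper.
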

\begin{proof}
  Since all edge capacities are unit we can compute a packing $\witnessflow$ of
  up to $\lambda^\ast+1$ $(s,t)$-paths in time $\Oh(\lambda^\ast m)$
  using Ford-Fulkerson, and if $|\witnessflow|=\lambda^\ast+1$ then we are done.
  Otherwise, assume that $\witnessflow$ is a max-flow of value $|\witnessflow|=\lambda$,
  and let $G'$ be the residual flow graph for $\witnessflow$ on $G$.
  We show how to decompose $G$ into blocks and bundles. 
  Let the \emph{mass} of a vertex set $S$ be $\sum_{v \in S} d(v)$. 

  First, observe that the closest min-cut $C_0$ can be found using a
  simple reachability query in the residual flow graph. Specifically, 
  the first block $V_0$ is precisely the set of vertices reachable 
  from $s$ in $G'$. Hence $V_0$ can be computed in time linear in its
  mass and $C_0=\delta(V_0)$. The sets $V_1$, \ldots, $V_{p+1}$ can be
  computed as follows. Let $i \in [p]$ and let $V'=V_0 \cup \ldots \cup V_{i-1}$. 
  Assume that all sets $V_{i'}$, $i'<i$ have been computed, in total time linear
  in the mass of $V'$. Hence the cut $C_{i-1} = \delta(V')$ is known as well. 
  Contract $V'$ into a single vertex $s'$ and reorient the arcs of $C_{i-1}$
  out from $s'$.  Then $V_i$ is precisely the set of vertices
  reachable from $s'$, and can be computed in time linear in its mass.
  Hence we can decompose $G$ into blocks.

  To further group the blocks into bundles, we only need to be able to
  test connectivity. Recall that the first bundle is just $W_0=V_0$. 
  Assume that we are computing the bundle starting with block $V_a$.
  Label the flow-paths $\witnessflow=\{P_1,\ldots,P_\lambda\}$,
  and initialize a partition $Q$ of $[\lambda]$ corresponding to
  the endpoints of $C_{a-1}$ in $V_a$ (i.e., for every vertex $v \in \lefti(W_a)$
  there is a part $B \in Q$ where $i \in B$ if and only if the edge
  $E(P_i) \cap C_{a-1}$ is incident with $v$). 
  In time linear in the mass of $V_a$, we can compute the connected
  components of $V_a$, and the corresponding partition $Q'$ of $\righti(W_a)$.
  Then, as long as the current sequence of blocks is not yet connected
  (i.e., as long as $Q' \neq \{[\lambda]\}$), repeat the process for every
  block $a' \geq a$:  Let $H=G[V_{a'}]$; for every block $B \in Q$,
  add a vertex $s_B$ to $H$, connected to the endpoints of $P_i$
  in $V_{a'}$ for every $i \in B$; and compute the connected components of $H$
  and the corresponding partition $Q''$ of $\righti(W_{a'})$.  
  Clearly this takes linear time and allows us to detect the first block $V_{b+1}$
  such that $V_a \cup \ldots \cup V_b$ is connected.  Then the next bundle
  contains blocks $V_a$ through $V_{b-1}$. 
\end{proof}

We can now prove Theorem~\ref{theorem:flow-augmentation}.
We refrain from optimizing the exponent of $k$ in the running time, since every plausible application of the theorem will have an overhead of $2^{\Oh(k \log k)}$ separate applications anyway.

\begin{proof}[Proof of Theorem~\ref{theorem:flow-augmentation}]
  To bound the running time, we make two notes.  First, in every call to one of the algorithms \algosample, \algoshort or \algoshortone, recursive calls are only made on disjoint vertex sets of the respective graph $G$ (excepting special vertices $s$, $t$).  Furthermore, on each call into \algosample we either have a decreased value of $k$ or an increased value of $\lambda_G(s,t)$, and there are only $O(k^2)$ possible combined values of $(k,\lambda)$. 
  Thus every vertex of $G$ except $s$, $t$ is processed in at most a polynomial number of process calls.

  Second, we note that the density of the graph $G+A$ does not increase too much beyond the density of $G$. Specifically, it is easy to verify that for every vertex $v \in V(G)$, at most $k^{\Oh(1)}$ new edges are added incident with $v$. 
  Hence it suffices that the local work in each procedure is linear-time in the size of the graph it is called on.
  For this, the only part that needs care is the computation of bundles, Lemma~\ref{lemma:fast-bundles}.
  Every other step is immediate. Hence the running time is bounded by some $k^{\Oh(1)}\Oh(m)$. 
  
The rest of the statement --- namely, the fact that in the output $(A,\witnessflow)$ 
we have $\lambda_{G+A}(s,t) \geq \lambda^\ast$, $\witnessflow$ is an $(s,t)$-flow of size $\lambda^\ast$,
and that with probability $2^{-\Oh(k \log k)}$ the pair $(A,\witnessflow)$ is compatible with $Z$,
for any eligible $(s,t)$-cut $Z$ of size at most $k$
--- now follows via joined induction from Lemma~\ref{lemma:outerloop:correct}, Lemma~\ref{lemma:short-one} and Lemma~\ref{lemma:short-correct}.
  \end{proof}

\bibliographystyle{alpha}
\bibliography{../references.bib}

\appendix

\section{Proof of Claim~\ref{cl:fa-calculus}}\label{app:fa-calculus}

The following proof of the following lemma, being essentially
a restatement of Claim~\ref{cl:fa-calculus},
is due to Piotr Nayar. We thank Piotr for allowing us to include here the proof.

  \begin{lemma}\label{lem:fa-calculus}
For every $c_1 \geq 0$ there exist $c_2 \geq 0$ such that for all $x_1, x_2, x_0 >0$ we have
\begin{align*}
	& c_2(x_1 + x_2 + 2 x_0)  \ln(x_1 + x_2 + 2 x_0)  + x_1 \ln (x_1) + x_2 \ln (x_2) + 2 x_0  \ln(x_0) & \\
	& \qquad  \geq  
(x_1 + x_2 + 2 x_0) \ln (x_1 + x_2 + 2 x_0)  + c_2 (x_1 + x_0) \ln (x_1 + x_0) + c_2 (x_2 + x_0) \ln (x_2 + x_0) + c_1 x_0.
\end{align*}
  \end{lemma}

\begin{proof}
The inequality is homogeneous under $(x_1, x_2, x_0) \to (\alpha x_1, \alpha x_2, \alpha x_0)$. We can therefore assume that $x_1+x_2+2x_0=1$.
We then introduce $\lambda \in (0,1)$ such that  $x_1+x_0=\lambda$ and $x_2+x_0=1-\lambda$. The inequality is invariant under $(x_1, x_2) \to (x_2, x_1)$, so we can assume that $\lambda \in(0, 1/2]$. Our goal is now to prove that
\[
	 x_1 \ln (x_1) + x_2 \ln (x_2) + 2 x_0  \ln(x_0) - c_1 x_0 \geq  + c_2\left[ \lambda \ln \lambda + (1-\lambda) \ln(1-\lambda) \right].
\]  
The function $z \mapsto z \ln z$ is decreasing on $(1,1/e)$ and increasing on $(1/e,1)$.
We consider two cases. \\

\noindent \textit{Case 1.} $ \lambda \in ((e-1)/(2e),1/2]$. In this case the left hand side if bounded from below by $-\frac{4}{e \ln 2}-c_1$ and $ \lambda \log \lambda + (1-\lambda) \log(1-\lambda)$ is bounded from above by some negative constant, so there is nothing to prove. \\

\noindent \textit{Case 2.} $\lambda \leq \frac{e-1}{2e}<\frac{1}{e}$. In this case if $0<x \leq \lambda$, then $x \log x \geq \lambda \log \lambda$ and thus $x_1 \log (x_1) + 2x_0 \log(x_0) \geq 3 \lambda \log \lambda$. Moreover, $x_2 \geq 1-2\lambda$ and therefore $x_2 \log(x_2) \geq  (1-2\lambda)\log(1-2\lambda)$, since $1-2\lambda \geq \frac{1}{e}$. After applying theses bounds we have to show that
\[
	 3 \lambda \log \lambda + (1-2\lambda) \log(1-2\lambda) - c_1 \lambda \geq  c_2\left[ \lambda \log \lambda + (1-\lambda) \log(1-\lambda) \right].
\]
In other words, we want to show that there exists $c_2$ such that
\[
	\frac{3 \lambda \log \lambda + (1-2\lambda) \log(1-2\lambda) - c_1 \lambda}{  \lambda \log \lambda + (1-\lambda) \log(1-\lambda) } \leq c_2, \qquad 0<\lambda \leq \frac{e-1}{2e}.
\]
This can be verified by checking that the limit $\lambda \to 0^+$ is finite and thus it is enough to take $c_2$ to be the supremum of the left hand side.
\end{proof}
\end{document}